\numberwithin{equation}{section}
\newcommand{\beq}{\begin{equation}}
\newcommand{\eeq}{\end{equation}}
\newcommand{\bea}{\begin{eqnarray}}
\newcommand{\eea}{\end{eqnarray}}
\newcommand{\nn}{\nonumber}
\newcommand\noi{\noindent}
\newcommand{\tbf}{\textbf}
\newcommand{\rd}{\mathrm{d}}
\newcommand{\bk}{\begin{cases}}
\newcommand{\ek}{\end{cases}}
\newcommand{\bepm}{\begin{pmatrix} }
\newcommand{\epm}{\end{pmatrix}}
\newcommand{\bs}{\boldsymbol}
\newcommand{\f}{\frac}
\newtheorem{definition}{Definition}
\newtheorem{proposition}{Proposition}
\newtheorem{theorem}{Theorem}
\newtheorem{corollary}{Corollary}
\newtheorem{lemma}{Lemma}
\theoremstyle{definition}
\newtheorem{remark}{\textbf{Remark}}
\begin{document}

\title[Partial separability and $\omega \mathscr{H}$ manifolds]{Partial separability and symplectic-Haantjes manifolds}

\author{Daniel Reyes}
\address{Departamento de F\'{\i}sica Te\'{o}rica, Facultad de Ciencias F\'{\i}sicas, Universidad
Complutense de Madrid, 28040 -- Madrid, Spain \\ and Instituto de Ciencias Matem\'aticas, C/ Nicol\'as Cabrera, No 13--15, 28049 Madrid, Spain}
\email{danreyes@ucm.es, daniel.reyes@icmat.es}
\author{Piergiulio Tempesta}
\address{Departamento de F\'{\i}sica Te\'{o}rica, Facultad de Ciencias F\'{\i}sicas, Universidad
Complutense de Madrid, 28040 -- Madrid, Spain \\  and Instituto de Ciencias Matem\'aticas, C/ Nicol\'as Cabrera, No 13--15, 28049 Madrid, Spain}
\email{piergiulio.tempesta@icmat.es, ptempest@ucm.es}
\author{Giorgio Tondo}
\address{Dipartimento di Matematica, Informatica e Geoscienze, Universit\`a  degli Studi di Trieste,
piaz.le Europa 1, I--34127 Trieste, Italy (retired)}
\email{tondo@units.it}

\date{8th July, 2024}

\begin{abstract}
A theory of partial separability for classical Hamiltonian systems is proposed in the context of Haantjes geometry.

As a general result, we show that the knowledge of a non-semisimple symplectic-Haantjes manifold for a given Hamiltonian system is sufficient to construct sets of coordinates (called Darboux-Haantjes coordinates) that allow both the partial separability of the associated Hamilton-Jacobi equations and the block-diagonalization of the operators of the corresponding Haantjes algebra. 

We also introduce a novel class of Hamiltonian systems, characterized by the existence of a generalized St\"ackel matrix, which by construction are partially separable. They widely generalize the known families of partially separable Hamiltonian systems. The new systems can be described in terms of semisimple but non-maximal-rank symplectic-Haantjes manifolds. 
\end{abstract}

\maketitle

\tableofcontents

\section{Introduction: separable systems}

\subsection{Background} One of the most fundamental problems of classical mechanics is the construction of separating variables for the Hamilton-Jacobi (HJ) equations associated with a Hamiltonian integrable system. This subject has a long history and still represents a very active research area. Starting from the seminal works by Liouville, Jacobi, St\"{a}ckel, Levi-Civita, Eisenhart, Arnold, etc., the problem of separability has been formulated within increasingly sophisticated theoretical frameworks, which in turn have provided us with a novel insight into the geometry of integrable systems.

As is well known, given a Hamiltonian function $H: T^*Q\rightarrow \mathbb{R}$ (where $Q$ is the configuration manifold), a complete integral of its stationary HJ equations is defined as a family of solutions
\beq 
W=W(q^1,\ldots, q^n; h_1,\ldots,h_n)
\eeq
depending smoothly on the parameters $(h_1,\ldots,h_n)$, and satisfying the condition 
\beq \label{eq:CHJ}
\det \bigg[ \frac{\partial^2 W}{\partial q^i\partial h_j} \bigg]\neq 0 \ .
\eeq 
The function $W$ is also said to be a Jacobi characteristic function. 
From a geometric point of view, $W$ is the generating function of a Lagrangian foliation of $T^*Q$, both transversal to the fibers and compatible with $H$. Such a foliation is  explicitly defined by the $n$ parametric equations for the momenta

\begin{equation} \label{eq:pLF}
p_i=\frac{\partial W}{ \partial q^i}(\boldsymbol{q}; h_1,\ldots,h_n), \qquad\qquad i=1,\ldots,n \ ,
\end{equation}
where $\bs{q}=(q^{1},\ldots,q^{n})$. Thanks to conditions \eqref{eq:CHJ}, such equations can be solved w.r.t. the constants $(h_1,\ldots,h_n)$, yielding: 
\begin{equation} \label{eq:Hi}
h_i=H_i(\boldsymbol{q},\boldsymbol{p}),
\end{equation}
with $\bs{p}=(p_{1},\ldots,p_{n})$. These relations provide us with $n$ independent integrals of motion in involution both with each other and with $H$. In particular, these integrals are  \textit{vertically independent}, namely, they satisfy the condition
\begin{equation} \label{eq:Hiv}
\det\left[\frac{\partial{H_i}}{\partial p_k}\right] \neq 0 \ .
\end{equation}

 Thus, a complete integral $W$ for $H$ determines a completely integrable Hamiltonian system $(H_1, \ldots, H_n)$ in the sense of Liouville and  Arnold and a common solution of the stationary HJ equations associated with the Hamiltonians of the system. Conversely, a family of independent functions $(H_1, \ldots, H_n)$ in involution and vertically independent determines a complete integral of the HJ equations, obtained by solving Eqs. \eqref{eq:Hi}  w.r.t. the momenta $(p_1,\ldots,p_n)$.
\par
A Hamiltonian system $H$  is said to be \textit{separable} if there exists a system of canonical coordinates $(q^1,\ldots, q^n,p_1,\ldots,p_n)$ and a solution of the HJ equations that takes the additively separated form \begin{equation} \label{eq:W}
W= \sum_{k=1}^n W_k(q^k; h_1,\ldots,h_n) =\sum_{k=1}^n \int^{q^k} {\frac{\partial W_k}{\partial \xi^k}(\xi^k; h_1, \ldots, h_n)d\xi^k}.
\end{equation}
Such a condition is equivalent to the existence of 
$n$ independent smooth functions $(H_1,\ldots, H_n)$ - on which $H$ is functionally dependent, and $n$ equations, called the  Jacobi-Sklyanin separation equations (SE) for $(H_1,\ldots, H_n)$, having the form
\begin{equation} \label{eq:Sk}
\phi_i(q^i,p_i; H_1,\ldots,H_n)=0 \, , \qquad    {\rm  det}\left[\frac{\partial \phi_i}{\partial H_j }\right] \neq 0 \, , \quad i=1,\ldots,n \, .
 \end{equation} 
When $H_1 = h_1, \ldots, H_n=h_n$, the SE represent implicitly the  Lagrangian transversal foliation \eqref{eq:pLF}.  Moreover, the SE allow us to construct a separated solution to the HJ equations. 
In fact, by solving Eqs. (\ref {eq:Sk})
with respect to the momenta $p_k$, one gets the Lagrangian foliation in the form  of a set of separated first-order ODEs:
\begin{equation}
p_k=\frac{\partial W_k}{\partial q^k}(q^k; h_1,\ldots,h_n) \, , \qquad k=1,\ldots,n \, ,
\end{equation}
which implies that $W$ acquires the additively separated expression \eqref{eq:W}. Finally, when the SE are \textit{affine} in the functions  $(H_1,\ldots, H_n)$ and the Hamiltonian function $H=H_1$ is quadratic and orthogonal in the momenta 
\beq \label{eq:1.9}
H=\frac{1}{2} \sum_{i=1}^n g^{ii}(\boldsymbol{q}) p_i^2 +V(\boldsymbol{q}) \, ,
\eeq
 one recovers
the standard theory of separability due to St\"ackel \cite{StCRAS1893,Pere}. We recall that St\"ackel's approach provides us with a characterization of \textit{totally  separable systems} by means of an invertible \textit{St\"ackel matrix} $\bs{S}$ and a \textit{St\"ackel vector} $|\bs{U}\rangle$, of the form
\beq \label{eq:SM}
\bs{S}= 
\begin{bmatrix}
S_{11}(q^1)& \cdots & S_{1n}(q^1) \\ 
\vdots  &\ddots & \vdots \\
S_{n1}(q^n) & \cdots & S_{nn}(q^n)
\end{bmatrix} \ ,
\qquad\qquad 
|\bs{U}\rangle= 
\begin{bmatrix}
U_{1}(q^1) \\ 
\vdots  \\
U_n(q^n) 
\end{bmatrix} \ ,
\eeq 
where $S_{ij}$ and $U_i$ are arbitrary functions of their separated arguments that satisfy the conditions
\beq
g^{kk}(\boldsymbol{q})=[\bs{S}^{-1}]_{1k} \ , \qquad\qquad V(\bs{q})=\sum_{i=1}^n g^{ii}(\boldsymbol{q})\, U_i(q^i) \ .
\eeq
 Thus, a completely integrable Hamiltonian system of the form \eqref{eq:1.9} is totally separable in a coordinate chart if and only if the following SE 
\beq
\f{1}{2}\bigg(\frac{d W_i}{d q^i} \bigg)^2+U_i(q^i) - \sum_{j=1}^{n} S_{ij}(q^i) H_j= 0 
\eeq
hold. A crucial result in the subsequent analysis is  Benenti's Theorem \cite{Ben80,BeNM89}, stating that
a family of Hamiltonian functions $(H_1,\ldots, H_n)$ are totally separable in a set of
canonical coordinates $(\boldsymbol{q},\boldsymbol{p})$ if and only if  they satisfy the relations
\begin{equation} \label{eq:SI}
\{H_i,H_j\}\lvert_k=\frac{\partial H_i}{\partial q^k}
\frac{\partial H_j}{\partial p_k}-\frac{\partial H_i}{\partial p_k}
\frac{\partial H_j}{\partial q^k}=0 \, , \quad 1\le k \le n
\end{equation}
where no summation over $k$ is understood. In this case, we shall say that the Hamiltonians $(H_1,\ldots, H_n)$ are in \textit{total separable involution} or $\mathcal{T}$-involution (this notion was originally named ``separable involution'' or $\mathcal{S}$-involution in \cite{BeNM89}).

In \cite{AKN1997}, Arnold, Kozlov and Neishtadt (AKN) proposed a generalization of St\"ackel systems by considering the case of arbitrary, not necessarily quadratic, Hamiltonian functions defined in a symplectic manifold. Precisely, given a St\"ackel matrix $\bs{S}$, their construction consists in the family of Hamiltonians
\begin{equation}
H_{j} = \sum_{k=1}^{n} \dfrac{\tilde{S}_{jk}}{\det \bs{S}} \, f_{k}(q^k,p_k), \qquad k=1,\ldots, n
\label{AKN}
\end{equation}
where $\tilde{S}_{jk}$ is the cofactor associated to the element $S_{kj}$, and $f_{k}$ are arbitrary functions. It can be shown that the functions $H_j$, defined by Eqs. \eqref{AKN}, are in involution with each other; consequently, they represent a totally separable Hamiltonian system. A new, recent approach to totally separable systems in the framework of St\"ackel geometry has been developed in \cite{MPhysD2023}.

In many respects, the \textit{bi-Hamiltonian} approach represents one of the most relevant geometric frameworks for the study of Hamiltonian systems. A bi-Hamiltonian manifold is a differentiable manifold endowed with a pencil of Poisson structures \cite{Magri78}. In these settings, the theory of \textit{$\omega N$ manifolds}, introduced in \cite{MMQUAD1984,FPMPAG2003}, is especially relevant in the construction of separation variables for the Hamilton-Jacobi equations associated with a Hamiltonian integrable system. Such manifolds carry on a symplectic structure $\omega$ (assigned as the inverse of a non-degenerate Poisson bivector) and a compatible $(1,1)$-tensor field $\boldsymbol{N}$, also called the recursion or hereditary operator, whose  Nijenhuis torsion \cite{Nij1951} vanishes. This theory has been applied in  \cite{TTPRE2012} to the study of multiseparable systems.

A Haantjes operator is a  $(1,1)$-tensor field whose Haantjes torsion vanishes \cite{Haa1955}. The notion of \textit{Haantjes manifolds} has been introduced by F. Magri in \cite{MGall13} and further studied in \cite{MGall15} and \cite{MGall17}, where the geometry of Haantjes operators has been related with classical mechanics. Also, Haantjes geometry has been studied in connection with the theory of hydrodynamic-type systems \cite{FeKhu,FP2022}. 

\par

Inspired by Magri's construction, in \cite{TT2022AMPA} the family of \textit{symplectic-Haantjes}  (or $\omega \mathscr{H}$) manifolds was introduced. They are symplectic manifolds endowed with a compatible Haantjes algebra  \cite{TT2021JGP}, namely a set of Haantjes $(1,1)$-tensor fields closed under linear $C^{\infty}(M)$-combinations and products. In \cite{KS2017}, a comparison between Magri's Haantjes manifolds and $\omega \mathscr{H}$ manifolds has been made by Kosmann-Schwarzbach.

We recall here the main properties of $\omega\mathscr{H}$ manifolds.
Under suitable hypotheses,  given a Hamiltonian integrable system there exists an $\omega\mathscr{H}$ manifold associated (as stated in the Liouville-Haantjes theorem  \cite{TT2022AMPA}). Many examples of these structures have been exhibited in \cite{TT2016SIGMA}, \cite{TT2022AMPA}, and \cite{RTT2022CNS}. The classical theory of separation of variables can also be formulated in the context of Haantjes geometry. 
The Jacobi-Haantjes theorem, stated in \cite{RTT2022CNS}, establishes under mild hypotheses the equivalence between the notions of total separability of a Hamiltonian system and the existence of a semisimple, Abelian $\omega \mathscr{H}$ manifold of \textit{maximal rank}. Indeed, in the semisimple case, there exists a special class of canonical coordinates, the \textit{Darboux-Haantjes} coordinates, which represent separation variables for the HJ equations and diagonalize simultaneously all the operators of the Haantjes algebra associated with a totally separable model \cite{RTT2022CNS}.

The related notion of \textit{partial separability} refers to the existence of a set of coordinates allowing one to separate the HJ equations into a system of both ODEs and PDEs, the latter ones involving just a subset of independent variables. This relevant subject, in contrast to the theory of total separability, has received so far much less attention. The first study on partial separability dates back to 1896 and is due to G. Di Pirro \cite{DPAMPA1896}, who considered a generalization of the notion of St\"ackel matrix. In his construction, it was assumed that a single row of the $(n-r)\times (n-r)$ generalized St\"ackel matrix, say the first one, depends on a \textit{block of coordinates} $(q^1,\ldots,q^r)$, whereas the remaining rows depend each on one coordinate only, picked up from the set $(q^{r+1},\ldots,q^{n})$. In this way, Di Pirro defined new Hamiltonian systems admitting $n-r$ first integrals, which are homogeneous,   quadratic in the momenta, and orthogonally separated.

Taking into account these results, in 1897, St\"ackel generalized his original construction \cite{StCRAS1893} by proposing a block-St\"ackel invertible matrix, in which more than one row may depend on a subset of coordinates (which are different for each row). The number of quadratic integrals so obtained coincides with the number of subsets of coordinates.
Further results have been contributed by Havas \cite{HavasJMP1976}, who considered for the first time partial separation of the HJ equation in non-orthogonal coordinate systems. Subsequently,  Boyer et al. obtained in \cite{BKMCMP1978} a complete list of all partial separable coordinate systems for the Helmholtz and the HJ equations in non-orthogonal coordinates for the case of four-dimensional Riemannian spaces. Recently, in \cite{CR2019SIGMA} the notion of partial separability has been discussed in the context of Riemannian geometry; in \cite{DKNSIGMA2019}, partially separated Painlev\'e metrics and their conformal deformations have been related to the classical inverse Calder\'on problem.

\subsection{Main results}

The purpose of this article is to contribute to the study of the geometry of separable systems by proving that symplectic-Haantjes manifolds have a relevant role in the theory of partial separability of the HJ equations associated with a Hamiltonian system.

From a conceptual point of view, we shall generalize the Jacobi-Haantjes theorem of \cite{RTT2022CNS} in two different directions. Since for a Hamiltonian system, the existence of semisimple and maximal-rank (Abelian) Haantjes algebras of operators is equivalent to the total separability of the associated HJ equations, in order to characterize partially separable systems we renounce either the hypothesis of semisimplicity or that of maximal rank.

In the first scenario, we deal with a complete family of Hamiltonian functions in involution  $(H_1, \ldots, H_n)$ for which, in a given set of coordinates, a \textit{partial involutivity condition} ($\mathcal{P}$-involution) holds. This means that in these coordinates, the Poisson bracket between two integrals splits into several equations, which generalize Benenti's conditions \eqref{eq:SI} (see Definition \ref{def:BP} below). As we shall prove in Theorem \ref{theo:5}, this, in turn, implies that the Hamiltonian family is partially separable.

Our main result is Theorem \ref{th:Nonsemi}. Given the family $(H_1, \ldots, H_n)$, we shall prove that if one can associate it with an Abelian algebra of \textit{non-semisimple Haantjes operators of maximal rank}, then the Darboux-Haantjes coordinates (allowing us to write all the operators simultaneously in a block-diagonal form) also provide us with a set of partially separable coordinates for the associated HJ equations. 
This result paves the way for the systematic application of non-semisimple Haantjes algebras in classical mechanics.

In the second scenario, we shall deal with \textit{non-maximal-rank}, but semisimple Haantjes algebras. We propose two results. First, we construct a novel class of partially separated systems, which we obtain by further generalizing the AKN construction to the case of block-St\"ackel matrices. We recover as particular instances of our construction all the partially separable systems that have been proposed in the literature so far. Second, we interpret our novel class of Hamiltonian systems in the framework of Haantjes geometry. Precisely, in Theorem \ref{Th4}, we prove the existence of non-maximal-rank, semisimple $\omega \mathscr{H}$ manifolds for these systems, and we provide the explicit expression of their Haantjes algebras.

As a general comment, we wish to clarify that partial and total separability are not exclusive notions. As a matter of fact, a given integrable Hamiltonian system may admit more than one $\omega \mathscr{H}$ structure, which leads either to total or to partial separability depending on whether the structure is semisimple or non-semisimple. 
From a computational point of view, partial separability can also be regarded as a useful intermediate step toward the construction of total-separation variables.

Another relevant point is that the new systems coming from St\"ackel-type matrices represent an (a priori) \textit{incomplete} family of commuting Hamiltonians: we have as many Hamiltonians as the number of sets of variables involved in the partial separation process. We mention that the original constructions of partially separable systems by Di Pirro \cite{DPAMPA1896}, St\"ackel \cite{StAMPA1897} and Painlev\'e \cite{DKNSIGMA2019} lead as well to incomplete sets of Hamiltonian systems. 

However, we prove in Theorem \ref{th:Sinvol} that for our St\"ackel-type Hamiltonian systems, still the Benenti conditions of $\mathcal{T}$-involution \eqref{eq:SI} hold in the set of coordinates in which the family is defined.

To summarize, Haantjes geometry provides us with two general classes of partially separated Hamiltonian systems. The first class possesses non-semisimple, maximal rank Haantjes algebras associated. These Hamiltonian systems are completely integrable and are in $\mathcal{P}$-involution. The associated HJ equations are partially separable in Darboux-Haantjes coordinates. 

The second class involves ``incomplete'' families of Hamiltonian systems: the number of integrals explicitly known is less than $n$, half of the dimension of the phase space. These systems possess semisimple, non-maximal-rank Haantjes algebras, are in standard $\mathcal{T}$-involution, and, once again, the associated HJ equations are partially separable in Darboux-Haantjes coordinates. 
Our main conclusion is that symplectic-Haantjes manifolds represent a natural geometric setting for studying partially separated Hamiltonian systems.

The structure of the article is the following. In Section \ref{sec:2}, we review standard results concerning Haantjes geometry and the theory of symplectic-Haantjes manifolds. In Section \ref{sec:3}, we define the notion of partial separability and a generalization of Benenti conditions, namely the concept of $\mathcal{P}$-involution.
 In Section \ref{sec:4}, we address the problem of partial separability for completely integrable models; we shall show that the Darboux-Haantjes coordinates which block-diagonalize the non-semisimple Haantjes algebra associated with a given integrable model also guarantee partial separability of the associated HJ equations. In Section \ref{sec:5}, we define a new class of generalized St\"ackel systems and show that they admit semisimple, non-maximal-rank symplectic-Haantjes manifolds. In Section \ref{sec:6}, two examples illustrating the application of the theory proposed are presented. Some open problems and future perspectives are outlined in the final Section \ref{sec:7}.

\section{Preliminaries on the Haantjes geometry} \label{sec:2}

\subsection{Basic notions}
We shall start by reviewing the basic definitions concerning the geometry of the standard Nijenhuis and Haantjes torsions, following the original, pioneering papers \cite{Haa1955,Nij1951}  and the related ones
\cite{Nij1955,FN1956}.

Let $M$ be a differentiable manifold of dimension $n$, $\mathfrak{X}(M)$ be the Lie algebra of vector fields on $M$, $\mathfrak{X}^{*}(M)$ the module of one-forms on $M$ and  $\boldsymbol{A}:\mathfrak{X}(M)\rightarrow \mathfrak{X}(M)$ be a  $(1,1)$-tensor field (i.e. a linear operator field). Henceforth, all tensor and vector fields will be considered to be smooth.
\begin{definition} \label{def:N}
The
 \textit{Nijenhuis torsion} of $\boldsymbol{A}$ is the vector-valued $2$-form defined by
\begin{equation} \label{eq:Ntorsion}
\tau_ {\boldsymbol{A}} (X,Y):=\boldsymbol{A}^2[X,Y] +[\boldsymbol{A}X,\boldsymbol{A}Y]-\boldsymbol{A}\Big([X,\boldsymbol{A}Y]+[\boldsymbol{A}X,Y]\Big),
\end{equation}
where $X,Y \in \mathfrak{X}(M)$ and $[ \, \cdot \, , \cdot \, ]$ denotes the commutator of two vector fields.
\end{definition}
\begin{definition} \label{def:H}
 \noi The \textit{Haantjes torsion} of $\boldsymbol{A}$ is the vector-valued $2$-form defined by
\begin{equation} \label{eq:Haan}
\mathcal{H}_{\boldsymbol{A}}(X,Y):=\boldsymbol{A}^2 \tau_{\boldsymbol{A}}(X,Y)+\tau_{\boldsymbol{A}}(\boldsymbol{A}X,\boldsymbol{A}Y)-\boldsymbol{A}\Big(\tau_{\boldsymbol{A}}(X,\boldsymbol{A}Y)+\tau_{\boldsymbol{A}}(\boldsymbol{A}X,Y)\Big).
\end{equation}
\end{definition}

\begin{definition}
A Haantjes (Nijenhuis) operator is a (1,1)-tensor field whose Haantjes (Nijenhuis) torsion identically vanishes.
\end{definition}

One of the simplest and most relevant cases of Haantjes operators consists of the family of tensor fields $\boldsymbol{A}$ which take a diagonal form in suitable coordinates $\boldsymbol{x}=(x^1,\ldots,x^n)$:
\begin{equation}
\boldsymbol{A}(\boldsymbol{x})=\sum _{i=1}^n \lambda_{i }(\boldsymbol{x}) \frac{\partial}{\partial x^i}\otimes \rd x^i \ . \label{eq:Ldiagonal}
 \end{equation}
Here $\lambda_{i }(\boldsymbol{x}):=\lambda^{i}_{i}(\boldsymbol{x})$ are the eigenvalue fields of $\boldsymbol{A}$ and  $\left(\frac{\partial}{\partial x^1},\ldots, \frac{\partial}{\partial x^n}\right) $ are the fields forming the \textit{natural frame} associated with the local chart $(x^{1},\ldots,x^{n})$. As is well known, the Haantjes torsion of a diagonal operator of the form \eqref{eq:Ldiagonal} vanishes. Furthermore, if $\lambda_{i }(\boldsymbol{x}) \equiv \lambda_{i }(x^{i})$, then $\boldsymbol{A}$ is also a Nijenhuis operator.

\vspace{3mm}

In the following, $\boldsymbol{I}: \mathfrak{X}(M)\rightarrow \mathfrak{X}(M)$ will denote the identity operator.

\begin{proposition}  \label{pr:HpolL} \cite{BogIzv2004}.
Let  $\boldsymbol{A}$ be a (1,1)-tensor field. Then, for any polynomial in $\boldsymbol{A}$ with coefficients $c_{k} \in C^{\infty} (M)$, the associated Haantjes torsion vanishes, i.e.
\begin{equation} \label{eq:HpolL}
\mathcal{H}_{\boldsymbol{A}}(X,Y) = 0 \Longrightarrow \mathcal{H}_{\sum_{k} c_{k} (\boldsymbol{x}) \boldsymbol{A}^{k}}(X,Y)=0.
\end{equation}
\end{proposition}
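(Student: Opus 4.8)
The plan is to translate the vanishing of the Haantjes torsion into an intrinsic, coordinate-free condition on the characteristic (generalized eigen-)distributions of the operator, and then to observe that a polynomial $P(\boldsymbol{A})=\sum_k c_k\boldsymbol{A}^k$ inherits exactly this data from $\boldsymbol{A}$. Since $\mathcal{H}_{P(\boldsymbol{A})}$ is a smooth tensor field, it suffices to prove its vanishing on the open dense subset $M_0\subset M$ where both $\boldsymbol{A}$ and $P(\boldsymbol{A})$ have constant algebraic type; the result on all of $M$ then follows by continuity. So I would fix a point $p\in M_0$ and work first at the level of linear algebra on $T_pM$.

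The key observation is that, because the coefficients $c_k$ are scalar functions, $P(\boldsymbol{A})_p=\sum_k c_k(p)\,\boldsymbol{A}_p^{\,k}$ is an ordinary polynomial, with real scalar coefficients, in the single endomorphism $\boldsymbol{A}_p$. Hence $P(\boldsymbol{A})_p$ preserves each generalized eigenspace $\mathcal{D}_i(p)=\ker(\boldsymbol{A}_p-\lambda_i(p)\boldsymbol{I})^{\rho_i}$ of $\boldsymbol{A}_p$, acts on it with the single eigenvalue $P(\lambda_i(p))=\sum_k c_k(p)\lambda_i(p)^k$, and has there a nilpotent part that is itself a polynomial in $(\boldsymbol{A}-\lambda_i\boldsymbol{I})\vert_{\mathcal{D}_i}$. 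Consequently every characteristic distribution of $P(\boldsymbol{A})$ is a direct sum $\bigoplus_{i\in S}\mathcal{D}_i$ of characteristic distributions of $\boldsymbol{A}$ (grouping those indices $i$ for which $P(\lambda_i)$ takes a common value), and its flag of kernels $\ker(P(\boldsymbol{A})-P(\lambda_i)\boldsymbol{I})^{j}$ is built from the flag $\ker(\boldsymbol{A}-\lambda_i\boldsymbol{I})^{j}$. I would then invoke the geometric characterization of Haantjes operators: $\mathcal{H}_{\boldsymbol{A}}=0$ is equivalent to the integrability of all the distributions obtained as direct sums of the characteristic kernel-distributions of $\boldsymbol{A}$. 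Granting this, the distributions attached to $P(\boldsymbol{A})$ are among those attached to $\boldsymbol{A}$, hence integrable, so $P(\boldsymbol{A})$ satisfies the same criterion and $\mathcal{H}_{P(\boldsymbol{A})}=0$.

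For the semisimple case this is entirely clean: $\boldsymbol{A}$ and $P(\boldsymbol{A})$ are simultaneously diagonalizable with the same eigen-distributions (merely regrouped according to coincidences among the $P(\lambda_i)$), Haantjes' integrability criterion for direct sums of eigen-distributions transfers verbatim, and we are done. I expect the main obstacle to lie entirely in the non-semisimple regime, where one must control the nilpotent (Jordan) part: one has to show that the integrability conditions governing the kernel flags $\ker(\boldsymbol{A}-\lambda_i\boldsymbol{I})^{j}$ --- and not merely the top-level generalized eigenspaces --- are genuinely inherited when $\boldsymbol{A}$ is replaced by $P(\boldsymbol{A})$, including at the degenerate values where $P'(\lambda_i)$ vanishes and the flag coarsens. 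An alternative, fully tensorial route would avoid the eigenspace analysis by first establishing the scaling identity $\mathcal{H}_{f\boldsymbol{A}+g\boldsymbol{I}}=f^{4}\,\mathcal{H}_{\boldsymbol{A}}$ for arbitrary $f,g\in C^{\infty}(M)$ (which already disposes of the degree-one case and exhibits the affine invariance of the Haantjes condition) together with a recursion expressing $\tau_{\boldsymbol{A}^m}$ and $\mathcal{H}_{\boldsymbol{A}^m}$ through $\tau_{\boldsymbol{A}}$ and $\mathcal{H}_{\boldsymbol{A}}$; here the crux is to show that the cross-terms produced by the Haantjes torsion of the full sum $\sum_k c_k\boldsymbol{A}^k$ cancel, using that all summands are powers of one and the same operator together with the Cayley--Hamilton relation. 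Either way, the heart of the matter is the non-additive nature of $\mathcal{H}$, which is tamed precisely because the monomials share the single recursion operator $\boldsymbol{A}$.
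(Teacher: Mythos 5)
The paper offers no proof of this Proposition: it is quoted from Bogoyavlenskij \cite{BogIzv2004}, where it is established by purely algebraic identities expressing $\mathcal{H}_{P(\boldsymbol{A})}(X,Y)$ as a $C^{\infty}(M)$-linear combination of terms of the form $\boldsymbol{A}^{k}\,\mathcal{H}_{\boldsymbol{A}}(\boldsymbol{A}^{i}X,\boldsymbol{A}^{j}Y)$. Measured against that, your second (``fully tensorial'') route is the right one in spirit, but you state its crux --- the cancellation of the cross-terms in $\mathcal{H}_{\sum_k c_k\boldsymbol{A}^k}$ --- as something still to be shown, so it is a plan rather than a proof.

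Your primary route has a genuine gap that cannot be patched as stated. You invoke, as an equivalence, ``$\mathcal{H}_{\boldsymbol{A}}=0$ iff all direct sums of the characteristic kernel-distributions of $\boldsymbol{A}$ are integrable.'' Haantjes' theorem gives this equivalence only for operators that are pointwise diagonalizable with real eigenvalues. In the non-semisimple case the implication you need --- integrability of the generalized eigen-distributions and their flags implies vanishing of the Haantjes torsion --- is false: an operator with a single eigenvalue, $\boldsymbol{A}=\lambda\boldsymbol{I}+\boldsymbol{N}$ with $\boldsymbol{N}$ nilpotent of index $\geq 3$, has the whole tangent bundle as its only generalized eigen-distribution (trivially integrable, together with every flag built from constant-rank kernels adapted to coordinates), yet $\mathcal{H}_{\boldsymbol{A}}$ need not vanish. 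So even if you verify that every characteristic distribution and kernel flag of $P(\boldsymbol{A})$ is assembled from those of $\boldsymbol{A}$ (which is correct linear algebra), this does not yield $\mathcal{H}_{P(\boldsymbol{A})}=0$; the Haantjes condition is strictly stronger than the integrability data you are transporting. Two further points: the statement is a pointwise algebraic identity valid for arbitrary $(1,1)$-tensor fields, so a proof requiring an open dense set of constant algebraic type and real spectral decompositions is already narrower in scope than the claim (complex eigenvalues are never addressed); and your own caveat about the degenerate values where $P'(\lambda_i)=0$ correctly locates a difficulty but does not resolve it. The reliable way forward is the algebraic one: prove $\mathcal{H}_{f\boldsymbol{A}+g\boldsymbol{I}}=f^{4}\mathcal{H}_{\boldsymbol{A}}$ and the product/power identities directly from the definitions \eqref{eq:Ntorsion}--\eqref{eq:Haan}, which is exactly what the cited reference does.
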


As discussed in \cite{TT2021JGP}, by virtue of Proposition \ref{pr:HpolL}, a Haantjes operator generates a cyclic algebra of Haantjes operators over the ring of smooth functions on $M$. 

Besides,  a novel class of generalized Nijenhuis operators of level $l$ has been introduced in \cite{KS2017,TT2021JGP} and further discussed in \cite{TT2022CMP,TT2022CM}. When $l=1$, they coincide with the standard Nijenhuis operators; for $l=2$, they are Haantjes operators. For $l>2$, we have genuinely new operators. In \cite{RTT2023JNS}, a generalization of Proposition \ref{pr:HpolL} has been proved for the generalized Nijenhuis operators of level $l>2$.

\vspace{2mm}

Haantjes operators play a key role in classical mechanics and Riemannian geometry. This is illustrated for instance in the works \cite{RTT2022CNS,RTT2023JNS}, and \cite{TT2021JGP}--\cite{TT2016SIGMA}.

\subsection{Haantjes algebras}

Another crucial notion is that of Haantjes algebras, which have been introduced in \cite{TT2021JGP} and further studied and generalized in \cite{RTT2023JNS}.

\begin{definition}\label{def:HM}
A Haantjes algebra is a pair    $(M, \mathscr{H})$ satisfying the following properties:
\begin{itemize}
\item
$M$ is a differentiable manifold of dimension $\mathrm{n}$;
\item
$ \mathscr{H}$ is a set of Haantjes  operators $\boldsymbol{K}:\mathfrak{X}(M)\to \mathfrak{X}(M)$. Also, they  generate:
\begin{itemize}
\item
a free module over the ring of smooth functions on $M$:
\begin{equation}
\label{eq:Hmod}
\mathcal{H}_{\left( f\boldsymbol{K}_{1} +
                             g\boldsymbol{K}_2 \right)}(X,Y)= \mathbf{0}
 \, , \qquad\forall\, X, Y \in \mathfrak{X}(M) \, , \quad \forall\, f,g \in C^\infty(M) \,  , \quad \forall ~\boldsymbol{K}_1,\boldsymbol{K}_2 \in  \mathscr{H} ;
\end{equation}
  \item
a ring  w.r.t. the composition operation
\begin{equation}
 \label{eq:Hring}
\mathcal{H}_{(\boldsymbol{K}_1 \, \boldsymbol{K}_2)}(X,Y)=\mathbf{0} \, , \qquad
\forall\, \boldsymbol{K}_1,\boldsymbol{K}_2\in  \mathscr{H} , \quad\forall\, X, Y \in \mathfrak{X}(M)\, .
\end{equation}
\end{itemize}
\end{itemize}
If
\begin{equation}
\boldsymbol{K}_1\,\boldsymbol{K}_2=\boldsymbol{K}_2\,\boldsymbol{K}_1 \, , \quad\qquad\ \forall~\boldsymbol{K}_1,\boldsymbol{K}_2 \in  \mathscr{H} ,
\end{equation}
the  algebra $(M, \mathscr{H})$ is said to be an Abelian  Haantjes algebra. 
\end{definition}

In essence, we can think of $\mathscr{H}$ as an associative algebra of Haantjes operators.

\begin{remark}
A fundamental fact concerning Abelian Haantjes algebras is that there exist sets of coordinates, called \textit{Haantjes coordinates}, which allow us to write simultaneously all $\boldsymbol{K}\in \mathscr{H}$ in a block-diagonal form. In particular, if $\mathscr{H}$ is an algebra of semisimple operators, then in Haantjes coordinates all $\boldsymbol{K}\in \mathscr{H}$ acquire a purely diagonal form \cite{TT2021JGP}. 

\end{remark}

\vspace{2mm}

\subsection{The symplectic-Haantjes manifolds}

Symplectic-Haantjes (or $\omega \mathscr{H}$) manifolds are symplectic manifolds endowed with a compatible algebra of Haantjes operators. Apart from their mathematical properties,   these manifolds are relevant because they provide a simple but sufficiently general setting in which the theory of Hamiltonian integrable systems can be naturally formulated.

\begin{definition}\label{def:oHman}
A symplectic--Haantjes (or $\omega \mathscr{H}$) manifold  of class $m$ is a triple $( M,\omega,\mathscr{H})$ which satisfies the following properties:
\begin{itemize}
\item[(i)]
$(M,\omega)$  is a   symplectic  manifold of dimension $ 2 \, n$;
\item[(ii)]
$\mathscr{H}$ is an Abelian Haantjes algebra of rank $m$;
\item[(iii)]
$(\omega,\mathscr{H})$ are algebraically compatible, that is
\beq
\omega(X,\boldsymbol{K} Y)=\omega(\boldsymbol{K} X,Y) \, , \qquad \forall \boldsymbol{K} \in \mathscr{H} ,
\eeq
or equivalently
\begin{equation}\label{eq:compOmH}
\boldsymbol{\Omega}\, \boldsymbol{K} =\boldsymbol{K}^T\boldsymbol{\Omega} \, ,\qquad \ \forall \boldsymbol{K} \in \mathscr{H} .
\end{equation}
\end{itemize}
\noi
 Hereafter $\boldsymbol{\Omega}:=\omega ^\flat:\mathfrak{X}(M) \rightarrow \mathfrak{X}^{*}(M)$ denotes the  fiber bundles isomorphism defined by
\beq
\omega(X,Y)=\langle \boldsymbol{\Omega} X,Y \rangle \, , \qquad\qquad\forall X, Y \in \mathfrak{X}(M).
\eeq
The map $\boldsymbol{\Omega}^{-1}:\mathfrak{X}^{*}(M) \rightarrow \mathfrak{X}(M)$  is the Poisson bivector induced by  the symplectic two-form $\omega$.
\par
\end{definition}

The commutativity condition in $(ii)$ of the latter Definition is motivated by the following
\begin{proposition}
The set of (non-necessarily Haantjes) operators compatible with the symplectic form $\omega$ is a  $C^\infty(M)$ module.  Moreover,  this module is an algebra if and only if it is Abelian with respect to the product of operators.
\begin{proof}
The first statement follows immediately from the linearity of the compatibility condition \eqref{eq:compOmH}. Concerning the second one, we note that if $\bs{K}_1$, $\bs{K}_2$ are operators such that
\beq
\boldsymbol{\Omega}\boldsymbol{K}_1=\boldsymbol{K}_1^T \boldsymbol{\Omega},\qquad
\boldsymbol{\Omega}\boldsymbol{K}_2=\boldsymbol{K}_2^T \boldsymbol{\Omega},
\eeq
then 
\beq
\boldsymbol{\Omega}\boldsymbol{K}_1\boldsymbol{K}_2=\boldsymbol{K}_1^T \boldsymbol{\Omega} \boldsymbol{K}_2=
\boldsymbol{K}_1^T\boldsymbol{K}_2^T\boldsymbol{\Omega}=(\boldsymbol{K}_2\boldsymbol{K}_1)^T\boldsymbol{\Omega} \ .
\eeq
Thus, $\bs{K}_1\bs{K}_2$ is compatible with the symplectic form $\omega$ if and only if $\bs{K}_1$ and $\bs{K}_2$ commute.
\end{proof}
\end{proposition}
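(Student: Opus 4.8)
The plan is to treat the two assertions separately, each reducing to the compatibility relation \eqref{eq:compOmH} written as $\boldsymbol{\Omega}\boldsymbol{L}=\boldsymbol{L}^T\boldsymbol{\Omega}$, together with the non-degeneracy (hence invertibility) of $\boldsymbol{\Omega}$.

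For the first assertion I would simply verify the module axioms directly, exploiting that \eqref{eq:compOmH} is $C^\infty(M)$-linear in $\boldsymbol{L}$. Closure under addition is immediate: if $\boldsymbol{L}_1,\boldsymbol{L}_2$ both satisfy the compatibility relation, then $\boldsymbol{\Omega}(\boldsymbol{L}_1+\boldsymbol{L}_2)=\boldsymbol{\Omega}\boldsymbol{L}_1+\boldsymbol{\Omega}\boldsymbol{L}_2=\boldsymbol{L}_1^T\boldsymbol{\Omega}+\boldsymbol{L}_2^T\boldsymbol{\Omega}=(\boldsymbol{L}_1+\boldsymbol{L}_2)^T\boldsymbol{\Omega}$. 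Closure under multiplication by $f\in C^\infty(M)$ follows since transposition is $C^\infty(M)$-linear, so that $\boldsymbol{\Omega}(f\boldsymbol{L})=f\,\boldsymbol{\Omega}\boldsymbol{L}=f\,\boldsymbol{L}^T\boldsymbol{\Omega}=(f\boldsymbol{L})^T\boldsymbol{\Omega}$. Hence the set of $\omega$-compatible operators is a $C^\infty(M)$-module.

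For the second assertion the key is the behaviour of the compatibility relation under composition. Given compatible $\boldsymbol{L}_1,\boldsymbol{L}_2$, I would compute $\boldsymbol{\Omega}\boldsymbol{L}_1\boldsymbol{L}_2=\boldsymbol{L}_1^T\boldsymbol{\Omega}\boldsymbol{L}_2=\boldsymbol{L}_1^T\boldsymbol{L}_2^T\boldsymbol{\Omega}=(\boldsymbol{L}_2\boldsymbol{L}_1)^T\boldsymbol{\Omega}$, the crucial point being the order-reversal $(\boldsymbol{L}_2\boldsymbol{L}_1)^T=\boldsymbol{L}_1^T\boldsymbol{L}_2^T$. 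Thus $\boldsymbol{L}_1\boldsymbol{L}_2$ is itself $\omega$-compatible, i.e. $\boldsymbol{\Omega}\boldsymbol{L}_1\boldsymbol{L}_2=(\boldsymbol{L}_1\boldsymbol{L}_2)^T\boldsymbol{\Omega}$, exactly when $(\boldsymbol{L}_2\boldsymbol{L}_1)^T\boldsymbol{\Omega}=(\boldsymbol{L}_1\boldsymbol{L}_2)^T\boldsymbol{\Omega}$. Since $\boldsymbol{\Omega}$ is an isomorphism I may cancel it on the right and transpose, obtaining the equivalence: the product $\boldsymbol{L}_1\boldsymbol{L}_2$ is $\omega$-compatible if and only if $\boldsymbol{L}_1\boldsymbol{L}_2=\boldsymbol{L}_2\boldsymbol{L}_1$.

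From this equivalence both implications follow at once. If the module is Abelian, then every product of two compatible operators commutes and is therefore compatible, so the module is closed under composition and is an algebra. Conversely, if the module is an algebra, then for any compatible $\boldsymbol{L}_1,\boldsymbol{L}_2$ the product $\boldsymbol{L}_1\boldsymbol{L}_2$ belongs to the module, hence is compatible, which by the equivalence forces $\boldsymbol{L}_1\boldsymbol{L}_2=\boldsymbol{L}_2\boldsymbol{L}_1$; thus the module is Abelian. I expect no genuine obstacle here: the result is elementary, and the only points requiring care are the transpose order-reversal under composition and the explicit use of the invertibility of $\boldsymbol{\Omega}$ to pass from an identity multiplied by $\boldsymbol{\Omega}$ back to an identity of operators.
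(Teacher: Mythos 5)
Your proposal is correct and follows essentially the same route as the paper: linearity of the compatibility condition for the module structure, and the computation $\boldsymbol{\Omega}\boldsymbol{L}_1\boldsymbol{L}_2=\boldsymbol{L}_1^T\boldsymbol{\Omega}\boldsymbol{L}_2=\boldsymbol{L}_1^T\boldsymbol{L}_2^T\boldsymbol{\Omega}=(\boldsymbol{L}_2\boldsymbol{L}_1)^T\boldsymbol{\Omega}$ for the algebra/commutativity equivalence. You merely make explicit the final step (cancelling the invertible $\boldsymbol{\Omega}$ to conclude that compatibility of the product is equivalent to $\boldsymbol{L}_1\boldsymbol{L}_2=\boldsymbol{L}_2\boldsymbol{L}_1$), which the paper leaves implicit.
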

We also remind the following
\begin{proposition}\label{lm:aupari} \cite{TT2022AMPA}
Given a $2n$-dimensional $\omega \mathscr{H}$ manifold $M$, every generalized eigen--distribution  $ker(\boldsymbol{K}-\lambda_{i } \boldsymbol{I})^{{\rho_i}}$, $\rho_i \in \mathbb{N}$, is  of even rank.
Therefore, the geometric and algebraic multiplicities of each eigenvalue $\lambda_{i }(\boldsymbol{x})$ are even.
\end{proposition}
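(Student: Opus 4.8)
The statement is pointwise: fixing $x\in M$, it suffices to show that each generalized eigenspace of the linear operator $\boldsymbol{K}:=\boldsymbol{K}_x$ acting on the symplectic vector space $(T_xM,\omega_x)$ has even dimension (the rank of the distribution being the common fiber dimension). The plan is to exploit the algebraic compatibility \eqref{eq:compOmH}, which says precisely that $\boldsymbol{K}$ is self-adjoint with respect to $\omega$, i.e. $\omega(\boldsymbol{K}X,Y)=\omega(X,\boldsymbol{K}Y)$. First I would record two elementary consequences: (a) every polynomial $p(\boldsymbol{K})$ is again $\omega$-self-adjoint; and (b) for any $\omega$-self-adjoint operator $\boldsymbol{B}$, nondegeneracy of $\omega$ gives $(\operatorname{im}\boldsymbol{B})^{\perp_\omega}=\ker\boldsymbol{B}$. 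Throughout, $V_l:=\ker(\boldsymbol{K}-l\boldsymbol{I})^{r_l}$ denotes the full generalized eigenspace, $r_l$ being the stabilizing power.

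\emph{First step (algebraic multiplicities).} Using (a), for $l\neq m$, $X\in V_l$ and $Y\in V_m$ I would write $Y=(\boldsymbol{K}-l\boldsymbol{I})^{r_l}Z$ with $Z\in V_m$ (possible since $\boldsymbol{K}-l\boldsymbol{I}$ is invertible on the $\boldsymbol{K}$-invariant space $V_m$) and deduce $\omega(X,Y)=\omega((\boldsymbol{K}-l\boldsymbol{I})^{r_l}X,Z)=0$. Hence distinct generalized eigenspaces are mutually $\omega$-orthogonal. Since $T_xM=\bigoplus_\mu V_\mu$, a dimension count forces $V_l^{\perp_\omega}=\bigoplus_{\mu\neq l}V_\mu$, so $V_l\cap V_l^{\perp_\omega}=\{0\}$ and $\omega$ restricts to a nondegenerate form on $V_l$. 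A symplectic vector space is even-dimensional, so $\dim V_l$ is even, yielding evenness of the algebraic multiplicities.

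\emph{Second step (all intermediate powers, hence the geometric multiplicity).} It remains to treat $\ker(\boldsymbol{K}-l\boldsymbol{I})^{r}$ for $1\le r<r_l$. Restricting to the symplectic subspace $V_l$ and setting $\boldsymbol{N}:=(\boldsymbol{K}-l\boldsymbol{I})|_{V_l}$, nilpotent and $\omega$-self-adjoint, the claim becomes: $d_r:=\dim\ker\boldsymbol{N}^{r}$ is even for every $r$. I would prove this by induction on $\dim V_l$, peeling off the Jordan blocks of maximal size $k$. On the top graded piece $G_k:=\ker\boldsymbol{N}^{k}/\ker\boldsymbol{N}^{k-1}$ consider $\beta([X],[Y]):=\omega(X,\boldsymbol{N}^{k-1}Y)$; it is well defined (by self-adjointness together with $\boldsymbol{N}^{k-1}\ker\boldsymbol{N}^{k-1}=0$), antisymmetric, and nondegenerate since by (b) one has $(\operatorname{im}\boldsymbol{N}^{k-1})^{\perp_\omega}=\ker\boldsymbol{N}^{k-1}$. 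A nondegenerate antisymmetric form lives on an even-dimensional space, so $\dim G_k=2m$ is even. Choosing a $\beta$-symplectic basis of $G_k$ and lifting it to Jordan chains of length $k$, I would let $Z$ be their span: an $\boldsymbol{N}$-invariant subspace carrying exactly $2m$ blocks of size $k$, hence all maximal blocks. Granting that $Z$ is $\omega$-nondegenerate, $Z^{\perp_\omega}$ is again symplectic and $\boldsymbol{N}$-invariant (self-adjointness of $\boldsymbol{N}$), of strictly smaller dimension, and $\ker\boldsymbol{N}^{r}=\ker(\boldsymbol{N}|_{Z})^{r}\oplus\ker(\boldsymbol{N}|_{Z^{\perp}})^{r}$. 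Since $\dim\ker(\boldsymbol{N}|_{Z})^{r}=2m\,\min(r,k)$ is even, the induction closes.

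The main obstacle is the decisive point just flagged: verifying that the span $Z$ of the lifted Jordan chains is a symplectic subspace. This reduces to computing the Gram matrix of $\omega$ on the chain vectors, $\omega(\boldsymbol{N}^{a}e_i,\boldsymbol{N}^{b}f_j)=\omega(e_i,\boldsymbol{N}^{a+b}f_j)$, and checking that the block-antitriangular structure it inherits from the grading, with the nondegenerate $\beta$ on the antidiagonal $a+b=k-1$, is itself nondegenerate. For complex eigenvalue fields the same argument applies verbatim after complexifying $\omega$ and $\boldsymbol{K}$, the extended bilinear form remaining nondegenerate and antisymmetric.
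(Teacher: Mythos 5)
The paper does not actually supply a proof of this Proposition: it is quoted from the earlier work \cite{TT2022AMPA}, so there is no in-text argument to compare yours against. On its own terms, your proof is correct and complete. The pointwise reduction is legitimate, step (b) $(\operatorname{im}\boldsymbol{B})^{\perp_\omega}=\ker\boldsymbol{B}$ is the right workhorse, and the first step (mutual $\omega$-orthogonality of distinct generalized eigenspaces via invertibility of $\boldsymbol{K}-l\boldsymbol{I}$ on $V_m$, then a dimension count forcing $\omega|_{V_l}$ to be nondegenerate) cleanly gives even algebraic multiplicities. The second step is also sound: $\beta$ on $G_k$ is well defined, antisymmetric, and nondegenerate precisely by (b) applied to $\boldsymbol{N}^{k-1}$ inside the symplectic space $V_l$, so the number of maximal Jordan blocks $\dim G_k$ is even. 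The one point you flag as the main obstacle — nondegeneracy of $\omega$ on the span $Z$ of the lifted chains — does go through exactly as you describe: ordering the chain vectors by the power $a$, the Gram matrix has $(a,b)$ block $\omega(e_i,\boldsymbol{N}^{a+b}f_j)$, which vanishes for $a+b\ge k$ and equals the nondegenerate matrix of $\beta$ on the antidiagonal $a+b=k-1$; reversing the column blocks makes it block triangular with invertible diagonal blocks. The induction on $\dim V_l$ then closes since $Z^{\perp_\omega}$ is symplectic, $\boldsymbol{N}$-invariant and smaller. Note that your argument in fact proves the stronger classical statement that \emph{every} Jordan block of an $\omega$-self-adjoint operator occurs with even multiplicity, from which the evenness of $\ker(\boldsymbol{K}-l\boldsymbol{I})^{r}$ for every $r$ (hence both multiplicities in the Proposition) follows at once; this is more information than the Proposition asks for, and is a perfectly good route.
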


Let us consider the spectral decomposition of the tangent spaces of $M$ given by
\begin{equation}\label{def:Di}
T_{\boldsymbol{x}}M=\bigoplus_{i=1}^s \mathcal{D}_i(\boldsymbol{x}), \qquad \qquad
\mathcal{D}_i(\boldsymbol{x}) :=
ker\big(\boldsymbol{K} - \lambda_i\boldsymbol{I}\big)^{\rho_i}(\boldsymbol{x}),
\end{equation}
where $\rho_i$ is the Riesz index of the eigenvalue $\lambda_i$. Due to the latter Proposition,  as $rank$ $\mathcal{D}_i\geq 2, \  i=1,\ldots s$ we conclude that the number of distinct eigenvalues of any
Haantjes operator $\boldsymbol{K}$ of an $\omega\mathscr{H}$ manifold is not greater than $n$.

\subsection{Haantjes chains}
The theory of Magri chains is a fundamental piece of the geometric approach to soliton hierarchies. Magri chains have been introduced in order to construct integrals of motion in
involution for infinite-dimensional Hamiltonian systems \cite{Magri78}.

\noi Hereafter, we remind the notion of Haantjes chains. It has been introduced in \cite{TT2022AMPA} and will be extensively used in the forthcoming analysis.

\begin{definition}
 Let $( M,\mathscr{H})$ be a Haantjes algebra of rank $m$. We shall say that a smooth function $H$ generates a Haantjes chain of closed 1-forms of length $m$ if  there exists
a distinguished basis  $\{\tilde{\boldsymbol{K}}_1,\ldots, \tilde{\boldsymbol{K}}_m\}$ of $\mathscr{H}$
 such that
\begin{equation} \label{eq:MHchain}
\rd (\tilde{\boldsymbol{K}}^T_\alpha \,\rd H )=\boldsymbol{0} \ , \quad\qquad \alpha=1,\ldots ,m \ ,
\end{equation}
where $\tilde{\boldsymbol{K}}^{T}_{\alpha}: \mathfrak{X}^{*}(M) \to \mathfrak{X}^{*}(M)$ is the transposed operator of $\tilde{\boldsymbol{K}}_{\alpha}$. The (locally) exact 1-forms 
\beq
\rd H_\alpha=\tilde{\boldsymbol{K}}^T_\alpha \,\rd H,
\eeq
which are supposed to be linearly independent, are said to be the elements of the Haantjes chain of length $m$ generated by $H$. The functions $H_\alpha$ are the potential functions of the chain.
\end{definition}

In order to enquire about the existence of Haantjes chains for an assigned Haantjes algebra, we have to consider the co-distribution, of rank  $r\leq m$, generated by a given function $H$ via an arbitrary basis
$ \{ \boldsymbol{K}_1,  \boldsymbol{K}_2,\ldots,\boldsymbol{K}_{m}\} $ of $\mathscr{H}$, i.e.
\begin{equation} \label{eq:codKH}
\mathcal{D}_H^\circ:=Span\{ \boldsymbol{K}_1^T dH,  \boldsymbol{K}_2^T \rd H,\ldots,\boldsymbol{K}_{m}^T\,\rd H\} \ ,
\end{equation}
and the distribution $\mathcal{D}_H$ of the vector fields annihilated by them, which has rank  $n-r$.
Note that such distributions do not depend on the particular choice of the basis of $\mathscr{H}$.
The following theorems offer a geometric characterization of the existence of a Haantjes chain generated by a smooth function $H$ in terms of the Frobenius integrability of its associated co-distribution.
\begin{theorem} [Theorem 16, \cite{TT2022AMPA}] \label{th:LHint}
 Let  $(M,  \mathscr{H})$ be a  Haantjes algebra of rank $m$, and  $H$  a smooth function on $M$. Let  $\mathcal{D}_H^\circ$ be the codistribution \eqref{eq:codKH},
assumed of rank $m$ (independent on $\boldsymbol{x}$),
and let $\mathcal{D}_H$ be the distribution of the vector fields annihilated by the $1$-forms of $\mathcal{D}_H^\circ$.
Then, the function $H$  generates a Haantjes chain $\eqref{eq:MHchain}$
if and only if $\mathcal{D}^\circ_H$ (or equivalently $\mathcal{D}_H$) is Frobenius-integrable.
\par
\end{theorem}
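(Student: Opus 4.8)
The plan is to establish the two implications separately, with the Frobenius theorem in its codistribution formulation as the only nontrivial ingredient. A preliminary observation guiding the whole argument is that one needs \emph{only} the free $C^\infty(M)$-module structure of $\mathscr{H}$: neither the ring structure nor the vanishing of the Haantjes torsion is actually invoked in this particular equivalence. A second standing fact I would use repeatedly is the one already recorded in the excerpt, namely that the codistribution $\mathcal{D}_H^\circ$ is independent of the choice of basis of $\mathscr{H}$, together with the constant-rank hypothesis $\rank \mathcal{D}_H^\circ = m$, which makes $\mathcal{D}_H^\circ$ a genuine smooth rank-$m$ subbundle of $T^*M$.

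First I would treat the implication \emph{chain $\Rightarrow$ integrability}. Suppose a distinguished basis $\{\tilde{\boldsymbol{K}}_1,\ldots,\tilde{\boldsymbol{K}}_m\}$ exists with $\rd H_\alpha := \tilde{\boldsymbol{K}}_\alpha^T \rd H$ closed and linearly independent. Since the span $\{\tilde{\boldsymbol{K}}_\alpha^T \rd H\}$ coincides with $\mathcal{D}_H^\circ$ (basis-independence), the $m$ closed one-forms $\rd H_\alpha$ are a local frame for $\mathcal{D}_H^\circ$. A codistribution possessing a local frame of closed one-forms trivially satisfies the Frobenius condition $\rd\theta \wedge \rd H_1 \wedge \cdots \wedge \rd H_m = 0$ for every $\theta \in \mathcal{D}_H^\circ$ (because $\rd(\rd H_\alpha)=\boldsymbol{0}$), hence is integrable; equivalently, by the usual duality, the annihilator distribution $\mathcal{D}_H$ is involutive. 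This direction is immediate.

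For the converse, I would start from Frobenius integrability of $\mathcal{D}_H^\circ$. By the Frobenius theorem there exist, locally, functions $g_1,\ldots,g_m$ whose differentials $\rd g_1,\ldots,\rd g_m$ form a frame of $\mathcal{D}_H^\circ$. Because the generators $\boldsymbol{K}_1^T \rd H,\ldots,\boldsymbol{K}_m^T \rd H$ associated with an arbitrary basis $\{\boldsymbol{K}_\alpha\}$ of $\mathscr{H}$ are likewise a frame of the same rank-$m$ subbundle, there is a unique smooth, pointwise-invertible transition matrix $B=(B_{i\alpha})$ with entries in $C^\infty(M)$ such that $\rd g_i = \sum_{\alpha} B_{i\alpha}\, \boldsymbol{K}_\alpha^T \rd H$. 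Setting $\tilde{\boldsymbol{K}}_i := \sum_\alpha B_{i\alpha}\, \boldsymbol{K}_\alpha$ and using $(f\boldsymbol{K})^T = f\,\boldsymbol{K}^T$ for $f\in C^\infty(M)$, I obtain $\tilde{\boldsymbol{K}}_i^T \rd H = \rd g_i$, which is closed. By the module property of a Haantjes algebra each $\tilde{\boldsymbol{K}}_i$ is again a Haantjes operator, and invertibility of $B$ guarantees that $\{\tilde{\boldsymbol{K}}_i\}$ is a basis of $\mathscr{H}$; it is therefore a distinguished basis realizing the chain $\rd(\tilde{\boldsymbol{K}}_i^T \rd H)=\boldsymbol{0}$, with (local) potential functions $H_i = g_i$.

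The main obstacle, such as it is, is not a hard estimate but the careful bookkeeping needed to justify that the constant-rank assumption upgrades $\mathcal{D}_H^\circ$ to a smooth subbundle, so that the two generating sets are honest local frames and the transition matrix $B$ is smooth and invertible with coefficients in $C^\infty(M)$; this is precisely where the hypothesis $\rank \mathcal{D}_H^\circ = m$ (independent of $\boldsymbol{x}$) is indispensable, and where the \emph{module over functions} (rather than over $\mathbb{R}$) is essential to keep $\tilde{\boldsymbol{K}}_i$ inside $\mathscr{H}$. The stated equivalence between integrability of the codistribution $\mathcal{D}_H^\circ$ and involutivity of the distribution $\mathcal{D}_H$ is then the standard dual form of the Frobenius theorem and requires no further work.
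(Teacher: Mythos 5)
Your proof is correct, and since the paper does not reproduce an argument for this statement (it is imported as Theorem 16 of \cite{TT2022AMPA}), the relevant comparison is with that reference: your route --- the forward direction via the triviality of the Frobenius condition for a codistribution framed by closed forms, and the converse via Frobenius-supplied exact generators $\rd g_i$ related to the $\boldsymbol{K}_\alpha^T\,\rd H$ by a pointwise-invertible $C^\infty(M)$-valued transition matrix, which is then used to change basis in the free module $\mathscr{H}$ --- is precisely the standard proof given there. Your side remark that only the module axiom of the Haantjes algebra (and the constant-rank hypothesis), not the ring structure or the vanishing torsion itself, is needed for this equivalence is also accurate.
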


  The following result clarifies the compatibility between a given $\omega  \mathscr{H}$ manifold and a set of functions in involution.
\begin{theorem} [Theorem 35, \cite{TT2022AMPA}] \label{pr:LHDe}
Let   $(M, \omega,  \mathscr{H})$ be a
  $2n$-dimensional $\omega \mathscr{H}$ Abelian manifold of class $m$. Let $(H_1,H_2,\ldots, H_m)$
	be $m$ independent functions  in involution and  $\mathcal{D}^\circ=Span\{\rd H_1, \rd H_2, \ldots,\rd H_m \}$.  The differentials $(\rd H_1,\rd H_2,\ldots, \rd H_m)$  form  a Haantjes chain generated by a smooth function $H$ in involution with  them  if and only if $H$  satisfies the condition
  \begin{equation} \label{eq:LtH}
\mathcal{D}_H^\circ = \mathcal{D}^\circ\ .
  \end{equation}
\end{theorem}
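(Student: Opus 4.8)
The plan is to prove the two implications separately, exploiting the fact, recalled just after \eqref{eq:codKH}, that the codistribution $\mathcal{D}_H^\circ$ does not depend on the chosen basis of $\mathscr{H}$, and using the algebraic compatibility \eqref{eq:compOmH} only at the very end to secure the involutivity. For the forward implication I would assume that $(\rd H_1,\ldots,\rd H_m)$ is a Haantjes chain generated by $H$. By the definition of a Haantjes chain there is then a distinguished basis $\{\tilde{\boldsymbol{K}}_1,\ldots,\tilde{\boldsymbol{K}}_m\}$ of $\mathscr{H}$ with $\rd H_\alpha=\tilde{\boldsymbol{K}}_\alpha^T\,\rd H$. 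Computing $\mathcal{D}_H^\circ$ with this particular basis, which is legitimate by basis-independence, gives $\mathcal{D}_H^\circ=\mathrm{Span}\{\tilde{\boldsymbol{K}}_\alpha^T\,\rd H\}=\mathrm{Span}\{\rd H_\alpha\}=\mathcal{D}^\circ$. This direction needs no further input.

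For the converse I would assume $\mathcal{D}_H^\circ=\mathcal{D}^\circ$. Since $H_1,\ldots,H_m$ are independent, $\mathcal{D}^\circ$ has rank $m$, hence $\mathcal{D}_H^\circ$ has rank $m$ too, so for any basis $\{\boldsymbol{K}_1,\ldots,\boldsymbol{K}_m\}$ of $\mathscr{H}$ the $m$ one-forms $\boldsymbol{K}_\beta^T\,\rd H$ are pointwise linearly independent. As $\rd H_\alpha\in\mathcal{D}^\circ=\mathcal{D}_H^\circ$, I can expand $\rd H_\alpha=\sum_\beta c_{\alpha\beta}\,\boldsymbol{K}_\beta^T\,\rd H$ with smooth coefficients $c_{\alpha\beta}$, and since both families are independent and equinumerous the matrix $(c_{\alpha\beta})$ is pointwise invertible. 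Setting $\tilde{\boldsymbol{K}}_\alpha:=\sum_\beta c_{\alpha\beta}\,\boldsymbol{K}_\beta$ then yields a new basis of the free module $\mathscr{H}$, with $\tilde{\boldsymbol{K}}_\alpha^T\,\rd H=\rd H_\alpha$; closedness $\rd(\tilde{\boldsymbol{K}}_\alpha^T\,\rd H)=\rd\,\rd H_\alpha=\boldsymbol{0}$ is automatic, so \eqref{eq:MHchain} holds and $(\rd H_1,\ldots,\rd H_m)$ is indeed a Haantjes chain generated by $H$.

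It remains to check that $H$ is in involution with each $H_\alpha$, and this is where \eqref{eq:compOmH} enters. Writing $\{H,H_\alpha\}=\langle \rd H,\boldsymbol{\Omega}^{-1}\rd H_\alpha\rangle$, substituting $\rd H_\alpha=\tilde{\boldsymbol{K}}_\alpha^T\,\rd H$, and using the rewriting $\boldsymbol{\Omega}^{-1}\tilde{\boldsymbol{K}}_\alpha^T=\tilde{\boldsymbol{K}}_\alpha\boldsymbol{\Omega}^{-1}$ of \eqref{eq:compOmH}, I would obtain $\{H,H_\alpha\}=\omega(v,\tilde{\boldsymbol{K}}_\alpha v)$ with $v:=\boldsymbol{\Omega}^{-1}\rd H$. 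The compatibility $\omega(v,\tilde{\boldsymbol{K}}_\alpha v)=\omega(\tilde{\boldsymbol{K}}_\alpha v,v)$ combined with the skew-symmetry $\omega(\tilde{\boldsymbol{K}}_\alpha v,v)=-\omega(v,\tilde{\boldsymbol{K}}_\alpha v)$ forces this number to vanish, so $\{H,H_\alpha\}=0$. The identical argument applied to $\tilde{\boldsymbol{K}}_\alpha\tilde{\boldsymbol{K}}_\beta$, which is again $\omega$-compatible because $\mathscr{H}$ is Abelian, reproduces the mutual involution $\{H_\alpha,H_\beta\}=0$, consistent with the hypothesis on the $H_\alpha$.

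The main obstacle I anticipate is not the involution step, which is a clean consequence of the skew-symmetry of $\omega$, but the passage from the equality of codistributions to a genuine basis $\{\tilde{\boldsymbol{K}}_\alpha\}$: one must guarantee that the coefficients $c_{\alpha\beta}$ are globally smooth and that $(c_{\alpha\beta})$ is invertible. This rests on the constant-rank regularity hidden in the assumption $\rank\,\mathcal{D}_H^\circ=m$; under it the construction is a pointwise linear-algebra statement propagated smoothly, whereas without it one would have to work on the open dense set where the rank is maximal.
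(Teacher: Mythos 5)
Your argument is correct. Note first that the paper itself offers no proof of this statement: it is imported verbatim as Theorem 35 of \cite{TT2022AMPA}, so there is no in-paper proof to compare against. Your two implications are exactly the natural ones. The forward direction is immediate from basis-independence of $\mathcal{D}_H^\circ$, as you say. In the converse, the key step is the change of basis $\tilde{\boldsymbol{K}}_\alpha=\sum_\beta c_{\alpha\beta}\boldsymbol{K}_\beta$: since $\mathscr{H}$ is a free $C^\infty(M)$-module closed under linear combinations (condition \eqref{eq:Hmod}), the $\tilde{\boldsymbol{K}}_\alpha$ are again Haantjes operators in $\mathscr{H}$, and the pointwise invertibility of $(c_{\alpha\beta})$ makes them a basis; the chain condition \eqref{eq:MHchain} then holds trivially because $\tilde{\boldsymbol{K}}_\alpha^T\,\rd H=\rd H_\alpha$ is exact. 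Your involution step via $\{H,H_\alpha\}=\omega(v,\tilde{\boldsymbol{K}}_\alpha v)$ with $v=\boldsymbol{\Omega}^{-1}\rd H$, using $\boldsymbol{\Omega}^{-1}\boldsymbol{K}^T=\boldsymbol{K}\boldsymbol{\Omega}^{-1}$ together with the skew-symmetry of $\omega$, is clean and is the same mechanism the paper relies on elsewhere (it mirrors the proof of the Proposition that follows the theorem, which likewise trades the equality of codistributions for an expansion of $\boldsymbol{K}^T\rd H$ in the $\rd H_i$). The one genuine technical point --- that the coefficients $c_{\alpha\beta}$ are smooth and $(c_{\alpha\beta})$ is invertible, which rests on the constant-rank assumption $\rank\mathcal{D}_H^\circ=m$ built into Theorem \ref{th:LHint}'s framework --- you have identified and addressed correctly.
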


We also propose now a new theorem, which characterizes the Hamiltonian functions generating Haantjes chains.
\begin{theorem} \label{th:DLinv}
Let $(H_1,H_2,\ldots, H_m)$
	be $m$ independent functions  in involution and  $\mathcal{D}^\circ=Span\{\rd H_1, \rd H_2, \ldots,\rd H_m \}$. The set $(H_1,H_2,\ldots, H_m)$ are the potential functions of a Haantjes chain formed by
$(\rd H_1,\rd H_2,\ldots, \rd H_m)$ if and only if,  for all $\boldsymbol{K} \in \mathscr{H}$, the following invariance condition holds
\begin{equation}\label{eq:Linv}
\boldsymbol{K}^T (\mathcal{D}^\circ ) \subseteq \mathcal{D}^\circ .
\end{equation}
\end{theorem}
\begin{proof}
Let $H$ be functionally dependent on $(H_1,\ldots,H_m)$. Then $\rd H \in \mathcal{D}^\circ$, and $H$ is in involution with $(H_1,\ldots,H_m)$. Moreover, for each $\boldsymbol{K} \in  \mathscr{H}$ we have
\beq
\boldsymbol{K}^T \rd H=\boldsymbol{K}^T \bigg(\sum_{i=1}^m f_i \, \rd H_i \bigg)=\sum_{i=1}^m f_i \,\boldsymbol{K}^T \rd H_i.
\eeq
If the invariance condition \eqref{eq:Linv} is fulfilled, then $\mathcal{D}_H^\circ \subseteq \mathcal{D}^\circ$; in particular, having the same rank by assumption, the two distributions coincide. Thus, we deduce that condition \eqref{eq:LtH} is satisfied. 
Conversely,  if condition \eqref{eq:LtH} is satisfied, then $\boldsymbol{K}^T \rd H$ is a linear combination of $(\rd H_1,\ldots, \rd H_m)$. Thus, since $\rd H \in \mathcal{D}^\circ$, we conclude that the invariance condition \eqref{eq:Linv} is fulfilled.
\end{proof}
From Theorems \ref{pr:LHDe} and \ref{th:DLinv}, we deduce the following 
\begin{corollary}
A function  $H$ generates a Haantjes chain only if it satisfies the invariance condition
\begin{equation}
\boldsymbol{K}^T (\mathcal{D}_H^\circ ) \subseteq \mathcal{D}_H^\circ \qquad\qquad \qquad\forall \boldsymbol{K} \in \mathscr{H}\  .
\end{equation}
\end{corollary}
\subsection{Haantjes coordinates}
As we shall see, a crucial result for the theory of separability of Hamiltonian systems is the theorem, proved in \cite{TT2021JGP},  ensuring that an algebra of Haantjes operators can be represented in suitable coordinates in a block-diagonal form.

\begin{proposition} [Proposition 33, \cite{TT2021JGP}]  \label{th:HJ}
Let  $\mathcal{K}=\{\boldsymbol{K}_1,\ldots,\boldsymbol{K}_m\}$, $\boldsymbol{K}_\alpha: \mathfrak{X}(M)\to \mathfrak{X}(M)$, $\alpha=1,\ldots,m$ be  a family  of commuting   operator fields; we assume that  one of them, say $\boldsymbol{K}_1$, is a Haantjes operator.  Then, there exist local charts adapted to the spectral decomposition of $\bs{K}_1$ in which all of the operators $\boldsymbol{K}_\alpha$ can be written simultaneously in a block-diagonal form. In addition, if we assume that

(i) all the operators of the family are Haantjes operators

(ii) all possible nontrivial intersections of their generalized eigen-distributions
\begin{equation}
\label{eq:Va}
\mathcal{V}_a(\boldsymbol{x}):=  \bigoplus_{i_1,\ldots,i_m}^{s_1,\ldots,s_m}\mathcal{D}_{i_1}^{(1)}(\boldsymbol{x}) \bigcap  \ldots      \bigcap \mathcal{D}_{i_m}^{(m)}(\boldsymbol{x}) \ ,\qquad a=1,\ldots, v\leq n
\end{equation}
are pairwise integrable (i.e. any direct sum of $\mathcal{V}_a$ is integrable), then  there exist  sets of local coordinates, adapted to the decomposition
\begin{equation}
 \label{eq:TVa}
T_{\boldsymbol{x}}M= \bigoplus_{a=1}^{v}\mathcal{V}_a( \boldsymbol{x}) \qquad  \boldsymbol{x} \in M ,
\end{equation}
in which all operators $\boldsymbol{K}_\alpha$ admit simultaneously a block-diagonal form (with possibly finer blocks).
\end{proposition}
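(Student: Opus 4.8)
The plan is to isolate two independent ingredients and then glue them together by means of the Frobenius theorem: the \emph{analytic} input is the integrability of the generalized eigen-distributions of a Haantjes operator, and the \emph{algebraic} input is the elementary fact that commuting operators preserve one another's generalized eigenspaces. Integrability produces coordinates adapted to a spectral decomposition, while the preservation property forces every operator of the family to be simultaneously block-diagonal in those coordinates.

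For the first assertion I would work on an open set where the algebraic type of $\boldsymbol{K}_1$ is constant, so that the generalized eigen-distributions $\mathcal{D}_i^{(1)}=\ker(\boldsymbol{K}_1-l_i\boldsymbol{I})^{\rho_i}$ in the decomposition $T_{\boldsymbol{x}}M=\bigoplus_{i=1}^{s_1}\mathcal{D}_i^{(1)}$ have constant rank. Since $\boldsymbol{K}_1$ is a Haantjes operator, the vanishing of its Haantjes torsion guarantees that each $\mathcal{D}_i^{(1)}$, and more generally every direct sum $\bigoplus_{i\in S}\mathcal{D}_i^{(1)}$, is Frobenius-integrable. Applying the Frobenius theorem to this web of complementary integrable distributions yields local coordinates split into groups, the natural frame of the $i$-th group spanning $\mathcal{D}_i^{(1)}$; in such coordinates $\boldsymbol{K}_1$ is block-diagonal. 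The other operators are then handled purely algebraically: at each point the Riesz projector $P_i^{(1)}$ onto $\mathcal{D}_i^{(1)}$ is a polynomial in $\boldsymbol{K}_1$, so from $[\boldsymbol{K}_\alpha,\boldsymbol{K}_1]=0$ one gets $[\boldsymbol{K}_\alpha,P_i^{(1)}]=0$ and hence $\boldsymbol{K}_\alpha\,\mathcal{D}_i^{(1)}=\boldsymbol{K}_\alpha(\operatorname{im}P_i^{(1)})\subseteq \operatorname{im}P_i^{(1)}=\mathcal{D}_i^{(1)}$. Thus every $\boldsymbol{K}_\alpha$ preserves each eigen-distribution of $\boldsymbol{K}_1$ and is therefore block-diagonal in the same adapted coordinates, proving the first claim.

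For the second assertion, assuming now that all the $\boldsymbol{K}_\alpha$ are Haantjes, I would pass to the common refinement of the individual spectral decompositions, namely to the nontrivial intersections $\mathcal{V}_a=\mathcal{D}_{i_1}^{(1)}\cap\cdots\cap\mathcal{D}_{i_m}^{(m)}$ of \eqref{eq:Va}. Since $\boldsymbol{K}_\alpha$ commutes with every $\boldsymbol{K}_\beta$, the same projector argument shows that $\boldsymbol{K}_\alpha$ preserves each factor $\mathcal{D}_{i_\beta}^{(\beta)}$, and hence preserves every intersection $\mathcal{V}_a$. By hypothesis (ii) all direct sums of the $\mathcal{V}_a$ are integrable, so the Frobenius theorem again produces coordinates adapted to $T_{\boldsymbol{x}}M=\bigoplus_{a=1}^{v}\mathcal{V}_a$. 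In these coordinates every $\boldsymbol{K}_\alpha$ is block-diagonal with blocks indexed by the $\mathcal{V}_a$; since the $\mathcal{V}_a$ refine the eigen-distributions of each single operator, these blocks are in general finer than those obtained in the first part.

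The genuinely nontrivial point — and the main obstacle — is the integrability that feeds the Frobenius theorem. For a single operator it is precisely the vanishing of the Haantjes torsion that upgrades the purely pointwise spectral splitting into an integrable one, and this step cannot be bypassed. For the refinement the situation is more delicate: an intersection of integrable distributions need not be integrable, and the Haantjes property of the individual $\boldsymbol{K}_\alpha$ does \emph{not} by itself force the common refinement to be integrable. This is exactly why integrability of the $\mathcal{V}_a$ (and of their direct sums) must be \emph{assumed} in hypothesis (ii) rather than derived, and it is also why the conclusion is stated with ``possibly finer blocks'' rather than with a canonical finest block structure.
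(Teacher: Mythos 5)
Your argument is essentially the standard proof of this result; note that the present paper does not prove it at all, but imports it as Proposition~33 of \cite{TT2021JGP}, and the strategy there is exactly the one you describe: Haantjes' integrability theorem for the generalized eigen-distributions of $\boldsymbol{K}_1$ (and of all their direct sums) feeds the Frobenius theorem, while commutativity --- via the Riesz projectors, or more directly via $(\boldsymbol{K}_1-l_i\boldsymbol{I})^{\rho_i}\boldsymbol{K}_\alpha v=\boldsymbol{K}_\alpha(\boldsymbol{K}_1-l_i\boldsymbol{I})^{\rho_i}v$ --- forces every $\boldsymbol{K}_\alpha$ to preserve each block. Your handling of the refinement is also correct as a derivation, but your closing diagnosis of \emph{why} hypothesis (ii) is needed is off on one point: the individual intersections $\mathcal{V}_a$ \emph{are} automatically integrable when all the $\boldsymbol{K}_\alpha$ are commuting Haantjes operators (the paper states this explicitly right after the proposition). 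Indeed, $\boldsymbol{K}_2$ restricts to each leaf $L$ of the integrable, $\boldsymbol{K}_2$-invariant distribution $\mathcal{D}_{i_1}^{(1)}$, the Haantjes torsion of the restriction is the restriction of the Haantjes torsion since all vector fields involved are tangent to $L$, so $\boldsymbol{K}_2|_L$ is again Haantjes and its generalized eigen-distributions $\mathcal{D}_{i_1}^{(1)}\cap\mathcal{D}_{i_2}^{(2)}$ are integrable inside $L$, hence in $M$; iterating gives integrability of every $\mathcal{V}_a$. What genuinely cannot be derived, and is the real content of hypothesis (ii), is the integrability of direct sums of two or more \emph{distinct} $\mathcal{V}_a$'s (these are not themselves intersections of sums of eigen-distributions), which is what is needed to assemble a single chart adapted to the full decomposition \eqref{eq:TVa}. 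With that correction, and the usual tacit constant-rank assumptions you already flag, your proof is sound.
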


A simple but relevant consequence of the latter Proposition is the following new result.
\begin{corollary}
The property of simultaneous block-diagonalization stated in Proposition \ref{th:HJ} extends to the algebra generated by the family of operators $\mathcal{K}$ by considering $C^{\infty}(M)$-linear combinations and products of operators. 
\end{corollary}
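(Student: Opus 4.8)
The plan is to reduce the statement to an elementary, pointwise fact of linear algebra: the property of being block-diagonal with respect to a fixed splitting of the tangent space is preserved under $C^\infty(M)$-linear combinations and under composition of operators. Concretely, Proposition \ref{th:HJ} provides local coordinates adapted to a fixed decomposition $T_{\boldsymbol{x}}M=\bigoplus_b \mathcal{W}_b(\boldsymbol{x})$ (namely the decomposition \eqref{eq:TVa}, possibly refined) such that each $\boldsymbol{K}_\alpha$ leaves every subspace $\mathcal{W}_b(\boldsymbol{x})$ invariant; this invariance is precisely what it means for $\boldsymbol{K}_\alpha$ to be block-diagonal in the adapted chart. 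I would therefore work entirely with the characterization ``block-diagonal $\Longleftrightarrow$ preserves each $\mathcal{W}_b(\boldsymbol{x})$''.

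First I would introduce the set $\mathcal{A}$ of all $(1,1)$-tensor fields $\boldsymbol{L}$ satisfying $\boldsymbol{L}\big(\mathcal{W}_b(\boldsymbol{x})\big)\subseteq \mathcal{W}_b(\boldsymbol{x})$ for every $b$ and every $\boldsymbol{x}\in M$, and verify that $\mathcal{A}$ is a $C^\infty(M)$-subalgebra of the algebra of operator fields. Closure under $C^\infty(M)$-linear combinations is immediate, since for $f,g\in C^\infty(M)$ and $\boldsymbol{L}_1,\boldsymbol{L}_2\in\mathcal{A}$ one has $(f\boldsymbol{L}_1+g\boldsymbol{L}_2)(\mathcal{W}_b)\subseteq \mathcal{W}_b$, each subspace being invariant under pointwise rescaling. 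Closure under products follows from $(\boldsymbol{L}_1\boldsymbol{L}_2)(\mathcal{W}_b)=\boldsymbol{L}_1\big(\boldsymbol{L}_2(\mathcal{W}_b)\big)\subseteq \boldsymbol{L}_1(\mathcal{W}_b)\subseteq \mathcal{W}_b$. Thus $\mathcal{A}$ is stable under exactly the two operations that generate the algebra.

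To conclude, I would invoke Proposition \ref{th:HJ} to place each generator $\boldsymbol{K}_\alpha$ in $\mathcal{A}$, and then observe that the algebra generated by $\mathcal{K}$ is, by definition, the smallest subset containing $\mathcal{K}$ and closed under $C^\infty(M)$-linear combinations and products. Since $\mathcal{A}$ already contains every $\boldsymbol{K}_\alpha$ and enjoys both closure properties, the generated algebra is contained in $\mathcal{A}$; equivalently, each of its elements preserves every $\mathcal{W}_b(\boldsymbol{x})$ and hence is block-diagonal in the very same adapted coordinates.

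I do not expect a genuine obstacle here, as the argument is purely algebraic once the geometric content of Proposition \ref{th:HJ} has been invoked. The only point requiring mild care is the bookkeeping around the phrase ``possibly finer blocks'': one should fix once and for all the common block decomposition $\bigoplus_b \mathcal{W}_b$ realized by the adapted chart (the coarsest partition of coordinates for which all $\boldsymbol{K}_\alpha$ are simultaneously block-diagonal) and verify that the generated algebra respects this same partition, rather than attempting to track the finer, operator-dependent refinements. With that choice fixed, the invariance argument above applies verbatim.
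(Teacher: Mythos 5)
Your argument is correct and is precisely the elementary invariance argument the authors have in mind: the paper states this corollary without proof as a ``simple but important consequence'' of Proposition \ref{th:HJ}, the implicit justification being exactly that block-diagonality in the adapted chart is equivalent to preserving each summand of the fixed decomposition, a property manifestly stable under $C^{\infty}(M)$-linear combinations and composition. Your additional care about fixing the block decomposition once and for all (rather than tracking operator-dependent refinements) is a reasonable precaution but does not change the substance.
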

\begin{remark}
Each set of coordinates allowing us to write down the family $\mathcal{K}$ in a block-diagonal form will be said to be a set of \textit{Haantjes coordinates}. In particular, if we deal with a semisimple Haantjes algebra, then each set of Haantjes coordinates allows us to write simultaneously all the operators of the algebra in a diagonal form.
\end{remark}
This theorem has been extended in \cite{RTT2023JNS} to the case of \textit{generalized Haantjes algebras of level} $l$, which are algebras whose elements are represented by generalized Nijenhuis operators of level $l$.
\begin{definition}
Given an $\omega \mathscr{H}$ manifold, the local coordinates where all Haantjes operators take simultaneously a block-diagonal form and the symplectic form takes the Darboux form are called Darboux--Haantjes (DH) coordinates.
\end{definition}
The existence of DH coordinates for a Haantjes manifold is guaranteed by Proposition \ref{th:HJ} and the following reasoning. Precisely, the fact that all the operators considered are Haantjes operators ensures that there exist coordinates allowing for a simultaneous block-diagonalization. Notice that all eigendistributions $\mathcal{V}_a(\bs{x})$ are integrable, as they are intersections of integrable distributions. Moreover, they are of even rank, as proved in \cite{TT2022AMPA}. If they are also pairwise integrable, then the block-diagonalization can be realized with finer blocks. 
By definition, the operators are also algebraically compatible with the symplectic form. As each of these blocks is invariant, we can restrict the symplectic form on them.
Then, over the leaves of each distribution $\mathcal{V}_a$, one can find Darboux coordinates for the corresponding restriction of the symplectic form, which is still symplectic. This is due to the fact that the distributions are integrable and of even rank, so their integral leaves are symplectic submanifolds and are symplectically orthogonal to each other \cite{TT2022AMPA}. Therefore, one can collect  such coordinates to obtain a local chart in $M$ of the form
\begin{equation} \label{eq:gDHch}
(\boldsymbol{q}^{1},\boldsymbol{p}_{1},\ldots,\boldsymbol{q}^{v},\boldsymbol{p}_{v}) \ ,
\end{equation}
where the subsets
$$
(\boldsymbol{q}^{1},\boldsymbol{p}_{1})=(q^{1,1},\ldots,q^{1,\sigma_{1}}, p_{1,1},\ldots,p_{1,\sigma_{1}}); \ldots;
 (\boldsymbol{q}^{v},\boldsymbol{p}_{v})=(q^{v,1},\dots, q^{v,\sigma_{v}},p_{v,1},\dots, p_{v,\sigma_{v}}),
$$
are adapted to the decomposition \eqref{eq:TVa}. In this chart, the symplectic form $\omega$ takes  the Darboux form \begin{equation} \label{eq:gDo}
\omega=
\sum_{a=1}^{v}\sum_{i_a=1}^{\sigma_a}  \rd p_{a,i_a} \wedge \rd   q^{a,i_a} \ .
\end{equation}
Here 
$
\sigma_{a}:= \frac{1}{2} \, \text{rank} \, (\mathcal{V}_a).
$
 Also, as a consequence of Proposition \ref{th:HJ}, each Haantjes operator $\boldsymbol{K} \in \mathscr{H}$  possesses a block-diagonal form. 
Then, in the chart \eqref{eq:gDHch} the operators $\boldsymbol{K}_{\alpha} \in \mathscr{H}$ can be written as
\begin{equation} \label{eq:KNS}
\begin{split}
\boldsymbol{K}_{\alpha} =& \sum_{a = 1}^{v}
\sum_{j_{a}, k_{a} = 1}^{\sigma_{a}} 
\bigg(
 \left(
 \bs{A}_{a}^{(\alpha)}(\boldsymbol{q}, \boldsymbol{p})
 \right)^{ a, j_{a}}_{ a, k_{a}} 
   \left( 
  \dfrac{\partial}{\partial q^{a, j_{a}}} \otimes d q^{a, k_{a}} \right) +
   \left(
 \bs{D}_{a}^{(\alpha)}(\boldsymbol{q}, \boldsymbol{p})
 \right)^{ a, j_{a}}_{ a, k_{a}} 
 \left(
 \dfrac{\partial}{\partial p_{a, j_{a}}} \otimes d p_{a, k_{a}} 
\right) 
\\
&+
 \left(
 \bs{B}_{a}^{(\alpha)}(\boldsymbol{q}, \boldsymbol{p})
 \right)^{ a, j_{a}}_{ a, k_{a}} 
  \left(
   \dfrac{\partial}{\partial q^{a, j_{a}}} \otimes d p_{a, k_{a}} 
    \right )
 +
 \left(
 \bs{C}_{a}^{(\alpha)}(\boldsymbol{q}, \boldsymbol{p})
 \right)^{ a, j_{a}}_{ a, k_{a}} 
  \left(
   \dfrac{\partial}{\partial p_{a, j_{a}}} \otimes d q^{a, k_{a}}
   \right )
\bigg)
\end{split}
\end{equation}
where the blocks $ \boldsymbol{A}_{a}^{(\alpha)}, \boldsymbol{B}_{a}^{(\alpha)}, \boldsymbol{C}_{a}^{(\alpha)}, \boldsymbol{D}_{a}^{(\alpha)}$ are $\sigma_a \times \sigma_a$ matrix-valued functions suitably chosen in order for $\boldsymbol{K}_{\alpha}$ to have a vanishing Haantjes torsion. Due to the compatibility condition \eqref{eq:compOmH},  the relations
\begin{equation} \label{eq:cOmNS}
 ( \boldsymbol{D}_{a}^{(\alpha)})^T= \boldsymbol{A}_{a}^{(\alpha)}\ , \quad 
 \boldsymbol{B}_{a}^{(\alpha)}+ (\boldsymbol{B}_{a}^{(\alpha)})^T=0\ ,\quad 
  \boldsymbol{C}_{a}^{(\alpha)}+ (\boldsymbol{C}_{a}^{(\alpha)})^T=0 \quad\quad \alpha=1,\ldots,m \ ,
\end{equation}
must also be satisfied. Moreover, as the Haantjes algebra $\mathscr{H}$ is Abelian,  the abovementioned blocks have to satisfy  the further conditions
\begin{align}
\label{eq:AAcomm} 
&[ \boldsymbol{A}_{a}^{(\alpha)}, \boldsymbol{A}_{a}^{(\beta)}]+ 
\boldsymbol{B}_{a}^{(\alpha)} \boldsymbol{C}_{a}^{(\beta)}- \boldsymbol{B}_{a}^{(\beta)} \boldsymbol{C}_{a}^{(\alpha)}=0 \ , \\
\label{eq:ABskew}
&\boldsymbol{A}_{a}^{(\alpha)}\boldsymbol{B}_{a}^{(\beta)}-
\boldsymbol{A}_{a}^{(\beta)}\boldsymbol{B}_{a}^{(\alpha)}+
\boldsymbol{B}_{a}^{(\alpha)}(\boldsymbol{A}_{a}^{(\beta)})^T-
\boldsymbol{B}_{a}^{(\beta)}(\boldsymbol{A}_{a}^{(\alpha)})^T=0 \ ,\\
\label{eq:ACskew}
&\boldsymbol{C}_{a}^{(\alpha)}\boldsymbol{A}_{a}^{(\beta)}-
\boldsymbol{C}_{a}^{(\beta)}\boldsymbol{A}_{a}^{(\alpha)}
+(\boldsymbol{A}_{a}^{(\alpha)})^T\boldsymbol{C}_{a}^{(\beta)} -
(\boldsymbol{A}_{a}^{(\beta)})^T\boldsymbol{C}_{a}^{(\alpha)}=0 \ .
\end{align}

Combining the conditions \eqref{eq:cOmNS} with Eqs. \eqref{eq:ABskew} and  \eqref{eq:ACskew}, we deduce a useful result.
\begin{lemma} \label{lm:ABACskew}
The operators 
\begin{equation} \label{eq:ABACskew}
\boldsymbol{A}_{a}^{(\alpha)}\boldsymbol{B}_{a}^{(\beta)}-\boldsymbol{A}_{a}^{(\beta)}\boldsymbol{B}_{a}^{(\alpha)}\ , \quad\qquad
\boldsymbol{C}_{a}^{(\alpha)}\boldsymbol{A}_{a}^{(\beta)}-\boldsymbol{C}_{a}^{(\beta)}\boldsymbol{A}_{a}^{(\alpha)}
\end{equation}
are skew-symmetric.
\end{lemma}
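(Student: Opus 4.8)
The plan is to verify the skew-symmetry of each of the two operators directly, by computing its transpose and then matching the result against the original expression with the help of the compatibility relations \eqref{eq:cOmNS} together with the commutativity relations \eqref{eq:ABskew}--\eqref{eq:ACskew}. Throughout I would suppress the block index $\bs{a}$, writing $\boldsymbol{A}^{(\alpha)}$ for $\boldsymbol{A}_{\bs{a}}^{(\alpha)}$ and so on, and I would use repeatedly that \eqref{eq:cOmNS} makes both $\boldsymbol{B}^{(\alpha)}$ and $\boldsymbol{C}^{(\alpha)}$ skew-symmetric, i.e. $(\boldsymbol{B}^{(\alpha)})^T=-\boldsymbol{B}^{(\alpha)}$ and $(\boldsymbol{C}^{(\alpha)})^T=-\boldsymbol{C}^{(\alpha)}$ for every $\alpha$.

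For the first operator, set $\boldsymbol{M}:=\boldsymbol{A}^{(\alpha)}\boldsymbol{B}^{(\beta)}-\boldsymbol{A}^{(\beta)}\boldsymbol{B}^{(\alpha)}$. Transposing each product (reversing the order of the factors and transposing each one) and then invoking the skew-symmetry of the $\boldsymbol{B}$'s yields
\[
\boldsymbol{M}^T=(\boldsymbol{B}^{(\beta)})^T(\boldsymbol{A}^{(\alpha)})^T-(\boldsymbol{B}^{(\alpha)})^T(\boldsymbol{A}^{(\beta)})^T
=-\boldsymbol{B}^{(\beta)}(\boldsymbol{A}^{(\alpha)})^T+\boldsymbol{B}^{(\alpha)}(\boldsymbol{A}^{(\beta)})^T .
\]
The key step is then to read \eqref{eq:ABskew} as an identity for $\boldsymbol{M}$ itself, which gives precisely $\boldsymbol{M}=\boldsymbol{B}^{(\beta)}(\boldsymbol{A}^{(\alpha)})^T-\boldsymbol{B}^{(\alpha)}(\boldsymbol{A}^{(\beta)})^T$. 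Comparing this with the previous display immediately gives $\boldsymbol{M}^T=-\boldsymbol{M}$, as required.

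The second operator is handled in exactly the same fashion. Setting $\boldsymbol{N}:=\boldsymbol{C}^{(\alpha)}\boldsymbol{A}^{(\beta)}-\boldsymbol{C}^{(\beta)}\boldsymbol{A}^{(\alpha)}$, I would transpose and use $(\boldsymbol{C}^{(\alpha)})^T=-\boldsymbol{C}^{(\alpha)}$ to obtain
\[
\boldsymbol{N}^T=(\boldsymbol{A}^{(\alpha)})^T\boldsymbol{C}^{(\beta)}-(\boldsymbol{A}^{(\beta)})^T\boldsymbol{C}^{(\alpha)} ,
\]
while \eqref{eq:ACskew} expresses $\boldsymbol{N}$ itself as $(\boldsymbol{A}^{(\beta)})^T\boldsymbol{C}^{(\alpha)}-(\boldsymbol{A}^{(\alpha)})^T\boldsymbol{C}^{(\beta)}$; hence $\boldsymbol{N}^T=-\boldsymbol{N}$.

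I do not expect any genuine obstacle here: the statement is pure linear algebra once \eqref{eq:cOmNS}, \eqref{eq:ABskew} and \eqref{eq:ACskew} are available, and its entire content is the observation that each commutativity relation is nothing but the vanishing of the symmetric part of the corresponding product. The only point demanding care is the bookkeeping of signs and of the order of factors under transposition; in particular one must apply \eqref{eq:ABskew} (respectively \eqref{eq:ACskew}) to rewrite the \emph{original} expression rather than its transpose, so that the two sides can be matched term by term.
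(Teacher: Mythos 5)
Your proof is correct and follows exactly the route the paper intends: the lemma is stated there as an immediate consequence of combining the skew-symmetry of $\boldsymbol{B}_{\bs{a}}^{(\alpha)}$, $\boldsymbol{C}_{\bs{a}}^{(\alpha)}$ from \eqref{eq:cOmNS} with the commutativity relations \eqref{eq:ABskew}--\eqref{eq:ACskew}, which is precisely the computation you carry out. The sign bookkeeping in both transposition arguments checks out.
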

In particular, if the Haantjes algebra $\mathscr{H}$ is \textit{semisimple}, the blocks $ \boldsymbol{B}^{(\alpha)}_{a}$ and $ \boldsymbol{C}^{(\alpha)}_{a}$ vanish, whereas the blocks $ \boldsymbol{A}^{(\alpha)}_{a}$ take the form $\boldsymbol{A}^{(\alpha)}_{a}= \lambda^{(\alpha)}_{a} \bs{I}_{\sigma_{a}}$, where
$\lambda^{(\alpha)}_{a}$ are the eigenvalues of $\boldsymbol{K}_\alpha$.
\section{Partial separability} \label{sec:3}

\subsection{Main definition}
We shall propose a definition of partial separability inspired by the original theory of Jacobi. In essence, we require that, given a Hamiltonian system, the corresponding Hamilton-Jacobi equations decouple into a system of PDEs, each of them depending on a different subset of independent variables. More precisely, we have the following
\begin{definition}[Generalized Separation Equations] \label{def:PS}
We shall say that a $2n$-dimensional Hamiltonian system with Hamiltonian function $H$ is partially separable if there exist:

(1) $m\leq n$ independent smooth functions $(H_1,\ldots, H_m)$ in involution, on which $H$ is functionally dependent;

(2) $1<v\leq n$ subsets of Darboux coordinates $(\boldsymbol{q}^{1},\boldsymbol{p}_{1},\ldots,\boldsymbol{q}^{v},\boldsymbol{p}_{v})$ defined in the phase space;

(3) $m$ separation equations of the form

\begin{equation} \label{eq:gSk}
\phi_{a,j_a}(\boldsymbol{q}^{a},\boldsymbol{p}_{a}; H_1,\ldots,H_m)=0 \ , \quad   a=1,\ldots,v,\qquad j_a=1,\ldots r_a, \quad  {\rm  det}\left[\frac{\partial \boldsymbol{\phi}}{\partial \boldsymbol{H} }\right] \neq 0
 \end{equation} 
where $r_a\in \mathbb{N} \backslash \{0\}$ represents the number of equations associated with
$$
(\boldsymbol{q}^{a},\boldsymbol{p}_{a})=(q^{a,1},\ldots,q^{a,\sigma_{a}}, p_{a,1},\ldots,p_{a,\sigma_{a}}), 
$$
$\boldsymbol{\phi}=(\phi_{1,1}\ldots, \phi_{1,r_1},\ldots,\phi_{v,1},\ldots, \phi_{v,r_v})^{T}$, $\boldsymbol{H}=(H_1,\ldots, H_m)$
and $\sigma_a$ represents the number of coordinates (or momenta) involved in the pair $(\boldsymbol{q}^{a},\boldsymbol{p}_{a})$.
\end{definition}
These equations are a generalization of the Jacobi-Sklyanin SE \eqref{eq:Sk} and will be referred to as generalized separation equations (GSE). In fact, if for each $a\in \{1,\ldots,v\}$, $\sigma_a=1$ (which implies $v=n$)  and $r_a=1$ (then $v=m=n$), one recovers the classical  Jacobi's theory of separability. A Hamiltonian system such that $\sigma_a=r_a=1$ will be said to be \textit{totally separable}.

In this article, we shall study two cases in which
$\sigma_a>1$ for some $a\in \{1,\ldots,v\}$ (thus, $v<n$):
  
\vspace{3mm}

\begin{itemize}
\item Case \textbf{I}:
   $r_a=\sigma_a$, implying $m=n$;
   
\vspace{3mm}

\item 
Case \textbf{II}:
  $r_a=1$ $\forall a\in \{1,\ldots,v\}$, implying $m=v$.
\end{itemize}
The Case \tbf{I} corresponds to the maximum possible number of (independent) separation equations for each subset of Darboux coordinates, whereas the Case \tbf{II} corresponds to the minimum one.
Other possible cases, corresponding to different values of the indices $(\sigma_a, r_a)$ are left for future work.

\begin{remark}

Definition \ref{def:PS} can be regarded as a generalization of the approach to partial separability developed in \cite{DPAMPA1896}, \cite{StAMPA1897}, \cite{DKNSIGMA2019}, \cite{HavasJMP1976}, \cite{CR2019SIGMA},  etc. By solving, when possible, the PDEs \eqref{eq:gSk}, one obtains a complete integral shared by the HJ equations of $(H_1,\ldots, H_n)$ which is partially additively separated, i.e., of the form
\begin{equation}\label{eq:CI}
W=\sum_{a=1}^v W_a (\boldsymbol{q}^a; h_1,\ldots,h_n),\qquad \det \bigg( \frac{\partial^2 W}{\partial q^i\partial h_j} \bigg)\neq 0.
\end{equation}

For instance, this is guaranteed  in Case \textbf{I}, under the hypotheses of the following result.

\end{remark}

\begin{theorem}\ \label{th:IPS}
Let $(H_1,\ldots,H_n)$ be a family of $n$ independent functions satisfying the GSE \eqref{eq:gSk} with $r_a=\sigma_a$, and $\mathcal{L}_{\bs{h}}$ be the foliation defined by the level sets $\mathcal{L}_{\bs{h}}:=\{H_1=h_1,\ldots,H_n=h_n\}$. Assuming that
\begin{equation}\label{eq:Jacp}
\det\left[\frac{\partial \boldsymbol{\phi} }{\partial{\boldsymbol{p} }}\right] \neq 0 \ ,
\end{equation} 
the Hamiltonian system $(H_1,\ldots,H_n)$ admits a complete integral of the form \eqref{eq:CI} if and only if the $n \times n$ matrix
\begin{equation} \label{eq:SymPhi}
\left[\frac{\partial \boldsymbol{\phi} } {\partial{\boldsymbol{p} }}\right]^{-1}_{\big|_{{\mathcal{L}}_{\bs{h}}}}
\left[\frac{\partial \boldsymbol{\phi} }{\partial{\boldsymbol{q} }}\right]_{\big|_{{\mathcal{L}}_{\bs{h}}}}=\left[\frac{\partial{\boldsymbol{p}}}{\partial{\boldsymbol{q}}}\right]_{\big|_{{\mathcal{L}}_{\bs{h}}}}
\end{equation}
is a symmetric matrix. Under this assumption, the functions $(H_1, \ldots, H_n)$ are in involution with each other. Vice versa, if there exists a complete integral of the HJ equations of the form \eqref{eq:CI}, there exist $m$ separation equations of the form \eqref{eq:gSk}.
\end{theorem}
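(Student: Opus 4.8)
The plan is to exploit the block structure of the separation equations \eqref{eq:gSk} that is special to Case \textbf{(I)}. Since each $\phi_{a,j_a}$ depends only on the pair $(\boldsymbol{q^a},\boldsymbol{p_a})$ and on $\boldsymbol{H}$, and since $r_a=\sigma_a$, the Jacobian matrices $[\partial\boldsymbol{\phi}/\partial\boldsymbol{p}]$ and $[\partial\boldsymbol{\phi}/\partial\boldsymbol{q}]$ are \emph{block-diagonal}, with $v$ square diagonal blocks of sizes $\sigma_a\times\sigma_a$. The hypothesis \eqref{eq:Jacp} forces each diagonal block to be invertible, so applying the implicit function theorem blockwise on the leaf $\mathcal{L}_{\bs{h}}$ (where $\boldsymbol{H}=\boldsymbol{h}$ is frozen) solves the $a$-th group of equations for the momenta $\boldsymbol{p_a}=\boldsymbol{p_a}(\boldsymbol{q^a};\boldsymbol{h})$, \emph{as a function of $\boldsymbol{q^a}$ alone}. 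This decoupling is the structural core of the statement: it is what will turn a single generating function into an additively separated one.

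Next I would record the identity obtained by differentiating $\boldsymbol{\phi}(\boldsymbol{q},\boldsymbol{p}(\boldsymbol{q};\boldsymbol{h});\boldsymbol{h})=\boldsymbol{0}$ with respect to $\boldsymbol{q}$ at fixed $\boldsymbol{h}$, namely
\[
\left[\frac{\partial\boldsymbol{\phi}}{\partial\boldsymbol{p}}\right]\left[\frac{\partial\boldsymbol{p}}{\partial\boldsymbol{q}}\right]+\left[\frac{\partial\boldsymbol{\phi}}{\partial\boldsymbol{q}}\right]=\boldsymbol{0},
\]
which identifies the matrix of \eqref{eq:SymPhi} with $[\partial\boldsymbol{p}/\partial\boldsymbol{q}]|_{\mathcal{L}_{\bs{h}}}$ (up to an overall sign irrelevant to symmetry). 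The forward implication is then immediate: an additively separated complete integral \eqref{eq:CI} yields $p_{a,i_a}=\partial W/\partial q^{a,i_a}$, so $[\partial\boldsymbol{p}/\partial\boldsymbol{q}]$ is a Hessian and hence symmetric. For the converse, symmetry of this matrix is exactly the closedness of the restricted Liouville one-form $\theta=\sum_i p_i(\boldsymbol{q};\boldsymbol{h})\,\rd q^i$ on $\mathcal{L}_{\bs{h}}$; by the blockwise dependence, $\theta=\sum_a\theta_a$ with $\theta_a=\sum_{i_a}p_{a,i_a}(\boldsymbol{q^a};\boldsymbol{h})\,\rd q^{a,i_a}$, and $\rd\theta$ carries only within-block components, so closedness holds block by block. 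Local primitives $W_a(\boldsymbol{q^a};\boldsymbol{h})$ then assemble into $W=\sum_a W_a$, and differentiating $\boldsymbol{\phi}=0$ in $\boldsymbol{h}$ gives $[\partial\boldsymbol{p}/\partial\boldsymbol{h}]=-[\partial\boldsymbol{\phi}/\partial\boldsymbol{p}]^{-1}[\partial\boldsymbol{\phi}/\partial\boldsymbol{H}]$, invertible by the conditions in \eqref{eq:gSk}, so $W$ satisfies \eqref{eq:CHJ} and is a genuine complete integral.

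For the involution claim I would show that symmetry of $[\partial\boldsymbol{p}/\partial\boldsymbol{q}]|_{\mathcal{L}_{\bs{h}}}$ makes every leaf $\mathcal{L}_{\bs{h}}$ a \emph{Lagrangian} submanifold. Writing $\mathcal{L}_{\bs{h}}$ as the graph $p_i=p_i(\boldsymbol{q};\boldsymbol{h})$ with tangent frame $e_j=\partial_{q^j}+\sum_i(\partial p_i/\partial q^j)\partial_{p_i}$, a direct computation gives $\omega(e_j,e_k)=\partial p_k/\partial q^j-\partial p_j/\partial q^k$, which vanishes precisely under the hypothesis. Vertical independence follows from $[\partial\boldsymbol{H}/\partial\boldsymbol{p}]=-[\partial\boldsymbol{\phi}/\partial\boldsymbol{H}]^{-1}[\partial\boldsymbol{\phi}/\partial\boldsymbol{p}]$ being invertible. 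Since $\omega(X_{H_j},v)=\rd H_j(v)=0$ for every $v\in T\mathcal{L}_{\bs{h}}$, each $X_{H_j}$ lies in $(T\mathcal{L}_{\bs{h}})^{\perp_\omega}=T\mathcal{L}_{\bs{h}}$ by the Lagrangian property, whence $\{H_i,H_j\}=\rd H_i(X_{H_j})=0$.

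The step I expect to be most delicate is the converse direction: one must argue carefully that the blockwise dependence $\boldsymbol{p_a}=\boldsymbol{p_a}(\boldsymbol{q^a};\boldsymbol{h})$ upgrades mere exactness of $\theta|_{\mathcal{L}_{\bs{h}}}$ to genuine additive separation, that is, that the off-diagonal blocks of $[\partial\boldsymbol{p}/\partial\boldsymbol{q}]$ vanish identically and play no role, so that the primitive splits as a sum over the coordinate blocks. The remaining steps amount to routine implicit differentiation together with the standard symplectic-geometric characterization of Lagrangian graphs.
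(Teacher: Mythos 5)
Your proposal is correct and follows essentially the same route as the paper: solve the separation equations for the momenta blockwise, identify the matrix in \eqref{eq:SymPhi} with $\left[\partial\boldsymbol{p}/\partial\boldsymbol{q}\right]$ on the leaf (you rightly note the sign is immaterial for symmetry), observe that symmetry is exactly the integrability/closedness condition yielding the additively separated primitive, and conclude involution from the leaf being a Lagrangian graph. You merely spell out more explicitly two points the paper leaves implicit, namely that the block dependence $\boldsymbol{p_a}=\boldsymbol{p_a}(\boldsymbol{q^a};\boldsymbol{h})$ forces the primitive to split as $\sum_a W_a$, and the standard Lagrangian-graph argument for $\{H_i,H_j\}=0$.
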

\begin{proof}
Condition \eqref{eq:Jacp} allows us  to solve the GSE \eqref{eq:gSk} with respect to $\boldsymbol{p}$ and to get a system of $n$ partially separated first-order PDEs for the characteristic Hamiltonian function $W(\boldsymbol{q};h_1,\ldots,h_n$), which read
\beq \label{eq:sistW}
\frac{\partial{W}}{\partial{q^{a,j_a}}}=p_{a,j_a}(q^{a,1},\ldots,q^{a,\sigma_{a}};h_1,\ldots,h_n) \qquad a=1,\ldots,v, \quad j_a=1,\ldots,\sigma_a  \ .
\eeq
The integrability of the latter system  is equivalent  to the  compatibility conditions 
\beq \label{eq:compat}
\frac{\partial{p_{b,j_b}}}{\partial{q^{a,j_a}}}=\frac{\partial{p_{a,j_a}}}{\partial{q^{b,j_b}}} \ .
\eeq
In turn, such conditions are equivalent to the symmetry of the matrix product in the l.h.s. of Eq. \eqref{eq:SymPhi}. In fact, from the GSE \eqref{eq:gSk} it follows that for each $k=1,\ldots,n$,
\beq \label{eq:3.7}
0=\rd \phi_{{k}_{|_{{\mathcal{L}_{\bs{h}}}}}}=\sum_{i=1}^n \left(\frac{\partial \phi_k}{\partial q^i}\rd q^i+\frac{\partial \phi_k}{\partial p_i} \rd p_i \right)_{|_{{\mathcal{L}_{\bs{h}}}}} =
\sum_{i=1}^n \bigg(\frac{\partial \phi_k}{\partial q^i}+\sum_{j=1}^n\frac{\partial \phi_k}{\partial p_j} \frac{\partial p_j}{\partial q^i}\bigg)\rd q^i _{|_{{\mathcal{L}_{\bs{h}}}}}\ .
\eeq
Thanks to condition \eqref{eq:Jacp}, Eqs. \eqref{eq:3.7} can be solved for the matrix $\left[\frac{\partial{\boldsymbol{p}}}{\partial{\boldsymbol{q}}}\right]_{\big|_{{\mathcal{L}}_{\bs{h}}}}$, which gives us Eq. \eqref{eq:SymPhi}. Thus,  the functions $(H_1,\ldots, H_n)$ are in involution with each other.  Consequently, $\mathcal{L}_{\bs{h}}$ is a Lagrangian foliation. 

Conversely, if there exists a complete integral of the form \eqref{eq:CI}, then the set of equations 
\beq \label{eq:sistW}
p_{a,j_a}= \frac{\partial{W_a}}{\partial{q^{a,j_a}}}(q^{a,1},\ldots,q^{a,\sigma_{a}};h_1,\ldots,h_n) \qquad a=1,\ldots,v, \quad j_a=1,\ldots,\sigma_a  \eeq
are generalized separation equations of the form \eqref{eq:gSk},  and satisfy the condition ${\rm  det}\left[\frac{\partial \boldsymbol{\phi}}{\partial \boldsymbol{H} }\right] \neq 0$ as a consequence of relation \eqref{eq:CHJ}.

\end{proof}
\begin{remark}
In the case of totally separable systems, the condition of symmetry of matrix \eqref{eq:SymPhi} is obviously fulfilled as both the Jacobian matrices of $\boldsymbol{\phi}$ with respect to $\boldsymbol{q}$ and $\boldsymbol{p}$ are diagonal.
\end{remark}

\subsection{Partial separability and Benenti-type conditions}

We wish to show that the notion of partial separability for completely integrable Hamiltonian systems can be characterized by means of a generalization of Benenti's relations of $\mathcal{T}$-involution \eqref{eq:SI}.

\begin{definition}[$\mathcal{P}$-involution]\label{def:BP} A pair  of functions $(H_\alpha,H_\beta)$ in involution will be said  to be  in partially separated involution (or $\mathcal{P}$-involution) with respect to the set of coordinates $(\boldsymbol{q}^1,\boldsymbol{p}_1,\ldots, \boldsymbol{q}^v,\boldsymbol{p}_v)$, if the reduction of the Poisson brackets to each subset of coordinates 
$(\boldsymbol{q}^a,\boldsymbol{p}_a)$  vanishes:
\begin{equation} \label{eq:BR}
\{H_\alpha,H_\beta\}\lvert_{a}=\sum_{j_a=1}^{ \sigma_a}\left(\frac{\partial H_\alpha}{\partial q^{a,j_a}}
\frac{\partial H_\beta}{\partial p_{a,ja}}-\frac{\partial H_\alpha}{\partial p_{a,ja}}
\frac{\partial H_\beta}{\partial q^{a,ja}}\right)=0 \, , \quad  a=1,\ldots , v \, . 
\end{equation}
\end{definition}
In particular, if each addend in the above sums vanishes, we recover Benenti's case of $\mathcal{T}$-involution.

\vspace{2mm}

The previous definition is motivated by the following
\begin{theorem} \label{theo:5}
Given a Hamiltonian function $H$,  a system of Darboux coordinates $(\boldsymbol{q}^1,\boldsymbol{p}_1,\ldots,\boldsymbol{q}^v, \boldsymbol{p}_v)$ partially separates the complete integral W  \eqref{eq:CI} if and only if there exists a family of independent functions $(H_1, \ldots, H_n)$ which are vertically independent (i.e., they fulfill Eq. \eqref{eq:Hiv}) and in $\mathcal{P}$-involution with each other and with $H$.
\end{theorem}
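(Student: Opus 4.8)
The plan is to reduce the whole statement to a single object, the Jacobian of the transversal Lagrangian foliation, and to recognise each notion in play (involution, $\mathcal{P}$-involution, partial separation) as an algebraic condition on that matrix. First I would exploit the vertical independence \eqref{eq:Hiv} to solve locally the equations $H_\alpha(\boldsymbol{q},\boldsymbol{p})=h_\alpha$ for the momenta, $p_j=\psi_j(\boldsymbol{q};\boldsymbol{h})$, producing the foliation $\mathcal{L}_{\boldsymbol{h}}$. Writing $\boldsymbol{M}_q:=[\partial H_\alpha/\partial q^{k}]$, $\boldsymbol{M}_p:=[\partial H_\alpha/\partial p_{k}]$ (with $\det\boldsymbol{M}_p\neq0$) and $\boldsymbol{R}:=[\partial p_j/\partial q^{k}]$, implicit differentiation of $H_\alpha(\boldsymbol{q},\boldsymbol{\psi}(\boldsymbol{q};\boldsymbol{h}))=h_\alpha$ yields $\boldsymbol{M}_q+\boldsymbol{M}_p\boldsymbol{R}=\boldsymbol{0}$, so $\boldsymbol{R}=-\boldsymbol{M}_p^{-1}\boldsymbol{M}_q$, which is exactly the matrix \eqref{eq:SymPhi}. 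Since $\{H_\alpha,H_\beta\}=(\boldsymbol{M}_q\boldsymbol{M}_p^{T}-\boldsymbol{M}_p\boldsymbol{M}_q^{T})_{\alpha\beta}$, ordinary involution is equivalent to the symmetry of $\boldsymbol{R}$; and, by Theorem \ref{th:IPS}, a partially separated complete integral $W=\sum_a W_a(\boldsymbol{q^a};\boldsymbol{h})$ of the form \eqref{eq:CI} exists precisely when $\boldsymbol{R}$ is moreover block-diagonal with respect to the Darboux blocks (equivalently, $p_{a,j_a}$ depends only on $\boldsymbol{q^a}$).

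The core of the argument is the rewriting of the block brackets \eqref{eq:BR}. Let $\boldsymbol{\Pi}_a$ be the $0/1$-diagonal orthogonal projector onto the block-$a$ coordinates. A direct expansion gives
\[
\{H_\alpha,H_\beta\}|_{\boldsymbol{a}}=\bigl(\boldsymbol{M}_q\boldsymbol{\Pi}_a\boldsymbol{M}_p^{T}-\boldsymbol{M}_p\boldsymbol{\Pi}_a\boldsymbol{M}_q^{T}\bigr)_{\alpha\beta},
\]
so $\mathcal{P}$-involution in block $\boldsymbol{a}$ amounts to the symmetry of $\boldsymbol{M}_q\boldsymbol{\Pi}_a\boldsymbol{M}_p^{T}$. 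Substituting $\boldsymbol{M}_q=-\boldsymbol{M}_p\boldsymbol{R}$ and cancelling the invertible factors $\boldsymbol{M}_p$ and $\boldsymbol{M}_p^{T}$, this symmetry becomes $\boldsymbol{R}\boldsymbol{\Pi}_a=\boldsymbol{\Pi}_a\boldsymbol{R}^{T}$; and once the symmetry of $\boldsymbol{R}$ is used (legitimate, since summing \eqref{eq:BR} over $a$ shows that $\mathcal{P}$-involution forces ordinary involution) it collapses to the commutation relation $[\boldsymbol{R},\boldsymbol{\Pi}_a]=\boldsymbol{0}$. As the $\boldsymbol{\Pi}_a$ form a complete system of mutually orthogonal coordinate projectors, $[\boldsymbol{R},\boldsymbol{\Pi}_a]=\boldsymbol{0}$ for every $a$ is equivalent to $\boldsymbol{R}$ being block-diagonal.

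It then remains to assemble the two implications. For ($\Leftarrow$), given a family in $\mathcal{P}$-involution (the $\mathcal{P}$-involution with $H$ being automatic once $H$ depends on $H_1,\dots,H_n$), involution yields the symmetry of $\boldsymbol{R}$, the core computation yields its block-diagonality, and integrating the block-separated, symmetric momenta produces the separated $W$; equivalently one sets $\phi_{a,j_a}:=p_{a,j_a}-\psi_{a,j_a}(\boldsymbol{q^a};\boldsymbol{H})$, whose Jacobian condition $\det[\partial\boldsymbol{\phi}/\partial\boldsymbol{H}]=\det(-\boldsymbol{M}_p^{-1})\neq0$ holds, so Definition \ref{def:PS} is met. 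For ($\Rightarrow$), one extracts the integrals $H_i$ by solving the separation equations \eqref{eq:gSk} for $\boldsymbol{H}$; vertical independence follows since $\boldsymbol{M}_p=-[\partial\boldsymbol{\phi}/\partial\boldsymbol{H}]^{-1}[\partial\boldsymbol{\phi}/\partial\boldsymbol{p}]$ is invertible, while the existence of the block-separated generating function $W=\sum_a W_a(\boldsymbol{q^a};\boldsymbol{h})$ makes $\boldsymbol{R}$ block-diagonal and symmetric, whence the core computation returns $\mathcal{P}$-involution.

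The step I expect to be the main obstacle is the core computation, specifically guaranteeing that the $v$ block-symmetry conditions genuinely annihilate the off-diagonal blocks of $\boldsymbol{R}$ rather than impose a weaker symmetry. Both hypotheses are indispensable: the invertibility of $\boldsymbol{M}_p$ (vertical independence) is what allows the cancellation leading to a relation purely in $\boldsymbol{R}$, while the symmetry of $\boldsymbol{R}$ (ordinary involution, itself a consequence of $\mathcal{P}$-involution) is what upgrades $\boldsymbol{R}\boldsymbol{\Pi}_a=\boldsymbol{\Pi}_a\boldsymbol{R}^{T}$ to a genuine commutation; dropping either leaves only the insufficient requirement that $\boldsymbol{M}_q\boldsymbol{\Pi}_a\boldsymbol{M}_p^{T}$ be symmetric.
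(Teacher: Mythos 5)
Your proposal is correct and follows essentially the same route as the paper: your matrix $\boldsymbol{R}=-\boldsymbol{M}_p^{-1}\boldsymbol{M}_q$ is the paper's Hessian $\boldsymbol{\Gamma}$ of $W$ restricted to the Lagrangian leaves, and your core identity $\{H_\alpha,H_\beta\}|_{\boldsymbol{a}}=-\boldsymbol{M}_p(\boldsymbol{R}\boldsymbol{\Pi}_a-\boldsymbol{\Pi}_a\boldsymbol{R}^T)\boldsymbol{M}_p^T$ is exactly the paper's $\boldsymbol{P}=-\bs{\mathcal{J}}[\boldsymbol{\Gamma},\boldsymbol{\Pi_a}]\bs{\mathcal{J}}^T$, with the same use of vertical independence to cancel the invertible factors and the same equivalence between $[\boldsymbol{R},\boldsymbol{\Pi}_a]=\boldsymbol{0}$ for all $a$ and block-diagonality of the Hessian, i.e.\ partial separation of $W$. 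Your explicit remark that $\mathcal{P}$-involution implies ordinary involution (hence symmetry of $\boldsymbol{R}$) is a point the paper leaves implicit by starting from the complete integral $W$, but it is not a different argument.
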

\begin{proof} 
As is well known, the existence of a complete integral  $W$ of the HJ equation of $H$ is equivalent to the existence of a family of $n$ independent functions $(H_1, \ldots, H_n)$, vertically independent and in involution with each other and with $H$. We will show that $W$ is additively separated if and only if the functions $(H_1,\ldots,H_n)$
are in $\mathcal{P}$-involution. By differentiating Eqs. \eqref{eq:Hi} with respect to the coordinates $(q^1,\ldots,q^n)$ and taking into account Eqs. \eqref{eq:pLF}, one gets 
\begin{equation} \label{eq:dHq}
\frac{\partial{H}_\alpha}{\partial{q_j}}+\sum_{k=1}^n\frac{\partial{H}_\alpha}{\partial{p_k}}\frac{\partial^2 W}{\partial q^j\partial q^k} =0,
\qquad\qquad \alpha,j=1,\ldots,n \ .
\end{equation}
We shall denote by $\Gamma_{jk}=\frac{\partial^2 W}{\partial q^j\partial q^k}$ the elements of the Hessian matrix of $W$.
Let us  write down the \textit{partial} Poisson brackets of two functions $(H_\alpha,H_\beta)$ in matrix form:
\beq \label{eq:3.8}
\{H_\alpha,H_\beta\}\lvert_{a}=
\left\langle\frac{\partial H_\alpha}{\partial \boldsymbol{q}^{a}}, \frac{\partial H_\beta}{\partial\boldsymbol{p}_{a}}\right\rangle
-\left\langle\frac{\partial H_\beta}{\partial\boldsymbol{q}^{a}},\frac{\partial H_\alpha}{\partial \boldsymbol{p}_{a}}\right\rangle=
\left\langle\frac{\partial H_\alpha}{\partial \boldsymbol{q}}~\vline~ \boldsymbol{\Pi}_a ~\vline~
\frac{\partial H_\beta}{\partial\boldsymbol{p}}\right\rangle
-\left\langle\frac{\partial H_\beta}{\partial\boldsymbol{q}}~\vline~ \boldsymbol{\Pi}_a ~\vline \frac{\partial H_\alpha}{\partial \boldsymbol{p}}\right\rangle
\eeq
where  $ \boldsymbol{\Pi}_a$ are the $n\times n$ block-diagonal matrices given by
\beq
 \boldsymbol{\Pi}_a = \text{diag}~[\bs{0},\ldots, \boldsymbol{I}_{\sigma_a} ,\ldots, \bs{0}]
 \eeq
 with $v$ blocks $\sigma_a \times \sigma_a$. 
By solving Eq. \eqref{eq:dHq} for $\frac{\partial H_\alpha}{\partial \boldsymbol{q}^{a}}$ and substituting it into Eq. \eqref{eq:3.8},  thanks to the symmetry of $\boldsymbol{\Gamma}$ and $\boldsymbol{\Pi}_a$, we get 
\begin{equation} \label{eq:PBa}
\{H_\alpha,H_\beta\}\lvert_{a}=
\left\langle-\frac{\partial H_\alpha}{\partial \boldsymbol{p}}\vline ~\boldsymbol{\Gamma} \, \bs{\Pi}_a ~\vline~
\frac{\partial H_\beta}{\partial\boldsymbol{p}}\right\rangle
+\left\langle\frac{\partial H_\beta}{\partial\boldsymbol{p}}\vline~\boldsymbol{\Gamma}\, \boldsymbol{\Pi}_a ~\vline~ \frac{\partial H_\alpha}{\partial \boldsymbol{p}}\right\rangle=-
\left\langle\frac{\partial H_\alpha}{\partial\boldsymbol{p}}\vline~ \Big[\boldsymbol{\Gamma}\ ,\boldsymbol{\Pi}_a \Big ] ~\vline~\frac{\partial H_\beta}{\partial \boldsymbol{p}}\right\rangle
\end{equation}
for $a = 1, \ldots, v$. For each $\alpha,\beta=1,\ldots,n$, we can collect together the partial Poisson brackets  \eqref{eq:PBa} to form an $n\times n$ matrix $\boldsymbol{P}_{a}$. By introducing  the Jacobian matrix $\bs{\mathcal{J}}$ of the functions $(H_1,\ldots,H_n)$ with respect to the momenta $(p_1,\ldots,p_n)$, namely, $[\bs{\mathcal{J}}]_{\alpha k}:=\frac{\partial H_\alpha}{\partial p_k}$,
 we can rewrite Eqs.   \eqref{eq:PBa}  in the following matrix form
 \beq
 \bs{P}_{a}=-\bs{\mathcal{J}} \,\Big[\boldsymbol{\Gamma}\ , \boldsymbol{\Pi}_a \Big ] \,\bs{\mathcal{J}}^T \ .
\eeq

Thus, the condition of $\mathcal{P}$-involution with respect to the set of coordinates $\boldsymbol{q}^a$ is equivalent to the vanishing of the matrix function $ \boldsymbol{P}_{a}$.   In turn, due to the vertical independence \eqref{eq:Hiv} of the integrals of motion $(H_1,\ldots,H_n)$, such a matrix vanishes if and only if $\boldsymbol{\Gamma}$ commutes with $ \boldsymbol{\Pi}_a$, that is 
\begin{equation} \label{eq:comm0}
\Big[\boldsymbol{\Gamma} \ ,  \boldsymbol{\Pi}_a \Big ]=\boldsymbol{0}_n,  \qquad  a = 1, \ldots, v \,\ .
\end{equation}
 Taking into account that: $(i)$ the commutator in the left-hand side of Eq. \eqref{eq:comm0} has possibly non-vanishing elements only in the rows $a, 1; \ldots; a, \sigma_{a}$ and columns $a, 1; \ldots; a, \sigma_{a}$ (each of them containing $n$ elements), with a null $\sigma_a \times \sigma_{a}$ square block defined by the intersection of these rows and columns, and $(ii)$ Eq. \eqref{eq:comm0}  must hold for $a = 1, \ldots, v $, we can conclude that the condition of $\mathcal{P}$-involution is equivalent to the fact that the Hessian matrix  $\boldsymbol{\Gamma}$ has the block-diagonal structure
\beq
\boldsymbol{\Gamma}= \textrm{diag}~[\boldsymbol{\Gamma}_{1},\ldots,\boldsymbol{\Gamma}_{v}] \ ,
\eeq
where $\boldsymbol{\Gamma}_{a}$ are symmetric blocks $\sigma_a \times \sigma_a$, $a = 1, \ldots, v$. In turn, this is equivalent to 
the fact that $W$ is partially separated in $v$ addends, as in Eq. \eqref{eq:CI}. 

Therefore, by substituting $W$ into the HJ equations associated with $H$, we obtain the partial separation of this equation.
\end{proof}

 \section{Non-semisimple Haantjes algebras and partial separability} \label{sec:4}

\subsection{Darboux-Haantjes coordinates and partial separability} A crucial step for the application of our geometric theory of partial separability is the construction of suitable sets of Darboux-Haantjes coordinates. In \cite{TT2022AMPA}, we proved that, given an integrable Hamiltonian system, if we associate it with an Abelian, \textit{semisimple} Haantjes algebra of maximal rank, then we can totally separate the HJ equations in the Darboux-Haantjes coordinates diagonalizing the algebra. The results of this section offer a new, geometric interpretation of the \textit{non-semisimple} scenario (Case \textbf{I} of Definition \ref{def:PS}). Precisely, if an integrable system admits an Abelian, non-semisimple Haantjes algebra, then the DH variables ensuring block-diagonalization of the operators of the algebra also allow us to partially separate the related HJ equations. 

\begin{theorem}[$\mathcal{P}$-involution] \label{th:Nonsemi}
Let $M$ be an Abelian $2n$-dimensional (possibly non-semisimple)
$\omega \mathscr{H}$ manifold of class $m\leq n$ and $( H_{1}, ..., H_{m})$ be a family of $C^{\infty} (M)$ functions belonging to a Haantjes chain generated by a function $H $, via a basis of Haantjes operators $\{ \boldsymbol{K}_{1}, ..., \boldsymbol{K}_{m} \} \in \mathscr{H}$. Then, the $m$ functions  $( H_{1}, ..., H_{m})$ are in $\mathcal{P}$-involution with respect to   any set $(\boldsymbol{q}, \boldsymbol{p})$ of DH coordinates.
\end{theorem}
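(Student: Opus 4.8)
The plan is to compute everything in a fixed chart of DH coordinates $(\boldsymbol{q^1},\boldsymbol{p_1},\ldots,\boldsymbol{q^v},\boldsymbol{p_v})$, where by Proposition \ref{th:HJ} every $\boldsymbol{K}_\alpha$ has the block-diagonal form \eqref{eq:KNS} and $\omega$ the Darboux form \eqref{eq:gDo}. The first observation is that the chain relation $\rd H_\alpha=\boldsymbol{K}_\alpha^T\,\rd H$ decouples across the blocks: since each $\boldsymbol{K}_\alpha$, hence each $\boldsymbol{K}_\alpha^T$, is block-diagonal, the restriction of $\rd H_\alpha$ to the cotangent directions of the $\boldsymbol{a}$-th block is obtained by applying the $\boldsymbol{a}$-block of $\boldsymbol{K}_\alpha^T$ to the restriction of $\rd H$ to the same block. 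Fixing $\boldsymbol{a}$ and writing $u:=\partial H/\partial\boldsymbol{q^a}$, $w:=\partial H/\partial\boldsymbol{p_a}$ (column vectors of length $\sigma_a$), I would read off from \eqref{eq:KNS} that $\partial H_\alpha/\partial\boldsymbol{q^a}=(\boldsymbol{A}_{\bs{a}}^{(\alpha)})^T u+(\boldsymbol{C}_{\bs{a}}^{(\alpha)})^T w$ and $\partial H_\alpha/\partial\boldsymbol{p_a}=(\boldsymbol{B}_{\bs{a}}^{(\alpha)})^T u+(\boldsymbol{D}_{\bs{a}}^{(\alpha)})^T w$, with the analogous identities for $\beta$.

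Next I would insert these expressions into the partial bracket \eqref{eq:BR}, written as $\{H_\alpha,H_\beta\}|_{\boldsymbol{a}}=\langle\partial H_\alpha/\partial\boldsymbol{q^a},\partial H_\beta/\partial\boldsymbol{p_a}\rangle-\langle\partial H_\alpha/\partial\boldsymbol{p_a},\partial H_\beta/\partial\boldsymbol{q^a}\rangle$. The bilinear expansion splits into four pieces: a pure $u^T(\cdot)u$ form, a pure $w^T(\cdot)w$ form, and two mixed forms of type $u^T(\cdot)w$ and $w^T(\cdot)u$. The statement then reduces to showing that each group vanishes identically in $u,w$, using only the algebraic identities already recorded for DH coordinates.

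For the two pure forms I would show that the relevant coefficient matrix is skew-symmetric, so the quadratic form is identically zero. Substituting the compatibility relations \eqref{eq:cOmNS}, namely $(\boldsymbol{D}_{\bs{a}}^{(\alpha)})^T=\boldsymbol{A}_{\bs{a}}^{(\alpha)}$ and the skew-symmetry of $\boldsymbol{B}_{\bs{a}}^{(\alpha)}$ and $\boldsymbol{C}_{\bs{a}}^{(\alpha)}$, the symmetric part of the $uu$-coefficient reduces (up to a nonzero factor) to the left-hand side of \eqref{eq:ABskew}, and that of the $ww$-coefficient to the left-hand side of \eqref{eq:ACskew}; both vanish, which is exactly the content packaged in Lemma \ref{lm:ABACskew}. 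For the two mixed terms I would transpose the $w^T(\cdot)u$ piece to recast it as $u^T(\cdot)w$; after again using $(\boldsymbol{D}_{\bs{a}})^T=\boldsymbol{A}_{\bs{a}}$ and the skew-symmetry of $\boldsymbol{B}_{\bs{a}}$, $\boldsymbol{C}_{\bs{a}}$, the two mixed contributions combine into the single form $u^T\big([\boldsymbol{A}_{\bs{a}}^{(\alpha)},\boldsymbol{A}_{\bs{a}}^{(\beta)}]+\boldsymbol{B}_{\bs{a}}^{(\alpha)}\boldsymbol{C}_{\bs{a}}^{(\beta)}-\boldsymbol{B}_{\bs{a}}^{(\beta)}\boldsymbol{C}_{\bs{a}}^{(\alpha)}\big)w$, whose matrix is precisely the left-hand side of the Abelian condition \eqref{eq:AAcomm} and hence vanishes. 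Since $\boldsymbol{a}\in\{\boldsymbol{1},\ldots,\boldsymbol{v}\}$ was arbitrary, all partial brackets \eqref{eq:BR} are zero and $(H_1,\ldots,H_m)$ are in $\mathcal{P}$-involution.

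I expect the only real obstacle to be bookkeeping rather than conceptual: the transposes must be tracked carefully when passing from $\boldsymbol{K}_\alpha$ to $\boldsymbol{K}_\alpha^T$ and when symmetrizing the mixed terms, and in each of the four groups exactly the right combination of \eqref{eq:cOmNS} with \eqref{eq:AAcomm}, \eqref{eq:ABskew}, \eqref{eq:ACskew} has to be invoked. A preliminary point that should be stated explicitly is the block-decoupling of the chain relation, which rests on the simultaneous block-diagonalization of Proposition \ref{th:HJ} together with the algebraic compatibility \eqref{eq:compOmH}; once this is granted, the argument is an entirely block-local linear-algebra computation.
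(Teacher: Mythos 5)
Your proposal is correct and follows essentially the same route as the paper's own proof: write the chain relations block-wise in DH coordinates using \eqref{eq:KNS}, expand the partial bracket into the $uu$, $ww$ and mixed bilinear pieces (the paper's Eq. \eqref{eq:4.6} and Appendix A), and kill the mixed piece via \eqref{eq:AAcomm} together with the skew-symmetry of $\boldsymbol{B}_{\bs{a}}^{(\alpha)}$, $\boldsymbol{C}_{\bs{a}}^{(\alpha)}$, and the pure pieces via the skew-symmetry recorded in Lemma \ref{lm:ABACskew} contracted against symmetric tensors. No gaps; the bookkeeping of transposes you flag is indeed the only delicate point, and the paper handles it exactly as you describe.
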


\begin{proof}

Let $\{ U, (q^{a, j_{a}}, p_{a, j_{a}}) \}, \quad a = 1, ...,v, \quad j_{a} = 1, ..., \sigma_{a} = \frac{1}{2} \, \text{rank} \, (\mathcal{V}_{a})$ be a Darboux-Haantjes chart adapted to the spectral decomposition \eqref{eq:TVa}, where 
\begin{equation}
\mathcal{V}_{a} (\boldsymbol{x}) := \mathcal{D}_{i_{1}}^{(1)} (\boldsymbol{x}) \cap ... \cap \mathcal{D}_{i_{m}}^{(m)} (\boldsymbol{x}) \equiv \text{Span} \Big\{ \dfrac{\partial}{\partial q^{a, j_{a}}}, \dfrac{\partial}{\partial p_{a, j_{a}}} \Big\} .
\end{equation}

We recall that
\begin{equation}
\mathcal{D}_{i_{\alpha}}^{(\alpha)} (\boldsymbol{x}) := \text{ker} (\boldsymbol{K}_{\alpha} - \lambda^{(\alpha)}_{i_{\alpha}} \bs{I})^{\rho_{i_{\alpha}}^{(\alpha)}} (\boldsymbol{x}), \qquad \alpha = 1, ..., m, \quad i_{\alpha} = 1, ..., s_{\alpha},
\end{equation}
where $\rho_{i_{\alpha}}^{(\alpha)}$ is the Riesz index of the eigenvalue $\lambda^{(\alpha)}_{i_{\alpha}}$ and  $s_{\alpha}$ is the number of distinct eigenvalues of $\boldsymbol{K}_{\alpha}$. Due to Eq. \eqref{eq:KNS}, the corresponding chain equations $\rd H_{\alpha} = \boldsymbol{K}_{\alpha}^{T} \rd H$ are
\begin{align}
& \dfrac{\partial H_{\alpha}}{\partial q^{a, j_{a}}} = \sum_{k_{a} = 1}^{\sigma_{a}}
\left( \dfrac{\partial H}{\partial q^{a, k_{a}}} 
 \left( 
  \bs{A}_{a}^{(\alpha)}(\boldsymbol{q}, \boldsymbol{p})
 \right)
 ^{ a, k_{a}}_{ a, j_{a}} 
+
 \dfrac{\partial H}{\partial p_{a, k_{a}}} 
 \left( 
  \bs{C}_{a}^{(\alpha)}(\boldsymbol{q}, \boldsymbol{p})
 \right)
 ^{ a, k_{a}}_{ a, j_{a}} 
 \right),
 \\
&
 \dfrac{\partial H_{\alpha}}{\partial p_{a, j_{a}}} = \sum_{k_{a} = 1}^{\sigma_{a}} 
 \left(\dfrac{\partial H}{\partial q^{a, k_{a}}} \left ( \bs{B}_{a}^{(\alpha)}(\boldsymbol{q}, \boldsymbol{p})\right)^{a, k_{a}}_{a, j_{a}} 
+ 
\dfrac{\partial H}{\partial p_{a, k_{a}}} \left ( (\bs{A}_{a}^{(\alpha)}(\boldsymbol{q}, \boldsymbol{p}))^T\right)^{a, k_{a}}_{a, j_{a}}\right).
\end{align}

Therefore, the partial brackets involved in Benenti's theorem read
\begin{align}\label{eq:PBsingle}
\{ H_{\alpha}, H_{\beta} \} \big|_{{a},j_a}=&
\sum_{i_a,k_{a} = 1}^{\sigma_{a}}
\left( \dfrac{\partial H}{\partial q^{a, i_{a}}} 
 \left( 
  \bs{A}_{a}^{(\alpha)}(\boldsymbol{q}, \boldsymbol{p})
 \right)
 ^{ a, i_{a}}_{ a, j_{a}} 
+
 \dfrac{\partial H}{\partial p_{a, i_{a}}} 
 \left( 
  \bs{C}_{a}^{(\alpha)}(\boldsymbol{q}, \boldsymbol{p})
 \right)
 ^{ a, i_{a}}_{ a, j_{a}} 
 \right)\\ \nn
 &
 \left(\dfrac{\partial H}{\partial q^{a, k_{a}}} \left ( \bs{B}_{a}^{(\beta)}(\boldsymbol{q}, \boldsymbol{p})\right)^{a, k_{a}}_{a, j_{a}} 
+ 
\dfrac{\partial H}{\partial p_{a, k_{a}}} \left ( (\bs{A}_{a}^{(\beta)}(\boldsymbol{q}, \boldsymbol{p}))^T\right)^{a, k_{a}}_{a, j_{a}}
\right)\\ \nn
-
&\sum_{i_a,k_{a} = 1}^{\sigma_{a}}
\left( \dfrac{\partial H}{\partial q^{a, k_{a}}} 
 \left( 
  \bs{B}_{a}^{(\alpha)}(\boldsymbol{q}, \boldsymbol{p})
 \right)
 ^{ a, k_{a}}_{ a, j_{a}} 
+
 \dfrac{\partial H}{\partial p_{a, k_{a}}} 
 \left( 
  (\bs{A}_{a}^{(\alpha)}(\boldsymbol{q}, \boldsymbol{p}))^T
 \right)
 ^{ a, k_{a}}_{ a, j_{a}} 
 \right)\\ \nn
 &
 \left(\dfrac{\partial H}{\partial q^{a, i_{a}}} \left ( \bs{A}_{a}^{(\beta)}(\boldsymbol{q}, \boldsymbol{p})\right)^{a, i_{a}}_{a, j_{a}} 
+ 
\dfrac{\partial H}{\partial p_{a, i_{a}}} \left ( (\bs{C}_{a}^{(\beta)}(\boldsymbol{q}, \boldsymbol{p}))^T\right)^{a, i_{a}}_{a, j_{a}}
\right).
\end{align}
After a long but straightforward calculation (reported in Appendix \ref{ap:A}), we get
\begin{equation} \label{eq:4.6}
\begin{split}
&\{ H_{\alpha}, H_{\beta} \} \big|_{{a}}=
\sum_{i_a,k_a=1}^{\sigma_a }
\bigg(
\dfrac{\partial H}{\partial q^{a, i_a}}  \dfrac{\partial H}{\partial p_{a, k_{a}}} 
\left(
 \left[ \boldsymbol{A}^{(\alpha)}_{a}, \boldsymbol{A}^{(\beta)}_{a}\right ]^{i_a}_{k_a}-
\left( \boldsymbol{B}_{a}^{(\alpha)}(\boldsymbol{C}_{a}^{(\beta)})^T-
\boldsymbol{B}_{a}^{(\beta)}(\boldsymbol{C}_{a}^{(\alpha)})^T\right)^{i_a}_{k_a}
 \right)\\
 &+
\dfrac{\partial H}{\partial q^{a, i_a}}  \dfrac{\partial H}{\partial q^{a, k_{a}}} 
\left(\boldsymbol{A}_{a}^{(\alpha)} (\boldsymbol{B}_{a}^{(\beta)})^T-\boldsymbol{A}_{a}^{(\beta)}(\boldsymbol{B}_{a}^{(\alpha)})^T \right)^{i_a}_{k_a}
+
 \dfrac{\partial H}{\partial p_{a, i_a}}  \dfrac{\partial H}{\partial p_{a, k_{a}}} 
\left(\boldsymbol{C}_{a}^{(\alpha)} \boldsymbol{A}_{a}^{(\beta)}-\boldsymbol{C}_{a}^{(\beta)} \boldsymbol{A}_{a}^{(\alpha)}\right)^{i_a}_{k_a}
\bigg) .
\end{split}
\end{equation}
From the compatibility conditions  \eqref{eq:cOmNS} with $\boldsymbol{\Omega}$,  and   Eqs. \eqref{eq:AAcomm}, it follows that
\[
\dfrac{\partial H}{\partial q^{a, i_a}}  \dfrac{\partial H}{\partial p_{a, k_{a}}} 
\left(
 \left[ \boldsymbol{A}_{a}^{(\alpha)}, \boldsymbol{A}_{a}^{(\beta)}\right ]^{i_a}_{k_a}-
\left( \boldsymbol{B}_{a}^{(\alpha)}(\boldsymbol{C}_{a}^{(\beta)})^T-
\boldsymbol{B}_{a}^{(\beta)}(\boldsymbol{C}_{a}^{(\alpha)})^T\right)^{i_a}_{k_a}
 \right)=0 \, .
 \] 
 Moreover, from Lemma \ref{lm:ABACskew} we deduce that each of the two remaining addends also vanish. This completes the proof. 
\end{proof}
In the case of a semisimple $\omega \mathscr{H}$ manifold, we can deduce a direct consequence of the previous result.
\begin{corollary} 
If the Haantjes algebra $\mathscr{H}$ is semisimple, then the $m$ functions  $(H_{1},\ldots,H_m)$ are in $\mathcal{T}$-involution with respect to  any set $(\boldsymbol{q}, \boldsymbol{p})$ of DH coordinates, that is
\begin{equation} 
\{ H_{\alpha}, H_{\beta} \} \big|_{{a},j_a}=0, \qquad a=1,\ldots,v, \qquad j_a=1,\ldots,\sigma_a \ .
\end{equation}
\begin{proof}
As the algebra $\mathscr{H}$ is semisimple, the blocks $ \boldsymbol{B}_{a}^{(\alpha)}$ and $\boldsymbol{C}_{a}^{(\alpha)}$ vanish whereas the blocks $ \boldsymbol{A}_{a}^{(\alpha)}$ are all diagonal and have the eigenvalues 
$\lambda_{a}^{(\alpha)}$ of $\boldsymbol{K}_\alpha$ as diagonal elements, as can be ascertained from Eq. \eqref{eq:PBsingle}. 
\end{proof}
\end{corollary}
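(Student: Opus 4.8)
The plan is to exploit the drastic simplification that semisimplicity forces on the block decomposition \eqref{eq:KNS}. As recalled right after Lemma \ref{lm:ABACskew}, when $\mathscr{H}$ is semisimple the off-diagonal blocks vanish, $\boldsymbol{B}_{\bs{a}}^{(\alpha)}=\boldsymbol{C}_{\bs{a}}^{(\alpha)}=\boldsymbol{0}$, while each diagonal block is a multiple of the identity, $\boldsymbol{A}_{\bs{a}}^{(\alpha)}=l_{\bs{a}}^{(\alpha)}\boldsymbol{I}_{\sigma_a}$. The essential point is that this lets one establish the \emph{stronger} $\mathcal{T}$-involution, i.e.\ the vanishing of each single addend $\{H_\alpha,H_\beta\}\big|_{a,j_a}$, rather than merely of the block sum $\{H_\alpha,H_\beta\}\big|_{\boldsymbol{a}}$ governed by \eqref{eq:4.6}. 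For this reason I would work directly from the per-index formula \eqref{eq:PBsingle}, not from \eqref{eq:4.6}.

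First, I would insert $\boldsymbol{B}_{\bs{a}}^{(\alpha)}=\boldsymbol{C}_{\bs{a}}^{(\alpha)}=\boldsymbol{0}$ and $\boldsymbol{A}_{\bs{a}}^{(\alpha)}=l_{\bs{a}}^{(\alpha)}\boldsymbol{I}_{\sigma_a}$ into the chain equations preceding \eqref{eq:PBsingle}. The $\boldsymbol{C}$- and $\boldsymbol{B}$-terms drop out, and since $(\boldsymbol{A}_{\bs{a}}^{(\alpha)})^{a,k_a}_{a,j_a}=l_{\bs{a}}^{(\alpha)}\,\delta^{k_a}_{j_a}$ the remaining sums collapse through the Kronecker delta to the scalar relations
\begin{equation}
\f{\pa H_\alpha}{\pa q^{a,j_a}}=l_{\bs{a}}^{(\alpha)}\,\f{\pa H}{\pa q^{a,j_a}}\ ,\qquad\qquad \f{\pa H_\alpha}{\pa p_{a,j_a}}=l_{\bs{a}}^{(\alpha)}\,\f{\pa H}{\pa p_{a,j_a}}\ ,
\end{equation}
which simply record that on each block $\mathcal{V}_a$ the operator $\boldsymbol{K}_\alpha$ acts as the scalar $l_{\bs{a}}^{(\alpha)}$, so the chain equation $\rd H_\alpha=\boldsymbol{K}_\alpha^T\,\rd H$ makes $\rd H_\alpha$ and $\rd H$ proportional there.

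Second, I would substitute these proportionalities into the single partial bracket, obtaining at once
\begin{equation}
\{H_\alpha,H_\beta\}\big|_{a,j_a}=l_{\bs{a}}^{(\alpha)}\,l_{\bs{a}}^{(\beta)}\left(\f{\pa H}{\pa q^{a,j_a}}\f{\pa H}{\pa p_{a,j_a}}-\f{\pa H}{\pa p_{a,j_a}}\f{\pa H}{\pa q^{a,j_a}}\right)=0\ ,
\end{equation}
because the factor in parentheses is the termwise self-bracket of $H$, which vanishes identically. Equivalently, one may set $\boldsymbol{B}=\boldsymbol{C}=\boldsymbol{0}$ and $\boldsymbol{A}_{\bs{a}}^{(\alpha)}=l_{\bs{a}}^{(\alpha)}\boldsymbol{I}$ directly in \eqref{eq:PBsingle}: the two surviving products are each equal to $l_{\bs{a}}^{(\alpha)}l_{\bs{a}}^{(\beta)}\,(\pa H/\pa q^{a,j_a})(\pa H/\pa p_{a,j_a})$ and cancel. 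I do not anticipate any real obstacle, as this is a purely algebraic corollary of the semisimple normal form; the single point demanding care is conceptual rather than computational, namely that one must argue at the level of the individual indices $j_a$ via \eqref{eq:PBsingle}, since appealing only to the block-sum formula \eqref{eq:4.6} would yield the weaker $\mathcal{P}$-involution and miss the sharper $\mathcal{T}$-involution claim.
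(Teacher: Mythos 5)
Your proposal is correct and follows essentially the same route as the paper: the paper's own (one-line) proof likewise sets $\boldsymbol{B}_{\bs{a}}^{(\alpha)}=\boldsymbol{C}_{\bs{a}}^{(\alpha)}=\boldsymbol{0}$ and $\boldsymbol{A}_{\bs{a}}^{(\alpha)}=l_{\bs{a}}^{(\alpha)}\boldsymbol{I}_{\sigma_a}$ and appeals to the per-index formula \eqref{eq:PBsingle} rather than the block-sum \eqref{eq:4.6}, exactly as you argue. Your write-up merely makes explicit the collapse of the chain equations to $\rd H_\alpha = l_{\bs{a}}^{(\alpha)}\rd H$ on each block and the resulting termwise cancellation, which is the computation the paper leaves implicit.
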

The converse of this corollary  holds as well, as we prove in the following
\begin{theorem} [Jacobi-Haantjes] \label{th:SoVgLc}
Let $(M, \omega)$ be a symplectic manifold and $(H_1,H_2,\ldots,H_m)$ be  $m\leq n$ vertically independent $C^{\infty}(M)$  functions  which are   in $\mathcal{T}$-involution w.r.t. a set of Darboux coordinates $(\boldsymbol{q},\boldsymbol{p})$. Then, these functions belong to the (short) Haantjes  chains generated by the Haantjes operators 
 \begin{equation} \label{eq:KSoV}
\boldsymbol{K}_\alpha=\sum _{i=1}^n \frac{\frac{\partial H_{\alpha}}{\partial p_i}}{ \frac{\partial H}{\partial p_i}}\bigg (\frac{\partial}{\partial q^i}\otimes \rd q^i +\frac{\partial}{\partial p_i}\otimes \rd p_i \bigg ) , \qquad \alpha=1,\ldots,m \, ,
\end{equation}
where $H$ is any function depending on $(H_1, \ldots, H_m)$ and fulfilling $\frac{\partial H}{\partial p_i}\neq 0$, $i=1,\ldots,n$. These operators generate a semisimple $\omega \mathscr{H}$ structure on $M$ of rank $m$.
\end{theorem}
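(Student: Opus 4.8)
The plan is to exploit the fact that each candidate operator $\boldsymbol{K}_\alpha$ in \eqref{eq:KSoV} is \emph{diagonal} in the natural frame of the given Darboux chart $(\boldsymbol{q},\boldsymbol{p})$, assigning to the conjugate pair $(q^i,p_i)$ the common eigenvalue $\lambda_i^{(\alpha)}:=\big(\partial H_\alpha/\partial p_i\big)\big/\big(\partial H/\partial p_i\big)$. Everything then reduces to four verifications: (a) each $\boldsymbol{K}_\alpha$ is Haantjes; (b) the short chain relation $\rd H_\alpha=\boldsymbol{K}_\alpha^T\,\rd H$ holds; (c) $\boldsymbol{K}_\alpha$ is algebraically compatible with $\omega$; and (d) the $\boldsymbol{K}_\alpha$ generate a semisimple, Abelian Haantjes algebra of rank $m$. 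Point (a) is immediate: a $(1,1)$-tensor diagonal in a coordinate basis, as in \eqref{eq:Ldiagonal}, has vanishing Haantjes torsion, so each $\boldsymbol{K}_\alpha$ is automatically a Haantjes operator.

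The core step is (b). Since a tensor $\partial_{q^i}\otimes\rd q^i$ transposes to the covector map $\rd H\mapsto (\partial H/\partial q^i)\,\rd q^i$, one computes directly $\boldsymbol{K}_\alpha^T\,\rd H=\sum_i \lambda_i^{(\alpha)}\big(\tfrac{\partial H}{\partial q^i}\rd q^i+\tfrac{\partial H}{\partial p_i}\rd p_i\big)$. The momentum component equals $\sum_i \tfrac{\partial H_\alpha}{\partial p_i}\rd p_i$ by the very definition of $\lambda_i^{(\alpha)}$, so it reproduces $\rd H_\alpha$ with no further hypothesis. The position component agrees with $\sum_i \tfrac{\partial H_\alpha}{\partial q^i}\rd q^i$ precisely when $\lambda_i^{(\alpha)}\,\partial_{q^i}H=\partial_{q^i}H_\alpha$ for each $i$, i.e. when $\{H_\alpha,H\}\lvert_i=0$ with no sum on $i$. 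Writing $H=F(H_1,\dots,H_m)$ and using the linearity of the single-index bracket, $\{H_\alpha,H\}\lvert_i=\sum_\beta \tfrac{\partial F}{\partial H_\beta}\{H_\alpha,H_\beta\}\lvert_i$, which vanishes by the assumed $\mathcal{T}$-involution \eqref{eq:SI}. Hence $\boldsymbol{K}_\alpha^T\,\rd H=\rd H_\alpha$ is exact, so $\rd(\boldsymbol{K}_\alpha^T\,\rd H)=\boldsymbol{0}$ and the $H_\alpha$ form the (short) Haantjes chains generated by $H$ through the basis $\{\boldsymbol{K}_1,\dots,\boldsymbol{K}_m\}$.

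For (c) and the structural part of (d), the shared eigenvalue on each conjugate pair is exactly what is needed. Evaluating $\omega(\boldsymbol{K}_\alpha X,Y)$ and $\omega(X,\boldsymbol{K}_\alpha Y)$ on natural-frame vectors, the only nonzero contributions come from a pair $(\partial_{q^i},\partial_{p_i})$, and both sides collapse to $\lambda_i^{(\alpha)}\,\omega(\partial_{q^i},\partial_{p_i})$; thus \eqref{eq:compOmH} holds. Being simultaneously diagonal in one frame, the $\boldsymbol{K}_\alpha$ commute, and every $C^\infty(M)$-linear combination and every product is again diagonal in the same frame, hence Haantjes by Proposition \ref{pr:HpolL} and its corollary, and again $\omega$-compatible because the conjugate-pair structure is preserved under these operations; moreover, being real-diagonal, all such operators are semisimple. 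This establishes that the generated algebra $\mathscr{H}$ is Abelian, semisimple and symplectically compatible.

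The delicate bookkeeping, and the place where vertical independence enters, is the rank statement in (d). A relation $\sum_\alpha c_\alpha(\boldsymbol{x})\boldsymbol{K}_\alpha=\boldsymbol{0}$ forces $\sum_\alpha c_\alpha\,\lambda_i^{(\alpha)}=0$ for every $i$, and after clearing the nonvanishing factors $\partial H/\partial p_i$ this reads $\sum_\alpha c_\alpha\,\partial_{p_i}H_\alpha=0$ for all $i$; vertical independence \eqref{eq:Hiv} (maximal rank of the Jacobian $[\partial H_\alpha/\partial p_i]$) then gives $c_\alpha\equiv 0$, so $\{\boldsymbol{K}_1,\dots,\boldsymbol{K}_m\}$ span a free rank-$m$ module. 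I would also record that an $H$ with $\partial H/\partial p_i\neq 0$ for all $i$ exists locally — a generic constant-coefficient combination $H=\sum_\alpha c_\alpha H_\alpha$ works by the same rank argument — so \eqref{eq:KSoV} is well posed. The main obstacle is therefore not any single computation but the structural reconciliation: one must check that ring closure does not enlarge the module beyond rank $m$, i.e. that the eigenvalue patterns $\{\lambda_i^{(\alpha)}\}$ define exactly $m$ coincidence classes so that the $\mathcal{T}$-involutive chain structure matches the class-$m$ semisimple $\omega\mathscr{H}$ structure. It is the conjugate-pair design of \eqref{eq:KSoV} that makes the chain relation and the symplectic compatibility hold \emph{simultaneously}, while $\mathcal{T}$-involution is the exact input eliminating the position-derivative mismatch in step (b), and vertical independence is what pins the rank at $m$.
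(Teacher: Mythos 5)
Your proof is correct and follows essentially the route the paper intends: the paper itself gives no details here, deferring to Theorem 2 of \cite{RTT2022CNS}, and your argument (diagonality on conjugate pairs gives the Haantjes property and $\omega$-compatibility for free, the $\mathcal{T}$-involution hypothesis is exactly what kills the position-derivative mismatch in $\boldsymbol{K}_\alpha^T\rd H=\rd H_\alpha$, and vertical independence pins the module rank at $m$) is precisely that computation with $m$ in place of $n$. The one point you flag as "the main obstacle" --- whether ring closure could enlarge the module beyond rank $m$ --- is moot under the paper's Definition~\ref{def:HM}: conditions \eqref{eq:Hmod} and \eqref{eq:Hring} only require linear combinations and products to have vanishing Haantjes torsion (automatic here, since everything stays diagonal in the fixed Darboux frame), not that products lie in the span of $\{\boldsymbol{K}_1,\ldots,\boldsymbol{K}_m\}$, so your rank-$m$ freeness argument already completes the proof.
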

\begin{proof}
The proof proceeds, exactly, as the proof of Theorem 2 of \cite{RTT2022CNS}  by simply interchanging $m$ with $n$.
\end{proof}

\section{Partial separability and generalized St\"ackel systems} \label{sec:5}

The theory of partial separability can also be formulated in the case of Hamiltonian systems admitting a priori less than $n$ integrals of motion  \cite{DPAMPA1896,StAMPA1897,PainlCRAS1897, HavasJMP1976,CR2019SIGMA,DKNSIGMA2019} (Case \textbf{II} of Definition \ref{def:PS}). We shall present here a novel formulation of this theory in the context of Haantjes geometry, which widely extends the approach we presented in \cite{TT2016SIGMA}.  
More precisely, in that article, we have determined the semisimple symplectic-Haantjes manifolds associated with the Arnold-Kozlov-Neishtadt systems \eqref{AKN}. In essence, they are families of $n$ Hamiltonians in involution, which are generated from a standard St\"ackel matrix $\bs{S}$, but keeping arbitrary the dependence of the functions $f_k$ on the momentum $p_k$. By specializing the St\"ackel matrix and vector \eqref{eq:SM} for the AKN systems conveniently, we can also derive directly both families of quasi-bi-Hamiltonian and classical St\"ackel systems. Haantjes geometry allows us to \textit{totally} separate the AKN systems using the construction of related DH coordinates.

We shall now extend  these results to a much more general family of Hamiltonian models, which are \textit{partially separable}.

\subsection{A new family of generalized St\"ackel systems}
In order to introduce the class of systems we are interested, we shall first review the notion of generalized St\"ackel matrix, defined in 1897 by St\"ackel in \cite{StAMPA1897} and by Painlev\'e in \cite{PainlCRAS1897}.

\begin{definition}
Let $(q^1,\ldots,q^n)$ be a coordinate chart, and $\sigma_1,\ldots,\sigma_m\in \mathbb{N}\backslash \{0\}$ be a set of integers such that $\sigma_1+\ldots +\sigma_m=n$. An invertible $m\times m$  matrix of the form
\beq \label{eq:GSM}
\bs{S}=\begin{bmatrix}
S_{11}(\bs{q}^1) & \cdots & S_{1m}(\bs{q}^1) \\
S_{21}(\bs{q}^2) & \cdots & S_{2m}(\bs{q}^2) \\ \vdots & \ddots & \vdots \\ S_{m1}(\bs{q}^m) & \cdots & S_{mm}(\bs{q}^m)
\end{bmatrix}
\eeq
where $\bs{q}^1=(q^{1,1},\ldots,q^{1,\sigma_1}),\ldots,\bs{q}^m=(q^{m,1},\ldots,q^{m,\sigma_m})$, is said to be a generalized St\"ackel matrix.
\end{definition}
Inspired by the previous definition, we propose a novel family of Hamiltonian models.
\begin{definition}
Let $(M, \omega)$ be a symplectic manifold and $(q^i, p_i)$ a set of Darboux coordinates. We introduce the class of generalized St\"ackel Hamiltonians defined by
\begin{equation} \label{eq:GSS}
\begin{bmatrix}
H_1 \\
\vdots \\
H_m
\end{bmatrix}
=\bs{S}^{-1 }
\begin{bmatrix}
f_1 (\boldsymbol{q}^1,\boldsymbol{p}_1)\\
\vdots \\
f_m(\boldsymbol{q}^m,\boldsymbol{p}_m)
\end{bmatrix} \ ,
\end{equation}
where  $(H_1,\ldots,H_m)$  are $m$ functions on  $M$, $\bs{q}^1=(q^{1,1},\ldots,q^{1,\sigma_1})$, $\bs{p}_1=(p_{1,1},\ldots,p_{1,\sigma_1})$, \ldots, $\bs{q}^m=(q^{m,1},\ldots,q^{m,\sigma_m})$, $\bs{p}_m=(p_{m,1},\ldots,p_{m,\sigma_m})$ 
 and $\bs{S}$ is a generalized $m\times m$ St\"ackel   matrix, whose $a$-th row depends  on the variables  $(\bs{q}^a)$ only. Also, 
\beq |\bs{F}\rangle:=
\begin{bmatrix}
f_1 (\boldsymbol{q}^1,\boldsymbol{p}_1)\\
\vdots \\
f_m(\boldsymbol{q}^m,\boldsymbol{p}_m)
\end{bmatrix} 
\eeq
will be said to be a generalized St\"ackel vector.
\end{definition}

\begin{remark}
The family of Hamiltonian systems given by generalized St\"ackel Hamiltonians \eqref{eq:GSS} extend significantly the family of known systems obtainable by the St\"ackel geometry. The AKN systems are realized when $m=n$ (i.e. $\sigma_1=\ldots =\sigma_n=1$), namely when $\bs{S}$ is a standard St\"ackel matrix \eqref{eq:SM}.  In addition, if we impose that the functions $f_k$ depend just on a pair $(q^k,p_k)$ of canonical coordinates and that their dependence on $p_k$ is \textit{quadratic}, then we get back the original St\"ackel systems \cite{StCRAS1893,Pere}. The models introduced by Di Pirro \cite{DPAMPA1896} belong to the class \eqref{eq:GSS} if $m=n-1$ and $f_k$ are quadratic in the momentum $p_k$. The more general systems considered first in \cite{StAMPA1897} and then in \cite{HavasJMP1976}, \cite{CR2019SIGMA} and \cite{DKNSIGMA2019}, related to the Painlev\'e metrics \cite{PainlCRAS1897}, are obtained from the class \eqref{eq:GSS} when $\bs{S}$ is of the form \eqref{eq:GSM} with $m\leq n-1$, and the Hamiltonians $(H_1,\ldots,H_m)$ are still quadratic in the momenta. 
\end{remark}

\subsection{Partial separability and semisimple Haantjes manifolds}

In this section, we will prove the existence of $\omega\mathscr{H}$ manifolds associated with partially separable Hamiltonian systems. We start with the following 
\begin{theorem}[$\mathcal{T}$-Involution] \label{th:Sinvol}
Let $(H_1,\ldots,H_m)$ be $m$ generalized St\"ackel Hamiltonians. 
Then, any pair of functions $(H_a,H_b)$  are in $\mathcal{T}$-involution, that is
\beq
 \{ H_{a}, H_{b} \} \big|_{c, j_{c}}=0, \qquad\qquad c=1,\ldots,m,\quad \  j_c=1,\ldots,\sigma_c \ .
\eeq
\end{theorem}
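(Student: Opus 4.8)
The plan is to work directly from the defining relation $\bs{S}\,\bs{H}=|\bs{F}\rangle$, exploiting the two structural facts that make the generalized St\"ackel construction special: the $r$-th row of $\bs{S}$ depends only on the block $\bs{q^r}$, and $f_r$ depends only on the conjugate pair $(\bs{q^r},\bs{p_r})$. Writing the system in components as $\sum_{b=1}^m S_{rb}H_b=f_r(\bs{q^r},\bs{p_r})$ for each row $r=1,\ldots,m$, I would differentiate this identity separately with respect to a momentum $p_{c,j_c}$ and a coordinate $q^{c,j_c}$ belonging to an arbitrary but fixed block $c$, and then read off the resulting structure of the derivative vectors $(\partial H_b/\partial p_{c,j_c})_b$ and $(\partial H_b/\partial q^{c,j_c})_b$.

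Carrying out the momentum differentiation, since the entries $S_{rb}$ are momentum-independent and $f_r$ is supported on block $r$, all contributions with $r\neq c$ drop out, so that
\[
\sum_{b=1}^m S_{rb}\,\frac{\partial H_b}{\partial p_{c,j_c}}=\delta_{rc}\,\frac{\partial f_c}{\partial p_{c,j_c}} \ .
\]
Carrying out the coordinate differentiation, both $S_{rb}$ and $f_r$ are supported on block $r$, so again only $r=c$ survives and one obtains
\[
\sum_{b=1}^m S_{rb}\,\frac{\partial H_b}{\partial q^{c,j_c}}=\delta_{rc}\left(\frac{\partial f_c}{\partial q^{c,j_c}}-\sum_{b=1}^m \frac{\partial S_{cb}}{\partial q^{c,j_c}}\,H_b\right) .
\]
In both cases the right-hand side is a scalar function times the standard basis vector $\bs{e}_c$.

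Since $\bs{S}$ is invertible by hypothesis, inverting these two linear systems shows that the vector $P:=(\partial H_b/\partial p_{c,j_c})_{b}$ and the vector $Q:=(\partial H_b/\partial q^{c,j_c})_{b}$ are each a (coordinate-dependent) scalar multiple of one and the same vector, namely the $c$-th column of $\bs{S}^{-1}$. Hence $P$ and $Q$ are proportional, and every $2\times 2$ determinant $Q_a P_b-P_a Q_b$ vanishes. This determinant is precisely the reduced bracket,
\[
\{H_a,H_b\}\big|_{c,j_c}=\frac{\partial H_a}{\partial q^{c,j_c}}\frac{\partial H_b}{\partial p_{c,j_c}}-\frac{\partial H_a}{\partial p_{c,j_c}}\frac{\partial H_b}{\partial q^{c,j_c}} \ ,
\]
so $\mathcal{T}$-involution follows for every $c=1,\ldots,m$ and every $j_c=1,\ldots,\sigma_c$.

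The computation is essentially routine; the only delicate point -- and the conceptual crux of the argument -- is the bookkeeping of which partial derivatives vanish by virtue of the block dependence of $\bs{S}$ and $|\bs{F}\rangle$, guaranteeing that \emph{both} derivative vectors collapse onto the single direction $\bs{S}^{-1}\bs{e}_c$. I expect no serious obstacle beyond keeping the multi-index notation consistent; note in particular that the conclusion holds term-by-term within each block (the single-summand condition), which is strictly stronger than $\mathcal{P}$-involution and is what justifies the statement that these semisimple systems sit in standard $\mathcal{T}$-involution.
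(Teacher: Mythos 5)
Your proof is correct, and it reaches the conclusion by a slightly different route than the paper. The paper starts from the explicit solution $H_a=\sum_d S^{-1}_{ad}f_d$, writes out the reduced bracket, and then invokes the identity $\partial_x(\bs{S}^{-1})=-\bs{S}^{-1}\partial_x(\bs{S})\bs{S}^{-1}$ to show that each term of the resulting sum is symmetric under the interchange $a\leftrightarrow b$, hence cancels in the antisymmetrized combination. You instead differentiate the implicit relation $\bs{S}\bs{H}=|\bs{F}\rangle$ and observe that, for a fixed $(c,j_c)$, both derivative vectors $\bigl(\partial H_b/\partial p_{c,j_c}\bigr)_b$ and $\bigl(\partial H_b/\partial q^{c,j_c}\bigr)_b$ are scalar multiples of the single column $\bs{S}^{-1}\bs{e}_c$, so every reduced bracket is a $2\times 2$ minor of two proportional vectors and vanishes. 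The two arguments establish the same intermediate facts (your scalar $\mu=\partial f_c/\partial q^{c,j_c}-\sum_b(\partial S_{cb}/\partial q^{c,j_c})H_b$ is exactly what the paper's derivative-of-the-inverse computation produces), but your rank-one framing is more transparent about \emph{why} the result holds, and it has a concrete advantage: it extends verbatim to the case where the $c$-th row of $\bs{S}$ is also allowed to depend on the momenta $\bs{p_c}$ (the momentum differentiation then simply acquires the extra term $-\sum_b(\partial S_{cb}/\partial p_{c,j_c})H_b$, still a multiple of $\bs{e}_c$), which is precisely the generalization the paper states in the remark following the theorem but omits the proof of.
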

\begin{proof}
 Observe that $ \frac{\partial f_c}{\partial q^{d,j_d}}=0$ if $c\neq d$. Let us denote by $S^{-1}_{ij}$ the element in the $i$-th row and the $j$-th column of $\boldsymbol{S}^{-1}$. We obtain
\begin{equation} \label{eq:SinvProof}
\begin{split}
 \{ H_{a}, H_{b} \}\big |_{c, j_{c}}=
&\sum_{d=1}^m \bigg(\frac{\partial S^{-1}_{ad}} {\partial q^{c,j_c}}S^{-1}_{bc}-
  \frac{\partial S^{-1}_{bd}}{\partial q^{c,jc}} S^{-1}_{ac}\bigg) f_d \frac{\partial f_c}{\partial p_{c,j_c}}\\
 \end{split} .
\end{equation}
Taking into account the identity
\beq
\partial_x (\bs{S}^{-1})=-\bs{S}^{-1}\partial_x (\bs{S})\bs{S}^{-1} \ ,
\eeq
where $x$ is any variable on which $\bs{S}$ depends, we deduce 
\beq
\frac{\partial S^{-1}_{ad}} {\partial q^{c,j_c}}=-
S^{-1}_{ac}\bigg(\sum_{k=1}^m \frac{\partial S_{ck}}{\partial q^{c,j_c}} S^{-1}_{kd}\bigg)\ .
\qquad
\eeq
This relation implies 
\beq
\frac{\partial S^{-1}_{ad}} {\partial q^{c,j_c}}S_{bc}^{-1}=
-S^{-1}_{ac}
\bigg(\sum_{k=1}^m \frac{\partial S_{ck}}{\partial q^{c,j_c}} S^{-1}_{kd}\bigg) S_{bc}^{-1} \ .
\eeq

It is immediate to observe that the latter identity is symmetric with respect to the interchange of $a$ with $b$. Therefore, each of the terms in the sum of Eq. \eqref{eq:SinvProof} vanishes. 
\end{proof}
\begin{remark}
Interestingly enough, Theorem \ref{th:Sinvol} holds even in the case when each $a$-th row of the St\"{a}ckel matrix $\bs{S}$ may depend also on the variables $(\boldsymbol{q}^a, \boldsymbol{p}_a)$, as well as the $a$-th row of the generalized St\"{a}ckel vector $|\bs{F} \rangle$ (we omit the proof of this observation for simplicity). In this more general scenario,  both the elements $S_{ij}(\boldsymbol{q}^a,\boldsymbol{p}_a)$ and the functions $f_{a}(\boldsymbol{q}^a,\boldsymbol{p}_a)$ will be called \emph{generalized St\"ackel functions} as they are in $\mathcal{P}$-involution and are characteristic functions of the Haantjes web, by analogy with the standard case already discussed in \cite{TT2016SIGMA,TT2022AMPA}.
\end{remark}

We can now prove one of our main results.

\begin{theorem}  \label{Th4}
Let $(M, \omega)$ be a symplectic manifold and $(H_1,H_2,\ldots,H_m)$ be $m$ independent,  $C^{\infty}(M)$ functions of the form \eqref{eq:GSS}. Then, they belong to  the Lenard--Haantjes chain
\begin{equation} \label{eq:HcS}
\boldsymbol{K}_{\alpha} ^T\mathrm{d} H_1=\mathrm{d} H_{\alpha}, \qquad \alpha=1,\ldots,m \ ,
\end{equation}
where $\boldsymbol{K}_{\alpha}$ are   the Haantjes operators defined by

\begin{equation} \label{eq:StackHaan}
\boldsymbol{K}_{\alpha}:=\sum_{i=1}^m \frac{ \tilde{S}_{\alpha i}}{ \tilde{S}_{1i}} \sum_{k=1}^{\sigma_i} \bigg( \frac{\partial}{\partial q^{i,k}}\otimes \mathrm{d} q^{i,k}+ \frac{\partial}{\partial p_{i,k}}\otimes \mathrm{d} p_{i,k}\bigg)
 \qquad\qquad \ \alpha=1,\ldots,m  \ .
\end{equation}
They generate a semisimple $\omega \mathscr{H}$ manifold of class $m$.
\end{theorem}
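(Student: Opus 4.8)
The plan is to reduce everything to the linear-algebraic identity $\bs{S}\,\bs{H}=|\bs{F}\rangle$, which is just \eqref{eq:GSS} read in reverse. Writing it row by row produces the generalized separation relations $\sum_{\beta=1}^m S_{a\beta}(\bs{q^a})\,H_\beta=f_a(\bs{q^a},\bs{p_a})$, for $a=1,\ldots,m$. Differentiating the $a$-th relation and regrouping, I would introduce the one-form
\beq
\vartheta_a:=\sum_{\beta=1}^m S_{a\beta}\,\rd H_\beta=\rd f_a-\sum_{\beta=1}^m H_\beta\,\rd S_{a\beta},\qquad a=1,\ldots,m.
\eeq
The right-hand side is the crucial observation: since $f_a$ depends only on $(\bs{q^a},\bs{p_a})$ and each $S_{a\beta}$ only on $\bs{q^a}$, the form $\vartheta_a$ is supported entirely on the $a$-th block, i.e.\ on $\mathrm{Span}\{\rd q^{a,k},\rd p_{a,k}\}_{k=1}^{\sigma_a}$. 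Inverting the linear system $\bs{S}\,\rd\bs{H}=\bs{\vartheta}$ then yields the master identity $\rd H_\gamma=\sum_{a=1}^m (\bs{S}^{-1})_{\gamma a}\,\vartheta_a$ for every $\gamma$.

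Next I would verify the chain relation \eqref{eq:HcS}. By construction each operator $\boldsymbol{K}_\alpha$ in \eqref{eq:StackHaan} is diagonal in the Darboux chart, acting on the $a$-th block as the scalar $\lambda_a^{(\alpha)}=\tilde{S}_{\alpha a}/\tilde{S}_{1a}$; hence its transpose $\boldsymbol{K}_\alpha^T$ multiplies any block-$a$ one-form by $\lambda_a^{(\alpha)}$. Using the relation $\tilde{S}_{\alpha a}=(\det\bs{S})\,(\bs{S}^{-1})_{\alpha a}$ dictated by the cofactor convention of \eqref{AKN}, this eigenvalue is precisely the column ratio $\lambda_a^{(\alpha)}=(\bs{S}^{-1})_{\alpha a}/(\bs{S}^{-1})_{1a}$. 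Applying $\boldsymbol{K}_\alpha^T$ to the master identity with $\gamma=1$, and using that $\vartheta_a$ lives in block $a$, I would obtain
\beq
\boldsymbol{K}_\alpha^T\,\rd H_1=\sum_{a=1}^m \lambda_a^{(\alpha)}\,(\bs{S}^{-1})_{1a}\,\vartheta_a=\sum_{a=1}^m (\bs{S}^{-1})_{\alpha a}\,\vartheta_a=\rd H_\alpha,
\eeq
the telescoping of the prefactor $(\bs{S}^{-1})_{1a}\mapsto(\bs{S}^{-1})_{\alpha a}$ being the heart of the argument.

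It then remains to confirm that the $\boldsymbol{K}_\alpha$ are genuine Haantjes operators generating a semisimple $\omega\mathscr{H}$ structure. Each $\boldsymbol{K}_\alpha$ is of the diagonal form \eqref{eq:Ldiagonal} (its eigenvalues $\lambda_a^{(\alpha)}$ are merely repeated $\sigma_a$-fold on each block and may depend on all variables through the cofactors), so its Haantjes torsion vanishes by the remark following \eqref{eq:Ldiagonal}. Being simultaneously diagonal, the $\boldsymbol{K}_\alpha$ commute, and since linear combinations and products of simultaneously diagonal operators are again diagonal, they generate an Abelian algebra of Haantjes operators. Their algebraic compatibility \eqref{eq:compOmH} with the Darboux form $\omega$ is immediate: on each block $\boldsymbol{K}_\alpha$ acts as a scalar while $\omega$ pairs $q$'s with $p$'s within the same block, so $\omega(X,\boldsymbol{K}_\alpha Y)=\omega(\boldsymbol{K}_\alpha X,Y)$ holds blockwise. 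This yields a semisimple, generically non-maximal-rank ($m<n$) $\omega\mathscr{H}$ manifold whose Haantjes algebra is generated by \eqref{eq:StackHaan}.

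The main obstacle is bookkeeping rather than conceptual: one must (i) keep the block-support of $\vartheta_a$ under control, which relies precisely on the generalized St\"ackel hypothesis that the $a$-th row of $\bs{S}$ and the function $f_a$ see only the $a$-th block of coordinates, and (ii) match the cofactor ratio $\tilde{S}_{\alpha a}/\tilde{S}_{1a}$ with the column ratio of $\bs{S}^{-1}$ so that the prefactors cancel cleanly. A mild regularity assumption $\tilde{S}_{1i}\neq 0$, $i=1,\ldots,m$ (analogous to $\partial H/\partial p_i\neq 0$ in Theorem \ref{th:SoVgLc}), is needed for the operators \eqref{eq:StackHaan} to be well defined.
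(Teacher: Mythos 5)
Your proof is correct, but it takes a genuinely different route from the paper's. The paper disposes of Theorem \ref{Th4} in two lines: it invokes the Jacobi--Haantjes Theorem \ref{th:SoVgLc} (whose hypotheses are met because Theorem \ref{th:Sinvol} has already established the $\mathcal{T}$-involution of the $H_a$), and then merely checks that the operators \eqref{eq:StackHaan} realize the general operators \eqref{eq:KSoV} via the single identity $\bigl(\partial H_\alpha/\partial p_{a,k_a}\bigr)/\bigl(\partial H_1/\partial p_{a,k_a}\bigr)=\tilde{S}_{\alpha a}/\tilde{S}_{1a}$, which follows from \eqref{eq:GSS} because $f_a$ depends only on the $a$-th block. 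You instead prove the chain relations \eqref{eq:HcS} directly: differentiating the separation relations $\sum_\beta S_{a\beta}H_\beta=f_a$ gives the block-supported one-forms $\vartheta_a$, and the telescoping of the prefactors $(\bs{S}^{-1})_{1a}\mapsto(\bs{S}^{-1})_{\alpha a}$ under $\boldsymbol{K}_\alpha^T$ yields $\rd H_\alpha$; the Haantjes, Abelian and $\omega$-compatibility properties are then checked by hand from the diagonal form \eqref{eq:Ldiagonal}. Your argument is self-contained and bypasses both the involutivity theorem and the Jacobi--Haantjes machinery (in particular it does not need $\partial f_a/\partial p_{a,k_a}\neq 0$, only $\tilde{S}_{1a}\neq 0$, which is already required for \eqref{eq:StackHaan} to make sense); the paper's argument is shorter given the surrounding results and places the theorem inside the general framework of Theorem \ref{th:SoVgLc}. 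Two trivial remarks: each eigenvalue is repeated $2\sigma_a$-fold (the $q$ and the $p$ directions of the block), not $\sigma_a$-fold, and your identification $\tilde{S}_{\alpha a}=(\det\bs{S})(\bs{S}^{-1})_{\alpha a}$ correctly matches the cofactor convention implicit in \eqref{AKN}.
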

\begin{proof}

The statement is a consequence of  Theorem \ref{th:SoVgLc}. In particular, the Haantjes operators \eqref{eq:StackHaan} are a realization of the operators \eqref{eq:KSoV} for the case of the generalized St\"ackel systems defined by the Hamiltonians \eqref{eq:GSS}. In fact, it is sufficient to observe that from Eq. \eqref{eq:GSS} we have
\beq
\frac{\frac{\partial H_{\alpha}}{\partial p_{a,k_a}}}{ \frac{\partial H_1}{\partial p_{a,k_a}}}=
\frac{ \tilde{S}_{\alpha a}}{ \tilde{S}_{1a}} \qquad\qquad k_a=1,\ldots, \sigma_a \ .
\eeq
\end{proof}

\subsection{Generalized St\"ackel systems and Hamilton-Jacobi equations}

We now study the problem of partial separability of the Hamilton-Jacobi equations for the generalized St\"ackel systems \eqref{eq:GSS}. The general form of the time-independent Hamilton-Jacobi equation is
\begin{equation}
H \left( \bs{q}^{1}, ..., \bs{q}^{m}, \bs{p}_{1} = \dfrac{\partial W}{\partial \bs{q}^{1}}, ..., \bs{p}_{m} = \dfrac{\partial W}{\partial \bs{q}^{m}} \right) = h,
\end{equation}
where $h$ is an arbitrary constant. Moreover, the $m$ GSE \eqref{eq:Sk} become

\begin{equation} \label{eq:GSE}
\begin{bmatrix}
f_1 (\boldsymbol{q}^1,\boldsymbol{p}_1)\\
\vdots \\
f_m(\boldsymbol{q}^m,\boldsymbol{p}_m)
\end{bmatrix}
-
\boldsymbol{S}
\begin{bmatrix}
H_1 \\
\vdots \\
H_m
\end{bmatrix}
=
\begin{bmatrix}
0 \\
\vdots \\
0
\end{bmatrix} \ .
\end{equation}
They allow to split the HJ equations into the system
\begin{equation} \label{eq:5.10}
f_a\left(\boldsymbol{q}^a, \frac{\partial W} {\partial \boldsymbol{q}^a}\right)-\langle \bs{S}_a~|~\bs{h}~\rangle=0 \qquad \qquad   a=1,\ldots,m \ ,
\end{equation}
where $\bs{S}_a$ is the $a$-th row of the St\"ackel matrix $\bs{S}$, and $\bs{h}=[h_1,\ldots,h_m]^{T}$. Notice that in this context the number of subsets of Darboux coordinates, previously denoted by $v$, coincides with the number $m$ of Hamiltonian functions of the system.

At this stage, one is left with $m$ independent PDEs to solve. This task may be a nontrivial one. However, one can study these PDEs using several techniques, including numerical methods or Lie symmetry analysis, to solve or further reduce them.

\subsubsection{Systems partially separable in (2,1,1) coordinates}
For the sake of concreteness, we illustrate a realization of the class \eqref{eq:GSS}. Let us consider the system

\begin{equation} \label{eq:GSS3}
\begin{bmatrix}
H_1 \\
H_2\\
H_3
\end{bmatrix}
=\bs{S}^{-1 }
\begin{bmatrix}
f_1 (q^{1}, p_{1}, q^{2}, p_{2})\\
f_2(q^{3}, p_{3})\\
f_3(q^{4}, p_{4})
\end{bmatrix}\ ,
\end{equation}
where $f_{1}$, $f_{2}$ and $f_{3}$ are arbitrary functions and the generalized Stäckel matrix takes the form
\begin{equation} \label{eq:5.10}
\bs{S} =
\begin{bmatrix}
S_{11} (q^{1}, q^{2}) & S_{12} (q^{1}, q^{2}) & S_{13} (q^{1}, q^{2}) \\
S_{21} (q^{3}) & S_{22} (q^{3}) & S_{23} (q^{3}) \\
S_{31} (q^{4}) & S_{32} (q^{4}) & S_{33} (q^{4}) 
\end{bmatrix}.
\end{equation}
The functions $S_{ij}$ are arbitrary, except for the condition that $\bs{S}$ must be invertible. Thus, from Eq. \eqref{eq:GSS3} we obtain a huge family of Hamiltonian systems $(H_1, H_2, H_3)$ in involution. It explicitly reads
\begin{eqnarray}
\nn & H_{1} = \dfrac{1}{\det \bs{S}} \left[ (S_{22} S_{33} - S_{23} S_{32}) f_{1} + (S_{13} S_{32} - S_{12} S_{33}) f_{2} + ( S_{12} S_{23} - S_{13} S_{22} ) f_{3} \right], \\ \label{eq:5.17} 
\\
\nn & H_{2} = \dfrac{1}{\det \bs{S}} \left[ (S_{23} S_{31} - S_{21} S_{33}) f_{1} + (S_{11} S_{33} - S_{13} S_{31}) f_{2} + ( S_{13} S_{21} - S_{11} S_{23} ) f_{3} \right], \\ \label{eq:5.18} 
\\
\nn & H_{3} = \dfrac{1}{\det \bs{S}} \left[ (S_{21} S_{32} - S_{22} S_{31}) f_{1} + (S_{12} S_{31} - S_{11} S_{32}) f_{2} + ( S_{11} S_{22} - S_{12} S_{21} ) f_{3} \right]. \\ \label{eq:5.19}
\end{eqnarray}
 Each of these Hamiltonian systems is partially separable in the original coordinates and satisfies Benenti's $\mathcal{T}$-separability conditions. From relations \eqref{eq:5.10} we obtain the  set of three associated HJ equations:  
 \beq
 \begin{cases}
 f_1\left(q^1, \frac{\partial W} {\partial q^1}, q^2, \frac{\partial W} {\partial q^2}\right)- S_{11} h_1-S_{12} h_2-S_{13}h_3=0 \\
 f_2\left(q^3, \frac{\partial W} {\partial q^3}\right)- S_{21} h_1-S_{22}h_2-S_{23}h_3 =0\\
 f_3\left(q^4, \frac{\partial W} {\partial q^4}\right)- S_{31} h_1-S_{32}h_2-S_{33}h_3=0
.
 \end{cases}
 \eeq
 A Hamiltonian system representing a realization of formula \eqref{eq:GSS3} is given in Section \ref{sec:6.2}.
\section{Examples of partially separable Hamiltonian models}\label{sec:6}

As proved in the previous sections, the construction of sets of partially or totally separated  variables for a given Hamiltonian model is directly related to the determination of the $\omega\mathscr{H}$ manifolds associated with the model. To construct these manifolds, one can adopt the following procedure. Assume that we are given $(H_1,\ldots, H_m)$ functions independent and in involution. 

\vspace{2mm}

(1) Within the class of commuting operators algebraically compatible with the symplectic structure, namely fulfilling Eq. \eqref{eq:compOmH}, determine the operators $\bs{K}_{\alpha}$ that solve the chain equations \eqref{eq:MHchain}. 

\vspace{2mm}

(2) Among the solutions found, choose the operators with vanishing Haantjes torsion: $\mathcal{H}_{\bs{K}}(X, Y) = 0$, $\forall$ $X,Y \in \mathfrak{X}(M)$. 

\vspace{2mm}

(3) Out of these Haantjes operators, construct semisimple and/or non-semisimple Abelian $\omega\mathscr{H}$ manifolds. 

\vspace{2mm}

(4) Find the DH coordinates associated with each $\omega\mathscr{H}$ manifold. They will represent either partially or totally separated coordinates for the Hamiltonian system under scrutiny.

\vspace{2mm}

In the following, we shall discuss two examples of the application of this procedure.

\subsection{The Jacobi-Calogero three-particle system} (Case \textbf{I}) \label{sec:6.1}

According to the previous theory, our purpose is to construct both partial and total separation coordinates by applying the Haantjes geometry. As was pointed out in \cite{TT2022AMPA,RTT2022CNS},  given a Hamiltonian system, it might admit several $\omega\mathscr{H}$ manifolds associated. Thus, a given system might be partially separable in some coordinate charts and totally separable in other charts. This fact reveals the intrinsic richness of the theory. 

To illustrate this point and to consider systems falling into Case $\textbf{I}$ of Definition \ref{def:PS}, we shall consider the Jacobi-Calogero model with three particles, with the Hamiltonian function
\begin{equation}
H = \dfrac{1}{2} \left( p_{1}^{2} + p_{2}^{2} + p_{3}^{2} \right) + V_{\text{Cal}},
\end{equation}
where
\begin{equation}
V_{\text{Cal}} = \frac{1}{(q^{1}-q^{2})^2}+\frac{1}{(q^{1}-q^{3})^2}+\frac{1}{(q^{2}-q^{3})^2}.
\end{equation}
This system is both superintegrable and multiseparable, and it was previously discussed in \cite{TT2016SIGMA} from the perspective of the symplectic-Haantjes theory. In particular,  Haantjes algebras associated with the cylindrical, spherical, and parabolic separation coordinates were obtained. Nevertheless, the existence of an $\omega\mathscr{H}$ manifold associated with the algebra of integrals $\{ H, I_{1}, I_{3} \}$, where
\begin{equation}
\begin{aligned}
& I_{1} = p_{1} + p_{2} + p_{3}, \\
& I_{3} = \frac{1}{3} \left( p_{1}^3 + p_{2}^3 + p_{3}^3 \right) + \frac{p_1 + p_2}{(q^{1}-q^{2})^2} + \frac{p_1 + p_3}{(q^{1}-q^{3})^2} + \frac{p_2 + p_3}{(q^{2}-q^{3})^2},
\end{aligned}
\end{equation}
remained as an open problem. We address here and solve this problem by showing that there exists an Abelian, non-semisimple Haantjes algebra associated with the family $\{ H, I_{1}, I_{3} \}$. 

In order to solve the chain equations $\boldsymbol{L}_{i}^{T} \rd H = \rd I_{i}, \ i=1, 3$, we start from the most general operator compatible with the symplectic form:
\begin{equation} \label{eq:Lchart}
\begin{aligned}
& \boldsymbol{L} = \sum_{i, j = 1}^{3} m_{i j} \left( \dfrac{\partial}{\partial q^{i}} \otimes \rd q^{j} + \dfrac{\partial}{\partial p_{j}} \otimes \rd p_{i} \right) + m_{15} \left( \dfrac{\partial}{\partial q^{1}} \otimes \rd p_{2} - \dfrac{\partial}{\partial q^{2}} \otimes \rd p_{1} \right) \\ & \qquad + m_{16} \left( \dfrac{\partial}{\partial q^{1}} \otimes \rd p_{3} - \dfrac{\partial}{\partial q^{3}} \otimes \rd p_{1} \right) + m_{26} \left( \dfrac{\partial}{\partial q^{2}} \otimes \rd p_{3} - \dfrac{\partial}{\partial q^{3}} \otimes \rd p_{2} \right) \\ & \qquad + m_{42} \left( \dfrac{\partial}{\partial p_{1}} \otimes \rd q^{2} - \dfrac{\partial}{\partial p_{2}} \otimes \rd q^{1} \right) + m_{43} \left( \dfrac{\partial}{\partial p_{1}} \otimes \rd q^{3} - \dfrac{\partial}{\partial p_{3}} \otimes \rd q^{1} \right) \\ & \qquad + m_{53} \left( \dfrac{\partial}{\partial p_{2}} \otimes \rd q^{3} - \dfrac{\partial}{\partial p_{3}} \otimes \rd q^{2} \right),
\end{aligned}
\end{equation}
where $m_{11}, \, m_{12}, \, m_{13}, \, m_{21}, \, m_{22}, \, m_{23}, \, m_{31}, \, m_{32}, \, m_{33}, \, m_{15} \, m_{16}, \, m_{26}, \, m_{42}, \, m_{43}, \, m_{53}$ are 15 arbitrary functions that depend on the variables $(q^{1}, q^{2}, q^{3}, p_{1}, p_{2}, p_{3})$.
To make the expressions easier to read, it is useful to represent these operators using the matrix notation
\begin{equation}
\boldsymbol{L} =
\left[\begin{array}{ccc|ccc}
m_{11} & m_{12} & m_{13} & 0 & m_{15} & m_{16} \\
m_{21} & m_{22} & m_{23} & -m_{15} & 0 & m_{26} \\
m_{31} & m_{32} & m_{33} & -m_{16} & -m_{26} & 0 \\ \hline
0 & m_{42} & m_{43} & m_{11} & m_{21} & m_{31} \\
-m_{42} & 0 & m_{53} & m_{12} & m_{22} & m_{32} \\
-m_{43} & -m_{53} & 0 & m_{13} & m_{23} & m_{33} \\
\end{array}\right] .
\end{equation}

Now, we need to fix the arbitrary functions to satisfy the chain equations and ensure a resulting tensor with vanishing Haantjes torsion.

The algebraic chain equation $\boldsymbol{L}_{1}^{T} \rd H = \rd I_{1}$ can be solved by choosing for instance the operator
\begin{equation}
\boldsymbol{L}_{1} = \dfrac{1}{p_{1} + p_{2} + p_{3}}
\left[\begin{array}{ccc|ccc}
1 & 1 & 1 & 0 & 0 & 0 \\
1 & 1 & 1 & 0 & 0 & 0 \\
1 & 1 & 1 & 0 & 0 & 0 \\ \hline
0 & 0 & 0 & 1 & 1 & 1 \\
0 & 0 & 0 & 1 & 1 & 1 \\
0 & 0 & 0 & 1 & 1 & 1 \\
\end{array}\right] ,
\end{equation}
whose Nijenhuis (and consequently, Haantjes) torsion trivially vanishes.
It is semisimple, its minimal polynomial is of second degree,
\begin{equation}
m_{\boldsymbol{L}_{1}} ( \{ \boldsymbol{q},\boldsymbol{p} \}, \lambda) = \lambda \left( \lambda - \dfrac{3}{p_{1} + p_{2} + p_{3}} \right),
\end{equation}
and its generalized eigenvalues and eigendistributions are
\begin{equation}
\begin{aligned}
& \lambda_{1} = \dfrac{3}{p_{1} + p_{2} + p_{3}}, \ \rho_{1} = 1, \qquad \mathcal{D}_{1} = \bigg\langle \dfrac{\partial}{\partial q^{1}} + \dfrac{\partial}{\partial q^{2}} + \dfrac{\partial}{\partial q^{3}}, \dfrac{\partial}{\partial p_{1}} + \dfrac{\partial}{\partial p_{2}} + \dfrac{\partial}{\partial p_{3}} \bigg\rangle, \\
& \lambda_{2} = 0, \ \rho_{2} = 1, \qquad \mathcal{D}_{2} = \bigg\langle \dfrac{\partial}{\partial q^{1}} - \dfrac{\partial}{\partial q^{2}}, \dfrac{\partial}{\partial q^{1}} - \dfrac{\partial}{\partial q^{3}}, \dfrac{\partial}{\partial p_{1}} - \dfrac{\partial}{\partial p_{2}}, \dfrac{\partial}{\partial p_{1}} - \dfrac{\partial}{\partial p_{3}} \bigg\rangle.
\end{aligned}
\end{equation}

Similarly, the algebraic chain equation $\boldsymbol{L}_{3}^{T} \rd H = \rd I_{3}$ can be solved by choosing the Haantjes operator
\begin{equation} \label{eq:6.9}
\boldsymbol{L}_{3} = \dfrac{1}{2 p_{1} - p_{2} - p_{3}}
\left[\begin{array}{ccc|ccc}
 f_{1} & f_{2} & f_{2} & 0 & 0 & 0 \\
 f_{3} & f_{4} & f_{5} & 0 & 0 & 0 \\
 f_{6} & f_{7} & f_{8} & 0 & 0 & 0 \\ \hline
 0 & f_{9} & - f_{9} & f_{1} & f_{3} & f_{6} \\
 - f_{9} & 0 & f_{9} & f_{2} & f_{4} & f_{7} \\
 f_{9} & - f_{9} & 0 & f_{2} & f_{5} & f_{8} \\
\end{array}\right] ,
\end{equation}
where
\begin{align*}
& f_1 = \frac{1}{p_{1}+p_{2}+p_{3}} \left[ 2 p_{1} \left( p_{1}^2 + \frac{1}{(q^{1}-q^{2})^2}+\frac{1}{(q^{1}-q^{3})^2}\right) \right. \\ & \qquad \qquad \left. +(p_{2}+p_{3}) \left( p_{1}^2- p_{2}^2-p_{3}^2 -\frac{2}{(q^{2}-q^{3})^2} \right) \right] , \\
& f_2 = \frac{1}{p_{2}+p_{3}} \left[ \left( p_{1}^2 + \frac{1}{(q^{1}-q^{2})^2} + \frac{1}{(q^{1}-q^{3})^2} \right) (2 p_{1}-p_{2}-p_{3}) - p_{1} \, f_{1} \right], \\
& f_3 = \left( \frac{3 (p_{1} - p_{2})}{2 p_{1} - p_{2} - p_{3}} -\frac{p_{2}}{p_{2} + p_{3}}\right) \,  f_1 + 2 \left(p_{2}^2+\frac{1}{(q^{1}-q^{2})^2}+\frac{1}{(q^{2}-q^{3})^2}\right) \\ & \qquad - \frac{p_{2} + 3 p_{3}}{p_{2}+p_{3}} \left( p_{1}^2 + \frac{1}{(q^{1}-q^{2})^2} + \frac{1}{(q^{1}-q^{3})^2}\right), \\
& f_9 = \frac{2}{p_{2}-p_{3}} \left[ (2 p_{1}-p_{2}-p_{3}) \left( \frac{p_1 + p_{2}}{(q^{1} - q^{2})^3}+\frac{p_1 + p_{3}}{(q^{1} - q^{3} )^3}\right) \right. \\ & \qquad +2 \left(\frac{1}{(q^{1} - q^{2})^3} - \frac{1}{(q^{1}-q^{3})^3} - \frac{2}{(q^{2}-q^{3})^3}\right) \left(p_{2}^2+\frac{1}{(q^{1} - q^{2})^2}+\frac{1}{(q^{2} - q^{3})^2}\right) \\ & \qquad - \left( \frac{2  p_{2}}{2 p_{1} - p_{2} - p_{3}} \left(\frac{1}{(q^{1}-q^{2})^3}-\frac{1}{(q^{1} - q^{3})^3}-\frac{2}{(q^{2} - q^{3})^3}\right) + \frac{1}{(q^{1} - q^{2})^3} + \frac{1}{(q^{1} - q^{3})^3}\right) \, f_{1} \\ & \qquad \left. + \left(\frac{ p_{2}+3 p_{3} }{2 p_{1}-p_{2}-p_{3}} \left(-\frac{1}{(q^{1}- q^{2})^3}+\frac{1}{(q^{1} - q^{3})^3}+\frac{2}{(q^{2} - q^{3})^3}\right) +\frac{2}{(q^{1}- q^{3} )^3}+\frac{2}{(q^{2} - q^{3})^3}\right) f_{2}   \right],
\end{align*}
and
\begin{equation}
\begin{aligned}
& f_{4} = f_{1} + \dfrac{1}{2} \left( f_{2} - f_{3} \right), \\
& f_{5} = \dfrac{1}{2} \left( 3 f_{2} - f_{3} \right), \\
& f_{6} = 2 f_{2} - f_{3}, \\
& f_{7} = \dfrac{1}{2} \left( f_{2} + f_{3} \right), \\
& f_{8} = f_{1} - \dfrac{1}{2} \left( f_{2} - f_{3} \right).
\end{aligned}
\end{equation}

The operator \eqref{eq:6.9} is non-semisimple. Its minimal polynomial is of third degree and reads
\begin{equation}
m_{\boldsymbol{L}_{3}} ( \{ \boldsymbol{q},\boldsymbol{p} \}, \lambda) = \left( \lambda - \dfrac{f_1 + 2 f_2}{2 p_{1} - p_{2} - p_{3}} \right) \left( \lambda - \dfrac{f_1 - f_2}{2 p_{1} - p_{2} - p_{3}} \right)^{2}.
\end{equation}
Its generalized eigenvalues and eigendistributions are
\begin{equation}
\begin{aligned}
& \lambda_{3} = \dfrac{f_1 + 2 f_2}{2 p_{1} - p_{2} - p_{3}}, \ \rho_{3} = 1, \\ & \qquad \qquad \mathcal{D}_{3} = \bigg\langle \dfrac{\partial}{\partial q^{1}} + \dfrac{\partial}{\partial q^{2}} + \dfrac{\partial}{\partial q^{3}}, \dfrac{\partial}{\partial p_{1}} + \dfrac{\partial}{\partial p_{2}} + \dfrac{\partial}{\partial p_{3}} \bigg\rangle \equiv \mathcal{D}_{1}, \\
& \lambda_{4} = \dfrac{f_1 - f_2}{2 p_{1} - p_{2} - p_{3}}, \ \rho_4 = 2, \\ & \qquad \qquad \mathcal{D}_{4} = \bigg\langle \dfrac{\partial}{\partial q^{1}} - \dfrac{\partial}{\partial q^{2}}, \dfrac{\partial}{\partial q^{1}} - \dfrac{\partial}{\partial q^{3}}, \dfrac{\partial}{\partial p_{1}} - \dfrac{\partial}{\partial p_{2}}, \dfrac{\partial}{\partial p_{1}} - \dfrac{\partial}{\partial p_{3}} \bigg\rangle \equiv \mathcal{D}_{2}.
\end{aligned}
\end{equation}

The algebra $\mathscr{H}_1:=Span\{ \boldsymbol{I}, \boldsymbol{L}_{1}, \boldsymbol{L}_{3} \}$ is a non-semisimple Abelian algebra of Haantjes operators of rank 3; thus, they can be simultaneously block-diagonalized in a DH chart.

As the generalized eigen-distributions coincide, their intersections are trivial and do not provide a finer block-diagonalization for the operators of this algebra. The corresponding annihilators are:
\begin{equation}
\begin{aligned}
& (\mathcal{D}_{1})^{\circ} = \langle \rd q^{1} - \rd q^{2}, \rd q^{1} - \rd q^{3}, \rd p_{1} - \rd p_{2}, \rd p_{1} - \rd p_{3} \rangle, \\
& (\mathcal{D}_{2})^{\circ} = \langle \rd q^{1} + \rd q^{2} + \rd q^{3}, \rd p_{1} + \rd p_{2} + \rd p_{3} \rangle.
\end{aligned}
\end{equation}
By integrating them, we can construct a set of Haantjes coordinates. Among them, we can select those  satisfying the canonical commutation relations, obtaining the following DH coordinates:
\begin{equation} \label{eq:Qpsep}
\begin{aligned}
& Q^{1} = \dfrac{1}{6} \left( 2 \, q^{1} - q^{2} - q^{3} \right) \, , \\
& P_{1} = 2 \, p_{1} - p_{2} - p_{3} \, , \\
\end{aligned} \qquad
\begin{aligned}
& Q^{2} = \dfrac{1}{2} \left( q^{2} - q^{3} \right) \, , \\
& P_{2} = p_{2} - p_{3} \, , \\
\end{aligned} \qquad
\begin{aligned}
& Q^{3} = \dfrac{1}{3} \left( q^{1} + q^{2} + q^{3} \right) \, , \\
& P_{3} = p_{1} + p_{2} + p_{3} \, .
\end{aligned}
\end{equation}
Notice that these simple relations become non-trivial if we express the relation between partial and total separation variables. More precisely, the transformation relating partial and total separation coordinates $(\mathfrak{q}^1, \mathfrak{q}^2, \mathfrak{q}^3)$ is not of the form $Q^1=F_1(\mathfrak{q}^1,\mathfrak{q}^2)$, $Q^2=F_2(\mathfrak{q}^1,\mathfrak{q}^2)$, $Q^3=F_3(\mathfrak{q}^3)$, which a priori guarantees partial separability (with $F_1$, $F_2$, $F_3$ arbitrary, invertible functions). Indeed, once written, for instance, in terms of spherical coordinates, the partial separation ones read
\begin{equation}
\begin{aligned}
& Q^{1} = -\frac{1}{6} \, r \, (\sin \theta \, (\sin \varphi -2 \cos \varphi )+\cos \theta ), \\
& Q^{2} = \frac{1}{2} \, r \, ( \sin \theta \sin \varphi - \cos \theta ), \\
& Q^{3} = \frac{1}{3} \, r \, (\sin \theta \, (\sin \varphi +\cos \varphi )+\cos \theta ).
\end{aligned}
\end{equation}
Similar expressions hold for the cases of cylindrical and parabolic coordinates.

\begin{remark} There is much freedom in the representation of the foliation obtained by integrating the distribution $(\mathcal{D}_{1})^{\circ}$. Indeed, more generally, any choice of a transformation of coordinates of the form
\begin{equation}
Q^1 = \mathfrak{f}_1 (q^{1} - q^{2}, q^{1} - q^{3}) \, , \quad Q^2 = \mathfrak{f}_2 (q^{1} - q^{2}, q^{1} - q^{3}) \, , \quad Q^3 = \mathfrak{f}_3(q^{1} + q^{2} + q^{3}) \, , 
\end{equation}
would provide Haantjes coordinates for the algebra $\mathscr{H}_1$, where $\mathfrak{f}_1$, $\mathfrak{f}_2$ and $\mathfrak{f}_3$ are arbitrary functions that make the transformation invertible.
\end{remark}

In the chart \eqref{eq:Qpsep}, the Hamiltonian function and the integrals of the motion read
\begin{equation}
\begin{aligned}
& H = \frac{1}{12} \left( P_1^2 + 3 P_2^2 + 2 P_{3}^2 \right) +\frac{1}{4 (Q^{2})^2}+\frac{1}{(3 Q^{1}+Q^{2})^2}+\frac{1}{(3 Q^{1}-Q^{2})^2}, \\
& I_1 = P_{3}, \\
& I_3 = \frac{1}{108} \left[ P_1^3 +4 P_{3}^3 + 6 P_1^2 P_{3} - 9 P_1 P_2^2 +18 P_2^2 P_{3} -\frac{9 (P_1-2 P_{3})}{(Q^{2})^2} \right. \\ & \qquad \left.  + 18 ( P_1 + 4 P_{3} ) \left( \frac{1}{(3 Q^{1}+Q^{2})^2} + \frac{1}{(3 Q^{1}-Q^{2})^2} \right) - 54 P_2 \left( \frac{1}{(3 Q^{1}+Q^{2})^2} - \frac{1}{(3 Q^{1}-Q^{2})^2} \right) \right];
\end{aligned}
\end{equation}
and the operators $\boldsymbol{L}_{1}$ and $\boldsymbol{L}_{3}$ in the coordinates $(Q^{1}, Q^{2}, P_{1}, P_{2}, Q^{3}, P_{3})$ take the \textit{block-diagonal form}
\begin{equation}
\boldsymbol{L}_{1} = \dfrac{3}{P_{3}}
\left[\begin{array}{cccc|cc}
0 & 0 & 0 & 0 & 0 & 0 \\
0 & 0 & 0 & 0 & 0 & 0 \\
0 & 0 & 0 & 0 & 0 & 0 \\ 
0 & 0 & 0 & 0 & 0 & 0 \\ \hline
0 & 0 & 0 & 0 & 1 & 0 \\
0 & 0 & 0 & 0 & 0 & 1 \\
\end{array}\right]
\end{equation}
and
\begin{equation}
\boldsymbol{L}_{3} = \frac{1}{2 (Q^{2})^2 \left(9 (Q^{1})^2-(Q^{2})^2\right)^2}
\left[\begin{array}{cccc|cc}
g_1 & 0 & 0 & 0 & 0 & 0 \\
g_2 & g_1 & 0 & 0 & 0 & 0 \\
0 & g_3 & g_1 & g_2 & 0 & 0 \\ 
-g_3 & 0 & 0 & g_1 & 0 & 0 \\ \hline
0 & 0 & 0 & 0 & g_4 & 0 \\
0 & 0 & 0 & 0 & 0 & g_4 \\
\end{array}\right],
\end{equation}
where
\begin{align*}
& g_1 = \frac{1}{3 P_1} \left[  (Q^{2})^2 (9 (Q^{1})^2 - (Q^{2})^2)^2 (P_1^2 - 3 P_2^2 +  4 P_1 P_3) + 9( - 27 (Q^{1})^4 + 18 (Q^{1})^2 (Q^{2})^2 + (Q^{2})^4 ) \right] , \\
& g_2 = \frac{3}{P_1^2} \left[ (Q^{2})^2 (9 (Q^{1})^2 - (Q^{2})^2)^2 (- P_1^2 + P_{2}^2) P_{2} + 24 Q^{1} (Q^{2})^3 P_1 + 3 (27 (Q^{1})^4 - 18 (Q^{1})^2 (Q^{2})^2 - (Q^{2})^4) P_{2} \right] , \\
& g_3 = -\frac{3}{ (Q^{2})^3 \left(9 (Q^{1})^2 - (Q^{2})^2 \right)^3
   P_1^2} \left[ 72 Q^{1} (Q^{2})^5 (9 (Q^{1})^2 - (Q^{2})^2)^2 (3 (Q^{1})^2 + (Q^{2})^2) P_1 P_{2} \right. \\ & \qquad - (Q^{2})^{10} (27 + (Q^{2})^2 (P_1^2 - 9 P_{2}^2)) - 162 (Q^{1})^4 (Q^{2})^6 (63 + 5 (Q^{2})^2 (P_1^2 + 3 P_{2}^2)) \\ & \qquad + 9 (Q^{1})^2 (Q^{2})^8 (-117 + (Q^{2})^2 (5 P_1^2 + 3 P_{2}^2)) + 
 1458 (Q^{1})^6 (Q^{2})^4 (3 + (Q^{2})^2 (5 P_1^2 + 7 P_{2}^2)) \\ & \qquad \left. + 
 6561 (Q^{1})^8 (9 (Q^{2})^2 + 5 (Q^{2})^4 (-P_1^2 + P_{2}^2) + 
    9 (Q^{1})^2 (-1 + (Q^{2})^2 (P_1^2 - P_{2}^2))) \right], \\
& g_4 = \frac{1}{3 P_3} \left[ 27 (3 (Q^{1})^2 + (Q^{2})^2)^2 + (Q^{2})^2 ( 9 (Q^{1})^2 - (Q^{2})^2)^2 (P_1^2 + 3 P_{2}^2 + 2 P_3^2) \right].
\end{align*}

The  Hamilton-Jacobi equations associated with this system read
\begin{equation}
\begin{aligned}
& \left( \dfrac{\partial W_1}{\partial Q^{1}} \right)^2+3 \left( \dfrac{\partial W_1}{\partial Q^{2}} \right)^2 +\frac{3}{(Q^{2})^2}+\frac{12}{(3 Q^{1}+Q^{2})^2}+\frac{12}{(3 Q^{1}-Q^{2})^2} = h, \\ \\
& \dfrac{\rd W_2}{\rd Q^{3}} = h_1, \\ \\
& \dfrac{\partial W_1}{\partial Q^{1}} \left[ \left( \dfrac{\partial W_1}{\partial Q^{2}} \right)^2 +\frac{1}{(Q^{2})^2} - \frac{1}{2} \, \frac{1}{(3 Q^{1}+Q^{2})^2} - \frac{1}{2} \, \frac{1}{(3 Q^{1}-Q^{2})^2} - h + \frac{1}{6} \, h_1^2 \right] \\ & \qquad - \frac{54 \, Q^{1} Q^{2}}{\left( 9 (Q^{1})^2 - (Q^{2})^2\right)^2} \dfrac{\partial W_1}{\partial Q^{2}} = h_{2},
\end{aligned}
\end{equation}
where $h$, $h_1$, $h_2$ are real constants.

\subsection{Construction of a new generalized Stäckel system} \label{sec:6.2} (Case \textbf{II})

We can construct directly new families of partially separable systems by specializing the arbitrary functions involved in definitions \eqref{eq:GSM} and \eqref{eq:GSS}. For instance, working in an eight-dimensional phase space, we can specialize directly Eqs. \eqref{eq:GSS3} and \eqref{eq:5.10} by making the choice 
\begin{equation}
\bs{S} =
\begin{bmatrix}
q^{1} q^{2} & 0 & 0 \\
0 & 1 & q^{3} \\
q^{4} & 0 & 1
\end{bmatrix}
\end{equation}
for the generalized St\"ackel matrix and
\begin{align}
& \nn f_{1} = q^{1} q^{2} (p_{1}^{2} + p_{2}^{2} + q^{1} q^{2}) , \\
& f_{2} = p_{3}^{2} + q^{3} , \\
& \nn f_{3} = p_{4}^{2} + q^{4} ,
\end{align}
for the elements of the St\"{a}ckel vector. In this way, we obtain the Hamiltonian system
\begin{align}
& \nn H_{1} = p_{1}^{2} + p_{2}^{2} + q^{1} q^{2} , \\
& H_{2} = q^{3} q^{4} (p_{1}^{2} + p_{2}^{2}) + p_{3}^{2} - q^{3} p_{4}^{2} + q^{3} q^{4} (q^{1} q^{2} - 1) + q^{3} , \\
& \nn  H_{3} = -q^{4} (p_{1}^{2} + p_{2}^{2}) + p_{4}^{2} + q^{4} (1 - q^{1} q^{2}) .
\end{align}

The three GSE \eqref{eq:GSE} explicitly read
\begin{align}
\begin{cases}
 q^1q^2H_1-q^1q^2(p_1^2 +p_2^2+q^1q^2)=0 , \\
   H_2+q^3 H_3-p_3^2-q^3=0 , \\
   q^4H_1+H_3-p_4^2-q^4=0 .
\end{cases}
\end{align}

Therefore, assuming that
\begin{equation}
W (q^{1}, q^{2}, q^{3}, q^{4}; h_1,h_2,h_3) = W_{1} (q^{1}, q^{2};h_1,h_2,h_3) + W_{2} (q^{3};h_1,h_2,h_3) + W_{3} (q^{4};h_1,h_2,h_3),
\end{equation}
 we obtain the partially separated Hamilton-Jacobi equations
\begin{align}
\begin{cases}
  \left( \dfrac{\partial W_{1}}{\partial q^{1}} \right)^{2} + \left( \dfrac{\partial W_{1}}{\partial q^{2}} \right)^{2}+q^1q^2 =h_1, \\
    \left( \dfrac{\partial W_{2}}{\partial q^{3}} \right)^{2} +q^3-h_3 q^3=h_2, \\
    \left( \dfrac{\partial W_{3}}{\partial q^{4}} \right)^{2} +q^4-h_1 q^4 =h_3 .
\end{cases}
\end{align}

In the case under scrutiny, we can determine an Abelian and semisimple Haantjes algebra $ \mathscr{H}_2 := Span\{ \boldsymbol{K}_{1},\boldsymbol{K}_{2} =\boldsymbol{I}, \boldsymbol{K}_{3} \}$, allowing for partial separation.  Applying Theorem \ref{Th4} by choosing $H=H_{2}$ to be the generator function of the Haantjes chains,  we obtain in the chart $(q^{1}, q^{2}, q^{3}, q^{4}, p_1, p_2, p_3, p_4)$ that the Haantjes operators
\beq
\boldsymbol{K}_{1} = \dfrac{1}{q^{3} q^{4}}\,\textrm{diag} [1,1,0,0,1,1,0,0], \quad \boldsymbol{K}_{3} = - \dfrac{1}{q^{3}}\,\textrm{diag} [1,1,0,1,1,1,0,1]
\eeq
satisfy the chain equations $\rd H_{1} = \boldsymbol{K}_{1}^{T} \rd H$ and $\rd H_{3} = \boldsymbol{K}_{3}^{T} \rd H$, respectively.

\begin{remark}
In \cite{TT2016SIGMA}, it has been shown that Killing tensors for the AKN systems \cite{AKN1997} can be naturally constructed within the theory of $\omega \mathscr{H}$ manifolds by projecting the Haantjes operators associated with the AKN systems from phase space into the configuration space. This, in turn, provides a direct connection between the Haantjes geometry and Eisenhart's theory of Killing tensors for St\"ackel systems \cite{EisenAM1934}.

In our opinion, the main advantage of Haantjes geometry is that it allows us to study systematically the problem of separability within a unified theoretical framework,  valid also for systems outside the classical Stäckel geometry.
\end{remark}

\section{Open problems and future perspectives}\label{sec:7}

Many interesting problems concerning the present theory are open and left for future investigation. 

A first, relevant problem is to formulate Theorem \ref{th:IPS} in the case when the vector-valued function $\bs{\phi}$ of Definition \ref{def:PS} is affine in $(H_1,\ldots,H_n)$, with the aim of possibly associating with these Hamiltonians a generalized St\"ackel matrix.

We believe that the theory of $\mathcal{P}$-involution for completely integrable systems (Case $\textbf{I}$) can be useful to study Hamiltonian systems non-totally separable. For instance, the Calogero model with four particles is only partially separable \cite{WW2005JNMP}. Other models, usually considered non-separable, could also be investigated from this perspective.

\vspace{2mm}

We observe that the class of generalized St\"ackel systems \eqref{eq:GSS}, which a priori possess less than $n$ integrals of motion in involution, may admit a subfamily of completely integrable systems. Thus, it would be important to ascertain in full generality which conditions on the generalized St\"ackel matrix \eqref{eq:GSM} ensure the existence of further integrals necessary to complete the set of available Hamiltonians in involution. 

\vspace{2mm}

Another relevant question concerns the existence of non-semisimple Haantjes algebras. Given an integrable system, we have proved in \cite{TT2022AMPA} that one can construct an associated semisimple Haantjes algebra due to the existence of action-angle variables defined on the Lagrangian foliation of invariant Liouville-Arnold tori. When we deal with superintegrable systems, additional integrals of motion are available. They are in involution with the Hamiltonian, but not necessarily with each other. Presently, it is an open problem whether one might associate with these additional integrals  symplectic Haantjes manifolds, in particular non-semisimple ones. This is the case, for instance, for the Jacobi-Calogero three-particle system discussed in Section \ref{sec:6.1}. We believe that the solution to this problem is strictly related to the general study of normal forms for Haantjes operators. Indeed, unlike the semisimple case, where there exists a unique normal form (the diagonal one), in the non-semisimple case much less is known.

To conclude, it would be interesting to study the co-isotropic foliation associated with partially separable systems in the case of an incomplete family of Hamiltonians (Case \textbf{II}).

All these questions are currently under investigation.

\par

\vspace{3mm}

\textbf{Data availability statement}. 

This manuscript has no associated data.

\vspace{3mm}

\textbf{Conflict of interest statement}. 

The authors declare that they have no conflict of interests.

\section*{Acknowledgement}

We wish to thank gratefully the anonymous Referee for many useful comments, which improved the quality of the paper.

The research of P. T. has been supported by the Severo Ochoa Programme for Centres of Excellence in R\&D
(CEX2019-000904-S), Ministerio de Ciencia, Innovaci\'{o}n y Universidades y Agencia Estatal de Investigaci\'on, Spain.

The research of G. T. has been supported by the research project FRA2022-2023, Universit\'a degli Studi di Trieste, Italy, and by the MMNLP Project - CSN4 of INFN, Italy. 

D. R. N. acknowledges the financial support of EXINA S.L.

P. T. is a member of the Gruppo Nazionale di Fisica Matematica (GNFM) of INDAM. 

\appendix 
\section{Partial brackets in the proof of Theorem \ref{th:Nonsemi}.} \label{ap:A}

We report here the explicit computation leading to Eq. \eqref{eq:4.6}:
\begin{equation}
\begin{split}
&\{ H_{\alpha}, H_{\beta} \} \big|_{a} =
\sum_{j_a=1}^{\sigma_a }\bigg(
\dfrac{\partial H_\alpha}{\partial q^{a, j_a}}  \dfrac{\partial H_\beta}{\partial p_{a, j_{a}}} -
\dfrac{\partial H_\beta}{\partial q^{a, j_a}}  \dfrac{\partial H_\alpha}{\partial p_{a, j_{a}}} 
\bigg )\\
&=\sum_{j_a=1}^{\sigma_a }\Bigg(\bigg(\sum_{i_a=1}^{\sigma_a}
\dfrac{\partial H}{\partial q^{a, i_a}} (\bs{A}_{a}^{(\alpha)})^{i_a}_{j_a} +\dfrac{\partial H}{\partial p_{a, i_a}} (\bs{C}_{a}^{(\alpha)})^{i_a}_{j_a} \bigg)
\bigg(\sum_{k_a=1}^{\sigma_a}
\dfrac{\partial H}{\partial q^{a, k_a}} (\bs{B}_{a}^{(\beta)})^{k_a}_{j_a} +\dfrac{\partial H}{\partial p_{a, k_a}}(( {\bs{A}_{a}^{(\beta)})^{T})}^{k_a}_{j_a} 
\bigg) \\
&-
\sum_{k_a=1}^{\sigma_a} \bigg(
\dfrac{\partial H}{\partial q^{a, k_a}} (\bs{B}_{a}^{(\alpha)})^{k_a}_{j_a} +\dfrac{\partial H}{\partial p_{a, k_a}}( (\bs{A}_{a}^{(\alpha)})^T)^{k_a}_{j_a} \bigg)
\bigg(\sum_{i_a=1}^{\sigma_a}
\dfrac{\partial H}{\partial q^{a, k_a}} (\bs{A}_{a}^{(\beta)})^{i_a}_{j_a} +\dfrac{\partial H}{\partial p_{a, k_a}}( (\bs{C}_{a}^{(\beta)})^T)^{i_a}_{j_a} 
\bigg)\Bigg)
\end{split}
\end{equation}
\begin{equation*}
\begin{split}
&=
\sum_{i_a,j_a,k_a=1}^{\sigma_a} \Bigg(
\dfrac{\partial H}{\partial q^{a, i_a}}\dfrac{\partial H}{\partial p_{a, k_a}} 
\big((\bs{A}_{a}^{(\alpha)})^{i_a}_{j_a}(\bs{A}_{a}^{(\beta)})^{j_a}_{k_a}-(\bs{A}_{a}^{(\alpha)})^{j_a}_{k_a}(\bs{A}_{a}^{(\beta)})^{i_a}_{j_a}
\big)
\\
&+
\dfrac{\partial H}{\partial q^{a, k_a}}\dfrac{\partial H}{\partial p_{a, i_a}} 
\big((\bs{C}_{a}^{(\alpha)})^{i_a}_{j_a}(\bs{B}_{a}^{(\beta)})^{k_a}_{j_a}-(\bs{B}_{a}^{(\alpha)})^{k_a}_{j_a}(\bs{C}_{a}^{(\beta)})^{i_a}_{j_a}
\big)
\\
&+
\dfrac{\partial H}{\partial q^{a, i_a}}\dfrac{\partial H}{\partial q^{a, k_a}} 
\big((\bs{A}_{a}^{(\alpha)})^{i_a}_{j_a}(\bs{B}_{a}^{(\beta)})^{k_a}_{j_a}-(\bs{B}_{a}^{(\alpha)})^{k_a}_{j_a}(\bs{A}_{a}^{(\beta)})^{i_a}_{j_a}
\big)
\\
&+
\dfrac{\partial H}{\partial p_{a, i_a}}\dfrac{\partial H}{\partial p_{a, k_a}} 
\big((\bs{C}_{a}^{(\alpha)})^{i_a}_{j_a}(\bs{A}_{a}^{(\beta)})^{j_a}_{k_a}-(\bs{A}_{a}^{(\alpha)})^{j_a}_{k_a}(\bs{C}_{a}^{(\beta)})^{i_a}_{j_a}
\big)
\Bigg) \ .
\end{split}
\end{equation*}

\end{document}